\newcommand\reallywidehat[1]{%
\savestack{\tmpbox}{\stretchto{%
  \scaleto{%
    \scalerel*[\widthof{\ensuremath{#1}}]{\kern-.6pt\bigwedge\kern-.6pt}%
    {\rule[-\textheight/2]{1ex}{\textheight}}
  }{\textheight}%
}{0.5ex}}%
\stackon[1pt]{#1}{\tmpbox}%
}
\def\@IEEEsectpunct{.\ \,}
\newcommand{\Rd}{\mathbb{R}^d}
\newcommand{\RR}{\mathbb{R}}
\newcommand{\sig}{\text{sig}}
\newcommand{\QQ}{\mathcal{Q}}
\newcommand{\la}{\lambda}
\newcommand\mydots{\ifmmode\ldots\else\makebox[1em][c]{.\hfil.\hfil.}\fi}
\newcommand{\VV}{\Vert}
\DeclareMathOperator{\supp}{supp}
\DeclareMathOperator{\Real}{Re}
\DeclareMathOperator{\Imag}{Im}
\def\ba#1\ea{\begin{align*}#1\end{align*}}	
\def\ban#1\ean{\begin{align}#1\end{align}}	
\def\bac#1\eac{\vspace{\abovedisplayskip}{\par\centering$\begin{aligned}#1\end{aligned}$\par}\addvspace{\belowdisplayskip}}	
\newtheorem{theorem}{Theorem}
\newtheorem{lemma}{Lemma}
\newtheorem{definition}{Definition}
\newtheorem{proposition}{Proposition}
\newtheorem{corollary}{Corollary}
\newtheorem{remark}{Remark}
\newtheorem*{SL}{Schur's Lemma}
\begin{document}
\title{A Mathematical Theory of Deep Convolutional\\ Neural Networks for Feature Extraction}

\author{Thomas~Wiatowski~and~Helmut~B\"olcskei,~\IEEEmembership{Fellow,~IEEE}
\thanks{The authors are with the Department of Information Technology and Electrical Engineering, ETH Zurich,  8092 Zurich, Switzerland. Email:~\{withomas,~boelcskei\}@nari.ee.ethz.ch}
\thanks{The material in this paper was presented in part at the 2015 IEEE International Symposium on Information Theory (ISIT), Hong Kong, China. }
\thanks{Copyright (c) 2017 IEEE. Personal use of this material is permitted.  However, permission to use this material for any other purposes must be obtained from the IEEE by sending a request to pubs-permissions@ieee.org.}
}

\maketitle

\begin{abstract}
Deep convolutional neural networks have led to breakthrough results in numerous practical  machine learning tasks such as classification of images in the ImageNet data set, control-policy-learning to play Atari games or the board game Go, and image captioning. Many of these applications first perform feature extraction and then feed the results thereof into a trainable classifier. The mathematical analysis of deep convolutional neural networks for feature extraction was initiated by Mallat, 2012. Specifically, Mallat considered so-called scattering networks based on a wavelet transform followed by the modulus non-linearity in each network layer, and proved translation invariance (asymptotically in the wavelet scale parameter) and deformation stability of the corresponding feature extractor. This paper complements Mallat's results by developing  a theory  that encompasses  general convolutional transforms, or in more technical parlance, general semi-discrete frames (including Weyl-Heisenberg filters, curvelets, shearlets, ridgelets, wavelets, and learned filters), ge\-neral Lipschitz-continuous non-linearities (e.g., rectified linear units, shifted logistic sigmoids, hyperbolic tangents, and modulus functions), and general Lipschitz-continuous pooling operators emulating, e.g., sub-sampling and averaging. In addition, all of these elements can be different in different network layers. For the resulting feature extractor we prove a translation invariance result  of vertical nature in the sense of the features becoming progressively more translation-invariant with increasing network depth, and we establish deformation sensitivity bounds that apply to signal classes such as, e.g., band-limited functions, cartoon functions, and Lipschitz functions. 
\end{abstract}

\begin{IEEEkeywords}
Machine learning, deep convolutional neural networks, scattering networks, feature extraction, frame theory. 
\end{IEEEkeywords}

%
\IEEEpeerreviewmaketitle

\section{Introduction}
%
%
%
%
\IEEEPARstart{A}{}central task in machine learning  is feature extraction \cite{Bengio,Bishop,Duda} as, e.g., in the context of  handwritten digit classification \cite{MNIST}. The features to be extracted in this case correspond, for example, to the edges of the digits. The idea behind feature extraction is that feeding characteristic features of the signals---rather than the signals themselves---to a trainable  classifier (such as, e.g., a  support vector machine (SVM) \cite{SVM}) improves classification performance. Specifically, non-linear feature extractors (obtained, e.g., through the use of a so-called kernel in the context of SVMs) can map input signal space dichotomies that are not linearly separable into linearly separable feature space dichotomies \cite{Bishop}. Sticking to the example of handwritten digit classification, we would, moreover, want the feature extractor to be invariant to the digits' spatial location within the image, which leads to the requirement of translation invariance. In addition,  it is desirable that the  feature extractor  be robust with respect to (w.r.t.) handwriting styles. This can be accomplished by demanding limited sensitivity of the features to certain non-linear deformations of the signals to be classified. 

Spectacular success in  practical machine learning tasks has been reported for  feature extractors generated by so-called deep convolutional neural networks (DCNNs)\cite{Bengio,LeCunNIPS89,Rumelhart,LeCunProc,LeCun,Nature,he2015deep,Goodfellow-et-al-2016}. These networks are composed of multiple layers, each of which computes convolutional transforms, followed by non-linearities and pooling\footnote{In the literature ``pooling'' broadly refers to some form of combining ``nearby'' values of a signal (e.g., through averaging) or picking one representative value (e.g, through maximization or sub-sampling).} operators. While DCNNs can be used to perform classification (or other machine learning tasks such as regression) directly \cite{Bengio,LeCunNIPS89,LeCunProc,LeCun,Nature}, typically based on the output of the last network layer, they can also act as stand-alone feature extractors \cite{Huang,Jarrett,hierachies,Poultney,Pinto,Serre,GaborLowe} with the resulting features fed into a classifier  such as a SVM. The present paper pertains to the latter philosophy.

The mathematical ana\-lysis of feature extractors generated by DCNNs was pioneered by Mallat in \cite{MallatS}. Mallat's theory applies to so-called scattering networks, where signals are propagated through layers that compute a semi-discrete wavelet transform (i.e., convolutions with filters that are obtained from a mother wavelet through scaling and rotation operations), followed by the modulus non-linearity,  without subsequent pooling.  The resulting feature extractor is shown to be translation-invariant (asymptotically in the scale parameter of the underlying wavelet transform) and stable w.r.t. certain non-linear deformations. Moreover, Mallat's scattering networks lead to state-of-the-art results in various classification tasks \cite{Bruna,Sifre,Anden}. 

\paragraph*{Contributions} 
DCNN-based feature extractors that were found to work well in practice employ a wide range of i) filters, namely pre-specified structured filters  such as wavelets \cite{Jarrett,Pinto,Serre,GaborLowe}, pre-specified unstructured filters such as random filters \cite{Jarrett,hierachies}, and filters that are learned in a supervised  \cite{Huang,Jarrett} or an unsupervised \cite{Jarrett,hierachies,Poultney} fashion, ii) non-linearities beyond the modulus function \cite{Jarrett,MallatS,GaborLowe}, namely hyperbolic tangents \cite{Huang,Jarrett,hierachies}, rectified linear units \cite{Glorot,Nair}, and logistic sigmoids \cite{Mohamed,Glorot2}, and iii) pooling operators, namely sub-sampling \cite{Pinto}, average pooling \cite{Huang,Jarrett}, and max-pooling  \cite{Jarrett,hierachies,Serre,GaborLowe}. In addition, the filters, non-linearities, and pooling operators  can be different in different network layers \cite{Goodfellow-et-al-2016}. The goal of this paper is to develop a mathematical theory that encompasses all these elements (apart from max-pooling) in full generality. 

Convolutional transforms as employed in DCNNs can be interpreted as semi-discrete signal transforms    \cite{MallatW,MallatEdges,UnserWavelets,Vandergheynst,CandesDonoho2,Grohs_alpha,Shearlets,Grohs_transport} (i.e., convolutional transforms with filters that are countably parametrized). Corresponding prominent representatives are curvelet \cite{CandesNewTight,CandesDonoho2,Grohs_alpha} and shearlet \cite{OriginShearlets,Shearlets} transforms, both of which are known to be highly effective in extracting features characterized by curved edges in images. Our theory allows for general semi-discrete signal transforms, general Lipschitz-continuous non-linearities (e.g., rectified linear units, shifted logistic sigmoids, hyperbolic tangents, and modulus functions), and incorporates continuous-time Lipschitz pooling operators that emulate discrete-time sub-sampling and averaging. Finally, different network layers may be equipped with different convolutional transforms, different (Lipschitz-continuous) non-linearities, and different (Lipschitz-continuous) pooling operators. 

Regarding translation invariance, it was argued, e.g., in \cite{Huang,Jarrett,hierachies,Serre,GaborLowe},  that in practice invariance of the  features is crucially governed by   network depth and by the presence of pooling operators (such as, e.g., sub-sampling \cite{Pinto}, average-pooling \cite{Huang,Jarrett}, or max-pooling  \cite{Jarrett,hierachies,Serre,GaborLowe}).  We show that the general feature extractor considered in this paper, indeed, exhibits such a \textit{vertical} translation invariance and that pooling plays a crucial role in achieving it. Specifically, we prove that the depth of the network determines the extent to which the extracted features are translation-invariant. We also show that pooling is necessary to obtain vertical translation invariance as otherwise the features remain fully translation-covariant irrespective of network depth. We furthermore establish a deformation sensitivity bound valid for signal classes  such as, e.g., band-limited functions, cartoon functions \cite{grohs_wiatowski}, and Lipschitz functions \cite{grohs_wiatowski}. This bound shows that small non-linear deformations of the input signal lead to small changes in the corresponding feature vector. 

In terms of mathematical techniques, we draw heavily from continuous frame theory \cite{Antoine,Kaiser}. We develop a proof machinery that is completely detached from the structures\footnote{Structure here refers to the structural relationship between the convolution kernels in a given layer, e.g., scaling and rotation operations in the case of the wavelet transform.} of the semi-discrete transforms and the specific form of the Lipschitz non-linearities and Lipschitz pooling operators. The proof of our deformation sensitivity bound  is based on two key elements, namely  Lipschitz continuity of  the feature extractor and a deformation sensitivity bound for the signal class under consideration, namely band-limited functions (as established in the present paper) or  cartoon functions and Lipschitz functions as shown in \cite{grohs_wiatowski}. This ``decoupling'' approach has important practical ramifications as it shows that whenever we have  deformation sensitivity bounds for a signal class, we automatically get deformation sensitivity bounds for the DCNN feature extractor operating on that signal class. Our results hence establish that vertical translation invariance and limited sensitivity to deformations---for signal classes with inherent deformation insensitivity---are guaran\-teed by the network structure per se rather than the specific convolution kernels, non-linearities, and pooling operators.


\paragraph*{Notation}
The complex conjugate of $z \in \mathbb{C}$ is denoted by $\overline{z}$. We write $\Real(z)$ for the real, and $\Imag(z)$ for the imaginary part of $z \in \mathbb{C}$. The Euclidean inner product of $x,y \in \mathbb{C}^d$ is $\langle x, y \rangle:=\sum_{i=1}^{d}x_i \overline{y_i}$, with associated norm $|x|:=\sqrt{\langle x, x \rangle}$. We denote the identity matrix by  $E\in\mathbb{R}^{d\times d}$. For the matrix $M\in \mathbb{R}^{d\times d}$, $M_{i,j}$ designates the entry in its $i$-th row and $j$-th column, and for a tensor $T\in \mathbb{R}^{d\times d \times d}$, $T_{i,j,k}$ refers to its $(i,j,k)$-th component. The supremum norm of the matrix $M\in \mathbb{R}^{d\times d}$ is defined as $|M|_{\infty}:=\sup_{i,j}|M_{i,j}|$, and the supremum norm of the tensor $T\in \mathbb{R}^{d\times d \times d}$ is $|T|_{\infty}:=\sup_{i,j,k}|T_{i,j,k}|$. We write $B_r(x)\subseteq \Rd$ for the open ball of radius $r>0$ centered at $x\in \Rd$. $O(d)$ stands for the orthogonal group of dimension $d \in \mathbb{N}$, and $SO(d)$ for the special orthogonal group.

 For a Lebesgue-measurable function $f:\Rd \to \mathbb{C}$, we write $\int_{\Rd} f(x) \mathrm dx$ for the integral of $f$ w.r.t. Lebesgue measure $\mu_L$. For $p \in [1,\infty)$,  $L^p(\Rd)$ stands for the space of Lebesgue-measurable functions $f:\Rd \to \mathbb{C}$ satisfying $\VV f \VV_p:= (\int_{\Rd}|f(x)|^p \mathrm dx)^{1/p}<\infty.$ $L^\infty(\Rd)$ denotes the space of Lebesgue-measurable functions $f:\Rd \to \mathbb{C}$ such that  $\VV f \VV_\infty:=\inf \{\alpha>0 \ | \ |f(x)| \leq \alpha \text{ for a.e.}\footnote{Throughout ``a.e.''  is w.r.t. Lebesgue measure.}\  x \in \Rd \}<\infty$.  For $f,g \in L^2(\Rd)$ we set $\langle f,g \rangle :=\int_{\Rd}f(x)\overline{g(x)}\mathrm dx$. For $R>0$, the space of $R$-band-limited functions is denoted as $
L^2_R(\Rd):=\{ f \in L^2(\Rd) \ | \ \supp(\widehat{f}) \subseteq B_{R}(0)\}.$ For a countable set $\QQ$, $(L^2(\Rd))^\QQ$ stands for the space of sets $s:=\{ s_q \}_{q\in \QQ}$, $s_q \in L^2(\Rd)$, for all $q \in \QQ$, satisfying $||| s |||:=(\sum_{q \in \QQ} \VV s_q\VV_2^2)^{1/2}<\infty$. 

 $\text{Id}:L^p(\Rd) \to L^p(\Rd)$ denotes  the identity operator on $L^p(\Rd)$. The tensor product of functions $f,g:\Rd\to \mathbb{C}$ is $(f\otimes g)(x,y):=f(x)g(y)$, $ (x,y)\in \RR^{d}\times \RR^{d}$. 
 The operator norm of the bounded linear operator $A:L^p(\Rd) \to L^q(\Rd)$ is defined as  $\VV A\VV_{p,q}:=\sup_{\VV f \VV_p=1}\VV Af \VV_q$. We denote the Fourier transform of $f \in L^1(\Rd)$ by $\widehat{f}(\omega):=\int_{\Rd}f(x)e^{-2\pi i \langle x,  \omega \rangle }\mathrm dx$ and extend it in the usual way to $L^2(\Rd)$ \cite[Theorem 7.9]{Rudin}. The convolution of $f\in L^2(\Rd)$ and $g\in L^1(\Rd)$ is $(f\ast g)(y):=\int_{\Rd}f(x)g(y-x)\mathrm dx$. We write $(T_tf)(x):=f(x-t)$, $t \in \Rd$, for the translation operator, and $(M_\omega f)(x):=e^{2\pi i \langle x , \omega\rangle }f(x)$, $\omega \in \Rd$, for the modulation operator. Involution is defined by $(If)(x):=\overline{f(-x)}$. 
 
 A multi-index $\alpha=(\alpha_1,\dots,\alpha_d) \in \mathbb{N}_0^d$ is an ordered $d$-tuple of non-negative integers $\alpha_i \in \mathbb{N}_0$. For a multi-index $\alpha \in \mathbb{N}_0^d$, $D^\alpha$ denotes the differential operator $D^\alpha:=(\partial/\partial x_1)^{\alpha_{1}}\dots (\partial/\partial x_d)^{\alpha_{d}}$, with order $|\alpha|:=\sum_{i=1}^d \alpha_i$. If $|\alpha|=0$, $D^\alpha f:=f$, for $f:\Rd \to \mathbb{C}$. The space of functions $f:\Rd \to \mathbb{C}$ whose derivatives $D^\alpha f$ of order at most $N\in \mathbb{N}_0$ are continuous is designated by $C^N(\Rd,\mathbb{C})$, and the space of  infinitely differentiable functions is   $C^\infty(\Rd,\mathbb{C})$.  $S(\Rd,\mathbb{C})$ stands for the Schwartz space, i.e., the space of  functions $f \in C^\infty(\Rd,\mathbb{C})$ whose derivatives $D^\alpha f$ along with the function itself are rapidly decaying \cite[Section 7.3]{Rudin} in the sense of $\sup_{|\alpha|\leq N} \sup_{x\in \Rd}(1+|x|^2)^N|(D^\alpha f)(x)|<\infty$, for all $N\in \mathbb{N}_0$. We denote the gradient of a function $f:\Rd \to \mathbb{C}$ as $\nabla f$. The space of continuous mappings $v:\RR^p \to \RR^q$ is $C(\RR^p,\RR^q)$, and for $k,p,q \in \mathbb{N}$, the space of $k$-times continuously differentiable mappings $v:\RR^p \to \RR^q$ is written as $C^k(\RR^p,\RR^q)$. For a mapping $v:\Rd \to \Rd$,  we let $Dv$ be its Jacobian matrix, and $D^2v$  its Jacobian tensor, with associated norms $\VV v \VV_\infty:=\sup_{x\in\Rd} |v(x)|$, $\VV D v \VV_\infty:=\sup_{x\in \Rd}|(Dv)(x)|_{\infty}$, and $\VV D^2 v \VV_\infty:=\sup_{x\in \Rd}|(D^2v)(x)|_{\infty}$.



\section{Scattering networks}\label{architecture}
We set the stage by  reviewing scattering networks as introduced in \cite{MallatS}, the basis of which is a multi-layer architecture that involves a wavelet transform followed by the modulus non-linearity, without subsequent pooling. Specifically, \cite[Definition 2.4]{MallatS} defines the feature vector $\Phi_W(f)$ of the signal $f\in L^2(\Rd)$ as the set\footnote{We emphasize that the  feature vector $\Phi_W(f)$ is a union of the sets of feature vectors $\Phi_W^n(f)$.}
\begin{equation}\label{help11}
\Phi_W(f):=\bigcup_{n=0}^\infty \Phi_W^n(f),
\end{equation}
where $\Phi_W^0(f):=\{f\ast \psi_{(-J,0)} \}$, and
\begin{align*} 
&\Phi_W^n(f)\\
&:=\bigg\{ \big(U\big[\underbrace{\lambda^{^{(j)}}\hspace{-0.1cm},\dots,\lambda^{^{(p)}}}_{n \ \text{indices}}\big]f\big) \ast \psi_{(-J,0)}\bigg\}_{\lambda^{^{(j)}}\hspace{-0.1cm},\dots,\lambda^{^{(p)}}\in\Lambda_{\text{W}}\backslash \{(-J,0)\}},
\end{align*}
for all $n\in \mathbb{N}$,
with
$$U\big[\lambda^{^{(j)}}\hspace{-0.1cm},\dots,\lambda^{^{(p)}}\big]f:=\underbrace{\big| \cdots \big| \ |f \ast \psi_{\lambda^{^{(j)}}}| \ast  \psi_{\lambda^{^{(k)}}}\big|\cdots \   \ast \psi_{\lambda^{^{(p)}}}\big|}_{n-\text{fold convolution followed by modulus}}.$$ Here, the index set $\Lambda_{\text{W}}:=\big\{ (-J,0)\big\}\cup\big\{ (j,k) \ | \ j \in \mathbb{Z} \text{ with }j>-J, \ k \in \{ 0,\dots,K-1\} \big\}$ contains pairs of scales $j$ and directions $k$ (in fact, $k$ is the index of the direction described by the rotation matrix $r_k$), and 
\begin{equation}\label{dir_wav}
\psi_{\lambda}(x):=2^{dj}\psi(2^jr_k^{-1}x),
\end{equation}
where $ \lambda=(j,k) \in\Lambda_{\text{W}}\backslash \{(-J,0)\}$ are directional wavelets \cite{Lee,AntoineVander,MallatW} with (complex-valued) mother wavelet $\psi\in L^1(\Rd)\cap L^2(\Rd)$. The $r_k$, $k\in \{ 0,\dots,K-1\}$, are elements of a finite rotation group $G$ (if $d$ is even, $G$ is a subgroup of $SO(d)$; if $d$ is odd, $G$ is a subgroup of $O(d)$). The index $(-J,0) \in \Lambda_{\text{W}}$ is associated with the low-pass filter $\psi_{(-J,0)}\in L^1(\Rd)\cap L^2(\Rd) $, and $J\in \mathbb{Z}$ corresponds to the coarsest scale resolved by the directional wavelets \eqref{dir_wav}.

The family of functions $\{ \psi_{\lambda} \}_{\la \in \Lambda_{\text{W}}}$ is taken to form a semi-discrete Parseval frame $$\Psi_{\Lambda_{\text{W}}}:=\{T_bI\psi_\la \}_{b\in \Rd, \la \in \Lambda_{\text{W}}},$$ for $L^2(\Rd)$  \cite{Antoine,Kaiser,MallatW} and hence satisfies
\begin{equation*}\label{LPW}
\sum_{\la \in \Lambda_{\text{W}}}\int_{\Rd} |\langle f, T_b I \psi_\lambda \rangle |^2 \mathrm db = \sum_{\la \in \Lambda_{\text{W}}} \VV f\ast\psi_{\lambda} \VV_2^2=\VV f \VV_2^2,
\end{equation*}
for all $f \in L^2(\Rd)$, where $\langle f, T_bI\psi_\lambda \rangle=(f\ast \psi_\lambda)(b),$  $(\lambda,b)\in \Lambda_{\text{W}}\times\Rd,$ are the underlying frame coefficients. Note that for given $\lambda \in \Lambda_{\text{W}}$, we actually have a continuum of frame coefficients as the translation parameter $b\in \Rd$ is left unsampled. We refer to Figure \ref{wavfig} for a frequency-domain illustration of a semi-discrete directional wavelet frame. In Appendix \ref{sec:sdf}, we give a brief review of the general theory of semi-discrete frames, and in Appendices \ref{sec:sdf0} and \ref{sec:sdf2} we collect structured example frames in $1$-D and $2$-D, respectively.

  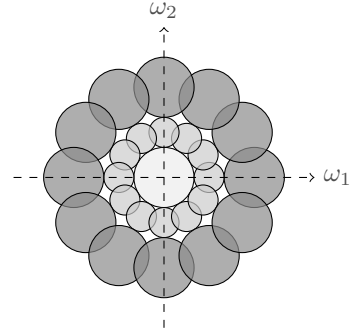
\begin{figure}[t]
\begin{center}
\begin{tikzpicture}


	\begin{scope}[scale=.8]


	\filldraw[fill=black!5!white, draw=black]
   		  (0,0) circle (.5cm);
	\pgfsetfillopacity{.7}
	\filldraw[fill=black!20!white, draw=black]
   		(.75,0) circle (.25cm);
	\filldraw[fill=black!20!white, draw=black]
   		(0.6495,.375) circle (.25cm);
	\filldraw[fill=black!20!white, draw=black]
   		(0.375,0.6495) circle (.25cm);
	\filldraw[fill=black!20!white, draw=black]
   		(0,.75) circle (.25cm);	
	\filldraw[fill=black!20!white, draw=black]
   		(-0.375,0.6495) circle (.25cm);
	\filldraw[fill=black!20!white, draw=black]
   		(-0.6495,0.375) circle (.25cm);
	\filldraw[fill=black!20!white, draw=black]
   		(-.75,0) circle (.25cm);	
	\filldraw[fill=black!20!white, draw=black]
   		(-0.6495,-.375) circle (.25cm);
	\filldraw[fill=black!20!white, draw=black]
   		(-0.375,-0.6495) circle (.25cm);
	\filldraw[fill=black!20!white, draw=black]
  		(0,-.75) circle (.25cm);	
	\filldraw[fill=black!20!white, draw=black]
   		(0.375,-0.6495) circle (.25cm);
	\filldraw[fill=black!20!white, draw=black]
   		(0.6495,-.375) circle (.25cm);
	\filldraw[fill=black!45!white, draw=black]
   		(1.5,0) circle (.5cm);
	\filldraw[fill=black!45!white, draw=black]
   		(1.299,0.75) circle (.5cm);
	\filldraw[fill=black!45!white, draw=black]
   		(.75,1.299) circle (.5cm);
	\filldraw[fill=black!45!white, draw=black]
   		(0,1.5) circle (.5cm);
	\filldraw[fill=black!45!white, draw=black]
   		(-.75,1.2990) circle (.5cm);
	\filldraw[fill=black!45!white, draw=black]
   		(-1.299,.75) circle (.5cm);
	\filldraw[fill=black!45!white, draw=black]
   		(-1.5,0) circle (.5cm);
	\filldraw[fill=black!45!white, draw=black]
   		(-1.299,-.75) circle (.5cm);	
	\filldraw[fill=black!45!white, draw=black]
   		(-.75,-1.299) circle (.5cm);	
	\filldraw[fill=black!45!white, draw=black]
   		(0,-1.5) circle (.5cm);	
	\filldraw[fill=black!45!white, draw=black]
   		(.75,-1.299) circle (.5cm);
	\filldraw[fill=black!45!white, draw=black]
   		(1.299,-.75) circle (.5cm);
	\draw[->,dashed] (-2.5,0) -- (2.5,0) node[right] {$\omega_1$} ;
	\draw[->,dashed] (0,-2.5) -- (0,2.5)node[above] {$\omega_2$};

		\end{scope}
\end{tikzpicture}
\end{center}
\caption{Partitioning of the frequency plane $\RR^2$ induced by a semi-discrete directional wavelet frame with $K=12$ directions.}
\label{wavfig}
\end{figure}

The architecture corresponding to the feature extractor $\Phi_W$ in \eqref{help11}, illustrated in Fig. \ref{fig2}, is known as \textit{scattering network} \cite{MallatS}, and employs the frame $\Psi_{\Lambda_{\text{W}}}$ and the modulus non-linearity $|\cdot|$ in every network layer, but does not include pooling. For given $n \in \mathbb{N}$, the set $\Phi_W^n(f)$ in \eqref{help11} corresponds to the features of the function $f$ generated in the $n$-th network layer,  see Fig. \ref{fig2}. 
\begin{figure*}[t!]
\centering
\begin{tikzpicture}[scale=2.9,level distance=10mm,>=angle 60]


  \tikzstyle{every node}=[rectangle, inner sep=1pt]
  \tikzstyle{level 1}=[sibling distance=30mm]
  \tikzstyle{level 2}=[sibling distance=10mm]
  \tikzstyle{level 3}=[sibling distance=4mm]
  \node {$f$}
	child[grow=90, level distance=.45cm] {[fill=gray!50!black] circle (0.5pt)
		child[grow=130,level distance=0.5cm] 
        		child[grow=90,level distance=0.5cm] 
        		child[grow=50,level distance=0.5cm]
		child[level distance=.32cm,grow=215, densely dashed, ->] {}  
	}
        child[grow=150] {node {$|f\ast\psi_{\la^{^{(j)}} }|$}
	child[level distance=.75cm,grow=215, densely dashed, ->] {node {$|f\ast\psi_{\la^{^{(j)}}}|\ast\psi_{(-J,0)}$}
	}
	child[grow=83, level distance=0.5cm] 
	child[grow=97, level distance=0.5cm] 
        child[grow=110] {node {$||f\ast \psi_{\la^{^{(j)}}}|\ast\psi_{\la^{^{(l)}}}|$}
	child[level distance=.9cm,grow=215, densely dashed, ->] {node {$||f\ast \psi_{\la^{^{(j)}}}|\ast\psi_{\la^{^{(l)}}}|\ast\psi_{(-J,0)}$}
	}
        child[grow=130] {node {$|||f\ast \psi_{\la^{^{(j)}}}|\ast\psi_{\la^{^{(l)}}}|\ast\psi_{\lambda^{^{(m)}}}|$}
	child[level distance=0.75cm,grow=215, densely dashed, ->] {node[align=left]{\\ $\cdots$}}}
        child[grow=90,level distance=0.5cm] 
 	child[grow=50,level distance=0.5cm]
       }
       child[grow=63, level distance=1.05cm] {[fill=gray!50!black] circle (0.5pt)
	child[grow=130,level distance=0.5cm] 
       child[grow=90,level distance=0.5cm] 
       child[grow=50,level distance=0.5cm] 
       child[level distance=.375cm,grow=325, densely dashed, ->] {}     
       }
       }
       child[grow=30] {node {$|f\ast\psi_{\la^{^{^{(p)}}}}|$}
       child[level distance=0.75cm, grow=325, densely dashed, ->] {node {$|f\ast\psi_{\la^{^{(p)}}}|\ast\psi_{(-J,0)}$}
	}
	child[grow=83, level distance=0.5cm] 
	child[grow=97, level distance=0.5cm] 
        child[grow=117, level distance=1.05cm] {[fill=gray!50!black] circle (0.5pt)
        child[grow=130,level distance=0.5cm] 
        child[grow=90,level distance=0.5cm] 
        child[grow=50,level distance=0.5cm] 
        child[level distance=.375cm,grow=215, densely dashed, ->] {}  
	 }
        child[grow=70] {node {$||f\ast \psi_{\la^{^{(p)}}}|\ast\psi_{\la^{^{(r)}}}|$}
	 child[level distance=0.9cm,grow=325, densely dashed, ->] {node {$||f\ast \psi_{\la^{^{(p)}}}|\ast\psi_{\la^{^{(r)}}}|\ast\psi_{(-J,0)}$}}
	child[grow=130,level distance=0.5cm] 
         child[grow=90,level distance=0.5cm] 
             child[grow=50] {node {$|||f\ast \psi_{\la^{^{(p)}}}|\ast\psi_{\la^{^{(r)}}}|\ast\psi_{\lambda^{^{(s)}}}|$}
             child[level distance=0.75cm,grow=325, densely dashed, ->] {node[align=left]{\\ $\cdots$}}}
	}
     }
	child[level distance=0.75cm, grow=215, densely dashed, ->] {node {$f\ast \psi_{(-J,0)}$}};
\end{tikzpicture}
\caption{Scattering network architecture based on wavelet filters and the modulus non-linearity. The elements of the feature vector $\Phi_W(f)$ in \eqref{help11} are indicated at the tips of the arrows. } 
\label{fig2}
\end{figure*}
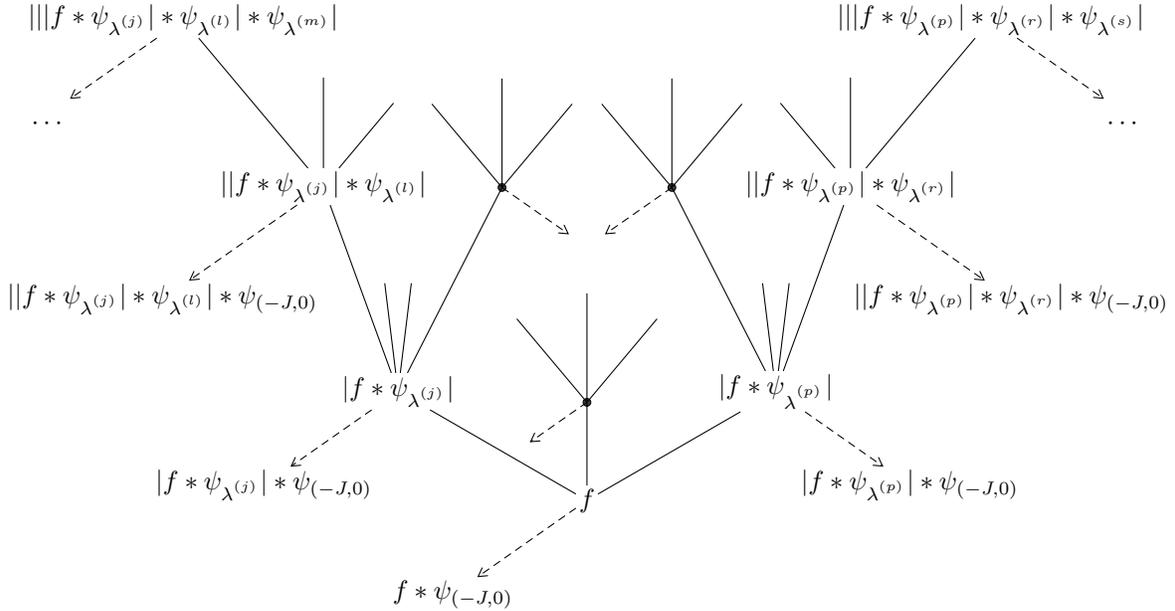
\begin{remark}\label{remark:coeff}
The function $|f\ast \psi_{\lambda}|$, $\la\in \Lambda_{\text{W}}\backslash\{(-J,0)\}$, can be thought of as indicating the locations of singularities of $f \in L^2(\Rd)$. Specifically, with the relation of $|f\ast \psi_{\lambda}|$ to the Canny edge detector \cite{Canny} as described in  \cite{MallatEdges}, in dimension $d=2$, we can think of $|f\ast\psi_\la|=|f\ast\psi_{(j,k)}|$, $\la=(j,k)\in\Lambda_{\text{W}}\backslash\{(-J,0)\}$, as an image at scale $j$ specifying the locations of edges of the image $f$ that are oriented in direction $k$. Furthermore, it was argued in \cite{Bruna,Anden,Oyallon} that the feature vector $\Phi^1_W(f)$ generated in the first layer of the scattering network is very similar, in dimension $d=1$, to mel frequency cepstral coefficients \cite{Davis}, and in dimension $d=2$ to SIFT-descriptors \cite{Lowe,Tola}. 
\end{remark}

It is shown in \cite[Theorem 2.10]{MallatS} that the feature extractor  $\Phi_W$ is translation-invariant in the sense of
\begin{equation}\label{approxti}
\lim_{J\to\infty}||| \Phi_W(T_t f) -\Phi_W(f)|||=0,
\end{equation}
for all $f\in L^2(\Rd)$ and  $t \in \Rd$.
This invariance result is asymptotic in the scale parameter $J\in \mathbb{Z}$ and does not depend on the network depth, i.e., it guarantees full translation invariance in every network layer. Furthermore, \cite[Theorem 2.12]{MallatS} establishes that $\Phi_W$ is stable w.r.t. deformations of the form $(F_{\tau}f)(x):=f(x-\tau(x)).$ More formally, for the function space $(H_{W},\VV \cdot \VV_{H_{W}}\hspace{-0.05cm})$ defined in \cite[Eq. 2.46]{MallatS}, it is shown in \cite[Theorem 2.12]{MallatS} that there exists a constant $C>0$ such that for all $f \in H_{W}$, and  $\tau \in C^1(\Rd,\Rd)$ with\footnote{It is actually the assumption $\VV D \tau \VV_\infty\leq\frac{1}{2d}$, rather than $\VV D \tau \VV_\infty\leq\frac{1}{2}$ as stated in \cite[Theorem 2.12]{MallatS}, that is needed in \cite[p. 1390]{MallatS} to establish that $|\det(E- (D \tau)(x))| \geq 1- d\VV D \tau \VV_\infty\geq 1/2$.}
 $\VV D \tau \VV_\infty\leq\frac{1}{2d}$, the deformation error satisfies the following deformation stability bound 
\begin{align}\label{mallatbound}
&||| \Phi_W(F_{\tau}f) -\Phi_W(f)|||\nonumber\\
&\leq C\big(2^{-J}\VV \tau \VV_\infty + J\VV D \tau \VV_\infty + \VV D^2 \tau \VV_\infty\big)\VV f \VV_{H_{W}}.
\end{align}
Note that this upper bound goes to infinity as translation invariance through  $J\to \infty$ is induced. In practice signal classification based on scattering networks is performed as follows. First, the function $f$ and the wavelet frame atoms $\{\psi_\lambda \}_{\lambda \in \Lambda_{\text{W}}}$ are discretized to finite-dimensional vectors. The resulting scattering network then computes the finite-dimensional feature vector $\Phi_W(f)$, whose dimension is typically reduced through an orthogonal least squares step \cite{SChen}, and then feeds the result into a trainable classifier such as, e.g., a SVM. State-of-the-art results for scattering networks were reported for  various classification tasks such as handwritten digit recognition \cite{Bruna}, texture discrimination \cite{Bruna,Sifre}, and musical genre classification \cite{Anden}.

\section{General deep convolutional \\feature extractors}\label{properties}
As already mentioned, scattering networks follow the architecture of DCNNs \cite{Bengio,LeCunNIPS89,Rumelhart,LeCunProc,LeCun,Nature,Huang,Jarrett,hierachies,Poultney,Pinto,Serre,GaborLowe} in the sense of cascading convolutions (with atoms $\{ \psi_\la\}_{\la\in\Lambda_{\text{W}}}$ of the wavelet frame $\Psi_{\Lambda_{\text{W}}}$) and non-linearities, namely the modulus function, but without pooling.  General DCNNs as studied in the literature exhibit a number of additional features: 
\begin{itemize}
\item[--]{a wide variety of filters are employed, namely pre-specified unstructured filters such as random filters \cite{Jarrett,hierachies}, and filters that are learned in a supervised \cite{Huang,Jarrett} or an unsupervised \cite{Jarrett,hierachies,Poultney} fashion.} 
\item[--]{a wide variety of  non-linearities are used such as, e.g., hyperbolic tangents \cite{Huang,Jarrett,hierachies}, rectified linear units \cite{Glorot,Nair}, and logistic sigmoids \cite{Mohamed,Glorot2}.}
\item[--]{convolution and the application of a non-linearity is typically followed by a pooling operator such as, e.g., sub-sampling \cite{Pinto}, average-pooling \cite{Huang,Jarrett}, or max-pooling \cite{GaborLowe,Serre,Jarrett,hierachies}.}
\item[--]{the filters, non-linearities, and pooling operators are allowed to be different in different network layers \cite{Nature,Goodfellow-et-al-2016}.}
\end{itemize}
As already mentioned, the purpose of this paper is to develop a mathematical theory of DCNNs for feature extraction that encompasses  all of the aspects above (apart from  max-pooling) with the proviso that the pooling operators we analyze are continuous-time emulations of discrete-time pooling operators. Formally, compared to scattering networks, in the $n$-th network layer, we replace the wavelet-modulus   operation $|f\ast \psi_\la|$ by a convolution with the atoms  $g_{\la_n}\in L^1(\Rd) \cap L^2(\Rd)$ of a general semi-discrete frame $\Psi_{n}:=\{T_b I g_{\lambda_n} \}_{b\in \Rd,\la_n \in \Lambda_n}$ for $L^2(\Rd)$ with countable index set $\Lambda_n$ (see Appendix \ref{sec:sdf} for a brief review of the theory of semi-discrete frames), followed by a non-linearity $M_n:L^2(\Rd)\to L^2(\Rd)$ that satisfies the Lipschitz property $\VV M_nf-M_nh\VV_2 \leq L_n \VV f-h \VV_2$, for all $f,h\in L^2(\Rd)$, and $M_nf=0$ for $f=0$. The output of this non-linearity, $M_n(f\ast g_{\la_{n}})$, is then pooled according to
\begin{equation}\label{eq:u1}
f\mapsto S_n^{d/2} P_n(f) (S_n\cdot),
\end{equation}   
where $S_n\geq 1$ is the pooling factor and $P_n:L^2(\Rd)\to L^2(\Rd)$ satisfies the Lipschitz property $\VV P_nf-P_nh\VV_2 \leq R_n \VV f-h \VV_2$, for all $f,h\in L^2(\Rd)$, and $P_nf=0$ for $f=0$.
We next comment on the individual elements in our network architecture in more detail. The frame atoms $g_{\la_n}$ are arbitrary and can, therefore, also be taken to be structured, e.g., Weyl-Heisenberg functions, curvelets, shearlets, ridgelets, or wavelets as considered in \cite{MallatS} (where the atoms $g_{\lambda_n}$ are obtained from a mother wavelet through scaling and rotation operations, see Section \ref{architecture}). The corresponding semi-discrete signal transforms\footnote{Let  $\{ g_\lambda\}_{\lambda \in \Lambda}\subseteq L^1(\Rd)\cap L^2(\Rd)$  be a set of functions indexed by a countable set $\Lambda$. Then, the mapping 
$
f\mapsto \{ f\ast g_\lambda(b)\}_{b\in \Rd, \lambda\in \Lambda}= \{ \langle f,T_{b}Ig_\lambda\rangle\}_{\lambda\in \Lambda}$, $f\in L^2(\Rd),$ is called a semi-discrete signal transform, as it depends on discrete indices $\lambda \in \Lambda$ and continuous variables $b\in \Rd$. We can think of this mapping as the analysis operator in frame theory \cite{Daubechies}, with the proviso that  for given $\lambda \in \Lambda$, we actually have a continuum of frame coefficients as the translation parameter $b\in \Rd$ is left unsampled.}, briefly reviewed in Appendices \ref{sec:sdf0} and \ref{sec:sdf2}, have been employed successfully in the literature in various feature extraction tasks  \cite{GaborFeature1,Wavelet1FeaturesD,Wavelet1Features2,UnserWavelets,GChen,Qiao,Ganesan,Plonka,Dettori}, but  their use---apart from wavelets---in DCNNs appears to be new. We refer the reader to Appendix \ref{app_nonlinear} for a detailed discussion of several relevant example non-linearities (e.g., rectified linear units, shifted logistic sigmoids, hyperbolic tangents, and, of course, the modulus function) that fit into our framework. We next explain how the continuous-time pooling operator \eqref{eq:u1} emulates discrete-time pooling by sub-sampling \cite{Pinto} or by averaging  \cite{Huang,Jarrett}. Consider a one-dimensional discrete-time signal $f_{\text{d}}\in \ell^2(\mathbb{Z}):=\{f_{\text{d}}:\mathbb{Z}\to\mathbb{C} \ | \ \sum_{k\in \mathbb{Z}} |f_{\text{d}}[k]|^2<\infty \}$. Sub-sampling by a factor of $S\in \mathbb{N}$ in discrete time is defined by \cite[Sec. 4]{Vaidyanathan}
\begin{equation*}\label{eq:ssdsgps}
f_{\text{d}} \mapsto h_{\text{d}}:=f_{\text{d}}[S\cdot]
\end{equation*}
and amounts to simply retaining every $S$-th sample of $f_{\text{d}}$. The discrete-time Fourier transform of $h_{\text{d}}$ is given by a summation over translated and dilated copies of $\widehat{f_{\text{d}}}$ according to \cite[Sec. 4]{Vaidyanathan}
\begin{equation}\label{eq:dtft}
\widehat{h_{\text{d}}}(\theta):=\sum_{k\in \mathbb{Z}}h_{\text{d}}[k]e^{-2\pi i k\theta}=\frac{1}{S}\sum_{k=0}^{S-1}\widehat{f_{\text{d}}}\Big(\frac{\theta-k}{S}\Big).
\end{equation}
The translated copies of $\widehat{f_{\text{d}}}$ in \eqref{eq:dtft} are a consequence of the $1$-periodicity of the discrete-time Fourier transform. We therefore emulate the discrete-time sub-sampling operation in continuous time through the dilation operation 
\begin{equation}\label{eq:css}
f\mapsto h:=S^{d/2}f(S\cdot),\quad f\in L^2(\Rd),
\end{equation}
which in the frequency domain amounts to dilation according to $\widehat{h}=S^{-d/2}\widehat{f}(S^{-1}\cdot)$. The scaling by $S^{d/2}$ in \eqref{eq:css} ensures unitarity of the continuous-time sub-sampling operation. The overall operation in \eqref{eq:css} fits into our general definition of pooling  as it can be recovered from \eqref{eq:u1} simply by  taking $P$ to equal the identity mapping  (which is, of course, Lipschitz-continuous with Lipschitz constant $R=1$ and satisfies $\text{Id}f=0$ for $f=0$). Next, we consider average pooling. In discrete time average pooling is defined by
\begin{equation}\label{eq:avg1}
f_{\text{d}} \mapsto h_{\text{d}}:=(f_{\text{d}}\ast \phi_{\text{d}})[S\cdot]
\end{equation}
for the (typically compactly supported) ``averaging kernel'' $\phi_{\text{d}}\in \ell^2(\mathbb{Z})$ and the averaging factor $S\in \mathbb{N}$. Taking $\phi_d$ to be a box function of length $S$ amounts to computing local averages of $S$ consecutive samples. Weighted averages are obtained by identifying the desired weights with the averaging kernel $\phi_d$. The operation \eqref{eq:avg1} can be emulated in continuous time according to
\begin{equation}\label{new_pool}
f\mapsto S^{d/2}\big( f\ast \phi\big)(S\cdot),\quad f\in L^2(\Rd),
\end{equation}
with the averaging window $\phi\in L^1(\Rd)\cap L^2(\Rd)$. We note that \eqref{new_pool} can be recovered from \eqref{eq:u1} by  taking $P(f)=f\ast \phi$, $f\in L^2(\Rd)$, and noting that convolution with $\phi$ is Lipschitz-continuous with Lipschitz constant $R=\| \phi \|_1$ (thanks to Young's inequality \cite[Theorem 1.2.12]{Grafakos}) and trivially satisfies $Pf=0$ for $f=0$.
In the remainder of the paper, we refer to the operation in \eqref{eq:u1} as \textit{Lipschitz pooling through dilation} to indicate that \eqref{eq:u1} essentially amounts to the application of a Lipschitz-continuous mapping followed by a continuous-time dilation. Note, however, that the operation in \eqref{eq:u1} will not be unitary in general. 

\begin{figure*}[t!]
\centering
\begin{tikzpicture}[scale=2.9,level distance=10mm,>=angle 60]

  \tikzstyle{every node}=[rectangle, inner sep=1pt]
  \tikzstyle{level 1}=[sibling distance=30mm]
  \tikzstyle{level 2}=[sibling distance=10mm]
  \tikzstyle{level 3}=[sibling distance=4mm]
  \node {$U[e]f=f$}
	child[grow=90, level distance=.45cm] {[fill=gray!50!black] circle (0.5pt)
		child[grow=130,level distance=0.5cm] 
        		child[grow=90,level distance=0.5cm] 
        		child[grow=50,level distance=0.5cm]
		child[level distance=.25cm,grow=215, densely dashed, ->] {}  
	}
        child[grow=150] {node {$U\big[\lambda_1^{(j)}\big]f$}
	child[level distance=.75cm,grow=215, densely dashed, ->] {node {$\big(U\big[\lambda_1^{(j)}\big]f\big)\ast\chi_{1}$}
	}
	child[grow=83, level distance=0.5cm] 
	child[grow=97, level distance=0.5cm] 
        child[grow=110] {node {$U\big[\big(\lambda_1^{(j)},\lambda_2^{(l)}\big)\big]f$}
	child[level distance=.9cm,grow=215, densely dashed, ->] {node {$\big(U\big[\big(\lambda_1^{(j)},\lambda_2^{(l)}\big)\big]f\big)\ast\chi_{2}$}
	}
        child[grow=130] {node {$U\big[\big(\lambda_1^{(j)},\lambda_2^{(l)},\lambda_3^{(m)}\big)\big]f$}
	child[level distance=0.75cm,grow=215, densely dashed, ->] 
	{node[align=left]{\\ $\cdots$}}
	}
        child[grow=90,level distance=0.5cm]
 	child[grow=50,level distance=0.5cm]
       }
       child[grow=63, level distance=1.05cm] {[fill=gray!50!black] circle (0.5pt)
	child[grow=130,level distance=0.5cm] 
       child[grow=90,level distance=0.5cm] 
       child[grow=50,level distance=0.5cm]
       child[level distance=.25cm,grow=325, densely dashed, ->] {}    
       }
       }
       child[grow=30] {node {$U\big[\lambda_1^{(p)}\big]f$}
       child[level distance=0.75cm, grow=325, densely dashed, ->] {node {$\big(U\big[\lambda_1^{(p)}\big]f\big)\ast\chi_{1}$}
	}
	child[grow=83, level distance=0.5cm] 
	child[grow=97, level distance=0.5cm]
        child[grow=117, level distance=1.05cm] {[fill=gray!50!black] circle (0.5pt)
        child[grow=130,level distance=0.5cm] 
        child[grow=90,level distance=0.5cm] 
        child[grow=50,level distance=0.5cm] 
        child[level distance=.25cm,grow=215, densely dashed, ->] {}  
	 }
        child[grow=70] {node {$U\big[\big(\lambda_1^{(p)},\lambda_2^{(r)}\big)\big]f$}
	 child[level distance=0.9cm,grow=325, densely dashed, ->] {node {$\big(U\big[\big(\lambda_1^{(p)},\lambda_2^{(r)}\big)\big]f\big)\ast\chi_{2}$}}
	child[grow=130,level distance=0.5cm] 
         child[grow=90,level distance=0.5cm] 
             child[grow=50] {node {$U\big[\big(\lambda_1^{(p)},\lambda_2^{(r)},\lambda_3^{(s)}\big)\big]f$}
             child[level distance=0.75cm,grow=325, densely dashed, ->] {node[align=left]{\\ $\cdots$}}}
	}
     }
	child[level distance=0.75cm, grow=215, densely dashed, ->] {node {$f\ast \chi_0$}};
\end{tikzpicture}
\caption{Network architecture underlying the general DCNN feature extractor. The index $\lambda_{n}^{(k)}$ corresponds to the $k$-th atom $g_{\lambda_{n}^{(k)}}$ of the frame $\Psi_n$ associated with the $n$-th network layer. The function $\chi_{n}$ is the output-generating atom of the $n$-th layer.} 
\label{fig_gsn}
\end{figure*}
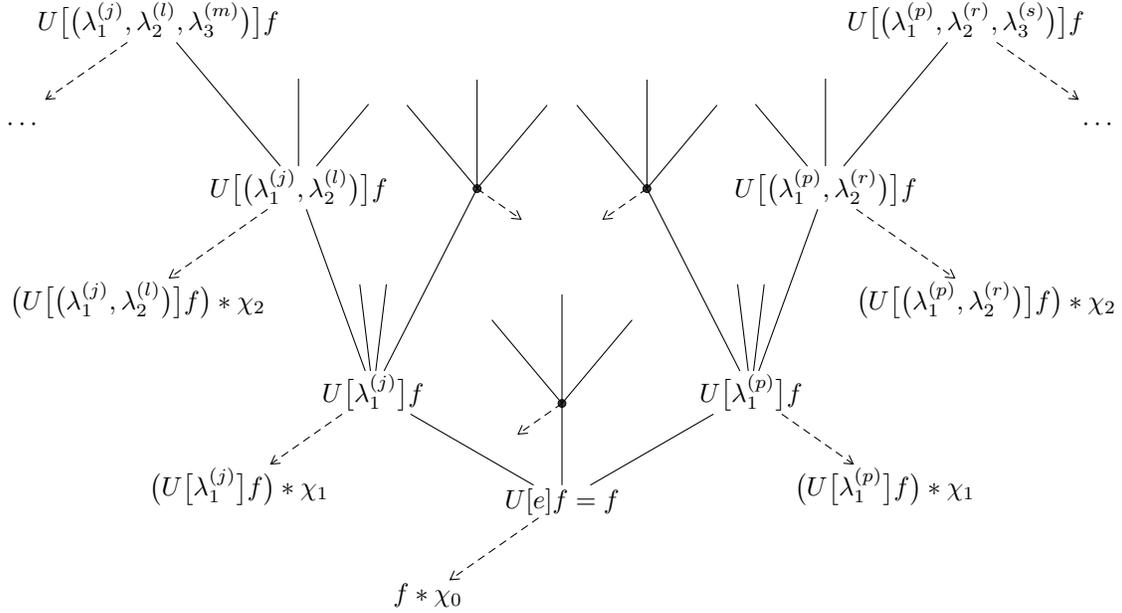

We next state definitions and collect preliminary results needed for the  analysis of the general DCNN feature extractor  considered. The basic building blocks of this network are the triplets $(\Psi_n,M_n,P_n)$ associated with individual network layers $n$ and referred to as \textit{modules}.
\begin{definition} For $n \in \mathbb{N}$, let $\Psi_{n}=\{T_bIg_{\la_n} \}_{b \in \Rd, \la_n \in \Lambda_n}$  be a semi-discrete frame for $L^2(\Rd)$ and let $M_n:L^2(\Rd) \to L^2(\Rd)$ and $P_n:L^2(\Rd) \to L^2(\Rd)$ be  Lipschitz-continuous operators with  $M_nf=0$ and $P_nf=0$ for $f=0$, respectively. Then, the sequence of triplets
$$
\Omega:=\big( (\Psi_n,M_n,P_n)\big)_{n \in \mathbb{N}}
$$
is referred to as a module-sequence. 
\end{definition}
The following definition introduces the concept of paths on index sets, which will prove useful in formalizing the feature extraction network. The idea for this formalism is due to  \cite{MallatS}.
\begin{definition} \label{changes} Let $\Omega=\big( (\Psi_n,M_n,P_n)\big)_{n \in \mathbb{N}}$ be a module-sequence, let $\{ g_{\lambda_n}\}_{\lambda_n \in \Lambda_n}$ be the atoms of the frame $\Psi_n$, and let $S_n\geq 1$ be the pooling factor (according to \eqref{eq:u1}) associated with the $n$-th network layer. Define the operator $U_n$ associated with the $n$-th layer of the network  as $U_n: \Lambda_n \times L^2(\Rd) \to L^2(\Rd)$, 
\begin{equation}\label{eq:e1} 
U_n(\lambda_n,f):=U_n[\lambda_n]f:=S_n^{d/2}P_n\big(M_n(f \ast g_{\lambda_n})\big)(S_n\cdot).
\end{equation}
For $n\in \mathbb{N}$, define the set $\Lambda_1^n:=\Lambda_1\times \Lambda_{2}\times \dots \times \Lambda_n$.  An ordered sequence $q=(\lambda_1,\lambda_{2},\dots, \lambda_n) \in \Lambda_1^n$ is called a path. For the empty path $e:=\emptyset$ we set $\Lambda_1^0:=\{ e \}$ and $U_0[e]f:=f$, for all $f\in L^2(\Rd)$. \end{definition}
The operator $U_n$ is well-defined, i.e., $U_n[\lambda_n]f\in L^2(\Rd)$, for all $(\lambda_n,f)\in \Lambda_n\times L^2(\Rd)$, thanks to
\begin{align}
&\VV U_n[\lambda_n] f \VV_2^2 = S_n^d\int_{\RR^d}\Big|P_n\big(M_n(f \ast g_{\lambda_n})\big)(S_nx)\Big|^2\mathrm dx\nonumber\\&=\int_{\RR^d}\Big|P_n\big(M_n(f \ast g_{\lambda_n})\big)(y)\Big|^2\mathrm dy\nonumber\\
&=\| P_n\big(M_n(f\ast g_{\lambda_n})\big)\|_2^2\leq R_n^2 \VV M_n(f\ast g_{\lambda_n})\VV_2^2 \label{eq:abba}\\
&\leq L_n^2R_n^2\VV f\ast g_{\lambda_n}\VV_2^2\leq B_nL_n^2R_n^2 \VV f\VV_2^2\label{wellU}.
\end{align}
For the inequality in \eqref{eq:abba} we used the Lipschitz continuity of $P_n$ according to $\VV P_nf-P_nh\VV^2_2 \leq R_n^2 \VV f-h \VV^2_2$, together with $P_nh=0$ for $h=0$ to get $\VV P_nf \VV_2^2\leq R_n^2\VV f \VV^2_2$. Similar arguments lead to the first inequality in \eqref{wellU}. The last step in \eqref{wellU} is thanks to
$$\VV f\ast g_{\lambda_n}\VV_2^2\leq \sum_{\lambda'_{n} \in \Lambda_n}\VV f\ast g_{\lambda'_{n}}\VV_2^2 \leq B_n\VV f \VV_2^2,$$
which follows from the frame condition \eqref{condii} on $\Psi_n$. We will also need the extension of the operator $U_n$ to paths $q \in \Lambda_1^n$ according to
\begin{align}
U[q]f=&\,U[(\lambda_1,\lambda_{2},\dots,\lambda_n)]f\nonumber\\
:=&\,U_n[\lambda_n] \cdots U_{2}[\lambda_{2}]U_{1}[\lambda_{1}]f,\label{aaaa}
\end{align}
with $U[e]f:=f$.
Note that the multi-stage opera\-tion \eqref{aaaa} is again well-defined thanks to
\begin{equation}\label{eq:p13}
\VV U[q] f \VV_2^2 \leq \Bigg(\prod_{k=1}^{n}B_kL_k^2R_k^2 \Bigg) \VV f \VV^2_2,
\end{equation}
for  $q\in \Lambda_1^n$ and $f\in L^2(\Rd)$,  which follows by repeated application of \eqref{wellU}. 

In scattering networks one atom $ \psi_\lambda$, $\la \in  \Lambda_{\text{W}}$, in the wavelet frame $\Psi_{\Lambda_{\text{W}}}$, namely the low-pass filter $\psi_{(-J,0)}$, is singled out to generate the extracted features according to \eqref{help11}, see also Fig. \ref{fig2}. We follow this construction and designate one of the atoms in each frame  in the module-sequence $\Omega=\big( (\Psi_n,M_n,P_n)\big)_{n \in \mathbb{N}}$ as the output-generating atom $\chi_{n-1}:=g_{\lambda^\ast_n}$, $\lambda^\ast_n \in \Lambda_n$, of the $(n-1)$-th layer. The atoms $\{ g_{\lambda_n}\}_{\lambda_n \in \Lambda_n\backslash\{ \lambda^\ast_n\}}\cup \{ \chi_{n-1}\}$ in $\Psi_n$ are thus used across two consecutive layers in the sense of $\chi_{n-1}=g_{\lambda^\ast_n}$ generating the output in the $(n-1)$-th layer, and the $\{ g_{\lambda_n}\}_{\lambda_n \in \Lambda_n\backslash\{ \lambda^\ast_n\}}$  propagating signals from the $(n-1)$-th layer to the $n$-th layer according to \eqref{eq:e1}, see Fig. \ref{fig_gsn}. Note, however, that our theory does not require the output-generating atoms to be low-pass filters\footnote{It is evident, though,  that the actual choices of the output-generating atoms will have an impact on practical performance.}. From now on, with slight abuse of notation, we shall write $\Lambda_n$ for $\Lambda_n\backslash\{ \lambda^\ast_n\}$ as well. Finally, we note that extracting features in every network layer via an output-generating atom can be regarded as employing skip-layer connections \cite{he2015deep}, which skip  network layers further down and feed the propagated signals into the feature vector.

We are now ready to define the feature extractor $\Phi_\Omega$ based on the module-sequence $\Omega$. 
\begin{definition}\label{defn2}
 Let $\Omega=\big( (\Psi_n,M_n,P_n)\big)_{n \in \mathbb{N}}$ be a module-sequence. The feature extractor $\Phi_\Omega$ based on $\Omega$ maps $L^2(\Rd)$ to its feature vector
 \begin{equation}\label{ST}
\Phi_\Omega (f):=\bigcup_{n=0}^\infty \Phi_\Omega^n(f),
\end{equation}
where $\Phi_\Omega^n (f):= \{ (U[q]f) \ast \chi_{n} \}_{q \in \Lambda_1^n}$, for all $n\in \mathbb{N}$.
\end{definition}
The set $\Phi_\Omega^n (f)$ in \eqref{ST} corresponds to the features of the function $f$ generated in the $n$-th network layer, see Fig. \ref{fig_gsn}, where $n=0$ corresponds to the root of the network. The feature extractor $\Phi_\Omega:L^2(\Rd)\to(L^2(\Rd))^\QQ$, with $\QQ:=\bigcup_{n=0}^\infty\Lambda_1^n$, is well-defined, i.e., $\Phi_\Omega(f) \in (L^2(\Rd))^\QQ$,  for all $ f\in L^2(\Rd)$, under a technical condition on the module-sequence $\Omega$ formalized as follows.

\begin{figure*}[t]
\hspace{3.5cm}
        \begin{subfigure}[]{}
                        \hspace{-1.6cm}            
                            \includegraphics[width = .175\textwidth]{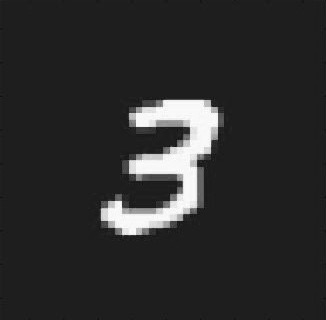}
        \end{subfigure}
        \hspace{3cm}
                \begin{subfigure}[]{}
                      \hspace{-1.6cm}    
                \includegraphics[width = .175\textwidth]{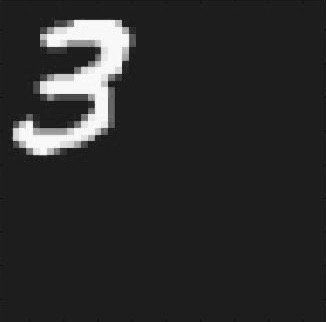}
        \end{subfigure}
                \hspace{3cm}
                        \begin{subfigure}[]{}
                      \hspace{-1.6cm}    
                \includegraphics[width = .175\textwidth]{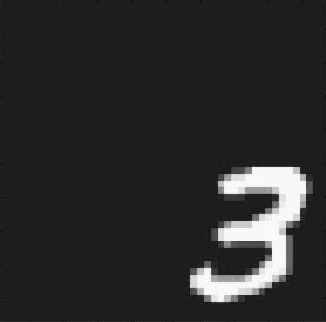}
        \end{subfigure}
             \caption{Handwritten digits from the MNIST data set \cite{MNIST}. For practical machine learning tasks (e.g., signal classification), we often want the feature vector $\Phi_\Omega(f)$ to be invariant to the digits' spatial location within the image $f$. Theorem \ref{main_inv2} establishes that the features  $\Phi^n_\Omega(f)$ become  more translation-invariant with increasing layer index $n$.}\label{fig:transaaaaa}
\end{figure*}

\begin{proposition}\label{main_well}
 Let $\Omega=\big( (\Psi_n,M_n,P_n)\big)_{n \in \mathbb{N}}$ be a module-sequence. Denote the frame upper bounds of $\Psi_n$ by $B_n>0$ and the Lipschitz constants of the operators $M_n$ and $P_n$ by $L_n>0$ and $R_n>0$, respectively. If
 \begin{equation}\label{weak_admiss1}
 \max\{B_n,B_nL_n^2R_n^2 \}\leq 1,\hspace{0.5cm} \forall n \in \mathbb{N}, 
 \end{equation}
then the feature extractor $\Phi_\Omega:L^2(\Rd)\to(L^2(\Rd))^\QQ$ is well-defined, i.e., $\Phi_\Omega(f) \in (L^2(\Rd))^\QQ$, for all $f\in L^2(\Rd)$. 
\end{proposition}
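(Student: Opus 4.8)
The plan is to reduce everything to the single energy estimate
\[
||| \Phi_\Omega(f) |||^2 = \sum_{n=0}^\infty \sum_{q \in \Lambda_1^n} \VV (U[q]f) \ast \chi_n \VV_2^2 \leq \VV f \VV_2^2,
\]
which immediately gives $\Phi_\Omega(f) \in (L^2(\Rd))^\QQ$ and hence well-definedness. The naive route fails: bounding each output term by the frame upper bound of $\Psi_{n+1}$ (of which $\chi_n = g_{\lambda_{n+1}^\ast}$ is an atom) as $\VV (U[q]f) \ast \chi_n \VV_2^2 \leq B_{n+1}\VV U[q]f \VV_2^2$ and invoking \eqref{eq:p13} only produces a per-layer bound $\bigl(\prod_{k\leq n} B_k L_k^2 R_k^2\bigr)\VV f \VV_2^2$, and summing these over $n$ need not converge (e.g.\ when every factor equals $1$). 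So a telescoping argument is needed.

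First I would introduce the propagated layer energy $V_n := \sum_{q \in \Lambda_1^n} \VV U[q]f \VV_2^2$, note $V_0 = \VV f \VV_2^2$, and record from iterating \eqref{wellU} (equivalently from \eqref{eq:p13}) that $V_{n+1} \leq B_{n+1}L_{n+1}^2R_{n+1}^2 V_n \leq V_n$ under \eqref{weak_admiss1}, so $(V_n)_n$ is non-increasing. The crux is then to bound each layer's output energy by the decrement $V_n - V_{n+1}$. Fixing $q$ and writing $h := U[q]f$, I split the output atom $\chi_n$ off from the propagating atoms in the frame upper bound for $\Psi_{n+1}$ to get $\VV h \ast \chi_n \VV_2^2 \leq B_{n+1}\VV h \VV_2^2 - \sum_{\lambda_{n+1} \in \Lambda_{n+1}} \VV h \ast g_{\lambda_{n+1}} \VV_2^2$; summing over $q$ and setting $\tilde W_{n+1} := \sum_{q \in \Lambda_1^n}\sum_{\lambda_{n+1} \in \Lambda_{n+1}} \VV (U[q]f) \ast g_{\lambda_{n+1}} \VV_2^2$ yields $\sum_q \VV h \ast \chi_n \VV_2^2 \leq B_{n+1} V_n - \tilde W_{n+1}$. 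On the other hand, the Lipschitz estimate underlying \eqref{eq:abba}, namely $\VV U_{n+1}[\lambda_{n+1}]h \VV_2^2 \leq L_{n+1}^2 R_{n+1}^2 \VV h \ast g_{\lambda_{n+1}} \VV_2^2$, summed over $q$ and $\lambda_{n+1}$, gives $V_{n+1} \leq L_{n+1}^2 R_{n+1}^2 \tilde W_{n+1}$, so $\tilde W_{n+1} \geq V_{n+1}/(L_{n+1}^2 R_{n+1}^2)$.

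Substituting and abbreviating $a_{n+1} := B_{n+1}L_{n+1}^2R_{n+1}^2 \leq 1$ turns the estimate into $\sum_q \VV h \ast \chi_n \VV_2^2 \leq B_{n+1}\bigl(V_n - V_{n+1}/a_{n+1}\bigr)$. Here the bracket is nonnegative because $V_{n+1} \leq a_{n+1}V_n$; applying $B_{n+1} \leq 1$ removes the prefactor, and finally $a_{n+1} \leq 1$ gives $V_{n+1}/a_{n+1} \geq V_{n+1}$, so that
\[
\sum_{q \in \Lambda_1^n} \VV (U[q]f) \ast \chi_n \VV_2^2 \leq V_n - V_{n+1}.
\]
Summing this over $n = 0, \dots, N$ telescopes to $V_0 - V_{N+1} \leq V_0 = \VV f \VV_2^2$, and letting $N \to \infty$ proves the claim (indeed with the sharper conclusion $||| \Phi_\Omega(f) ||| \leq \VV f \VV_2$).

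The main obstacle is exactly this decrement bound. The non-linearity and pooling supply only an \emph{upper} bound on the propagated energy, which is the ``wrong'' direction for a telescoping argument, so one cannot simply equate output energy with the drop in propagated energy as one could for a Parseval frame with identity non-linearity. The frame inequality is what converts the upper bound into a usable decrement, and it is the \emph{simultaneous} use of both halves of \eqref{weak_admiss1}, $B_n \leq 1$ and $B_n L_n^2 R_n^2 \leq 1$, that makes the two competing scalings cancel. I would also be careful with the bookkeeping around the abuse of notation $\Lambda_n \leftrightarrow \Lambda_n \setminus \{\lambda_n^\ast\}$, making sure that the frame bound being invoked is that of the \emph{full} frame $\Psi_{n+1}$ (the one that includes the $\chi_n$ term), which is what legitimizes splitting the output atom off from the propagating atoms.
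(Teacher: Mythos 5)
Your proposal is correct and takes essentially the same route as the paper's proof: a telescoping argument over layers with the key decrement inequality $a_n \leq b_n - b_{n+1}$ (your $V_n$ is the paper's $b_n$), derived from the same two ingredients---the upper frame bound of the \emph{full} frame $\Psi_{n+1}$ including the output atom $\chi_n$, and the Lipschitz/unitarity bound on $U_{n+1}$---with both halves of the admissibility condition \eqref{weak_admiss1} invoked at the end. The only difference is cosmetic: the paper bounds $a_n + b_{n+1}$ jointly by extracting a factor $\max\{1, L_{n+1}^2R_{n+1}^2\}$ before applying the frame bound, whereas you rearrange the frame inequality to isolate $a_n$ and lower-bound the propagated energy $\tilde W_{n+1}$; the underlying inequalities are identical.
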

\begin{proof}
The proof is given in Appendix \ref{app_well}.
\end{proof}
As condition \eqref{weak_admiss1} is of central importance, we formalize it as follows.
\begin{definition} \label{def:weakq}
Let $\Omega=\big( (\Psi_n,M_n,P_n)\big)_{n \in \mathbb{N}}$ be a module-sequence with frame upper bounds $B_n>0$ and Lipschitz constants $L_n,  R_n>0$ of the operators $M_n$ and $P_n$, respectively. The condition
\begin{equation}\label{weak_admiss2}
 \max\{B_n,B_nL_n^2R_n^2 \}\leq 1,\hspace{0.5cm} \forall n \in \mathbb{N}, 
 \end{equation}
is referred to as admissibility condition. Module-sequences that satisfy \eqref{weak_admiss2} are called  admissible.
\end{definition}
We emphasize that condition \eqref{weak_admiss2} is easily met in practice. To see this, first note that $B_n$ is determined through the frame $\Psi_n$ (e.g., the directional wavelet frame introduced in Section \ref{architecture} has $B=1$), $L_n$ is set through the non-linearity $M_n$ (e.g., the modulus function $M=|\cdot|$ has $L=1$, see Appendix \ref{app_nonlinear}), and $R_n$ depends on the operator $P_n$ in \eqref{eq:u1} (e.g., pooling by sub-sampling amounts to $P=\text{Id}$ and has $R=1$). Obviously, condition \eqref{weak_admiss2} is met if
\begin{equation*}\label{eq:d2hut}
B_n\leq \min\{1,L_n^{-2}R_n^{-2}\}, \hspace{0.5cm} \forall n \in \mathbb{N},
\end{equation*} 
which can be satisfied by simply normalizing the frame elements of $\Psi_n$ accordingly. We refer to Proposition \ref{prop:norm} in Appendix \ref{sec:sdf} for corresponding normalization techniques, which, as explained in  Section \ref{main}, affect neither our translation invariance result nor our deformation sensitivity bounds. 
\section{Properties of the feature extractor $\Phi_\Omega$}\label{main}

\subsection{Vertical translation invariance}\label{sec:deep}
The following theorem states that under very mild decay conditions on the Fourier transforms $\widehat{\chi_n}$ of the output-generating atoms $\chi_n$, the feature extractor $\Phi_\Omega$ exhibits vertical translation invariance in the sense of the features becoming more translation-invariant with increasing network depth. This result is in line with observations made in the deep learning literature, e.g., in \cite{Huang,Jarrett,hierachies,Serre,GaborLowe}, where it is informally argued that the network outputs generated at deeper layers tend to be more translation-invariant. 
\begin{theorem}\label{main_inv2}
Let $\Omega=\big( (\Psi_n,M_n,P_n)\big)_{n \in \mathbb{N}}$ be an admissible module-sequence, let $S_n\geq 1$, $n\in \mathbb{N}$, be the pooling factors in \eqref{eq:e1}, and assume that the operators $M_n:L^2(\Rd) \to L^2(\Rd)$ and $P_n:L^2(\Rd) \to L^2(\Rd)$ commute with the translation operator $T_t$, i.e., 
\begin{equation}\label{eq:i1}
M_nT_tf=T_tM_nf, \hspace{0.5cm} P_nT_tf=T_tP_nf, \end{equation}
for all $f\in L^2(\Rd)$, $t \in \Rd$, and $n\in\mathbb{N}$. 

 \begin{itemize}
 \item[i)]{
The features  $\Phi_\Omega^n(f)$ generated in the $n$-th network layer satisfy \begin{equation}\label{eq:i2}
\Phi_{\Omega}^n(T_tf)=T_{t/(S_{1}\cdots \,S_{n})}\Phi_\Omega^n(f), \end{equation}
for all $f\in L^2(\Rd)$, $t \in \Rd$, and $n\in\mathbb{N}$, where $T_t \Phi_\Omega^n(f)$ refers to element-wise application of $T_t$, i.e., $T_t\Phi_\Omega^n(f):=\{ T_th\, |\, \forall h \in \Phi_\Omega^n(f) \}$.
}
\item[ii)]{If, in addition, there exists a constant $K >0$ (that does not depend on $n$) such that the Fourier transforms $\widehat{\chi_n}$ of the output-generating atoms $\chi_n$ satisfy the decay condition 
\begin{equation}\label{decaycondition}
|\widehat{\chi_n}(\omega)||\omega| \leq K, \hspace{0.5cm} \text{ a.e. } \omega \in \Rd,\ \forall n \in \mathbb{N}_0,
\end{equation}
then  
\begin{equation}\label{eq:vert3}
||| \Phi^n_\Omega(T_tf) - \Phi^n_\Omega(f)|||\leq \frac{2\pi| t|K}{S_1\cdots S_n}\VV f \VV_2, 
\end{equation}
for all  $f \in L^2(\Rd)$ and $t\in \Rd.$}
\end{itemize}
\end{theorem}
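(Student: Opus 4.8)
The plan is to bootstrap part i), which already computes the translated feature vector \emph{exactly}, so that the only remaining content is to control how far each component of $\Phi_\Omega^n(f)$ is displaced by a small translation. Writing $t_n:=t/(S_1\cdots S_n)$, part i) gives $\Phi_\Omega^n(T_tf)=T_{t_n}\Phi_\Omega^n(f)$, and since $T_{t_n}$ acts element-wise across the path index set $\Lambda_1^n$, expanding the $|||\cdot|||$-norm yields
\begin{equation*}
||| \Phi_\Omega^n(T_tf)-\Phi_\Omega^n(f)|||^2=\sum_{q\in\Lambda_1^n}\VV T_{t_n}\big((U[q]f)\ast\chi_n\big)-(U[q]f)\ast\chi_n\VV_2^2 .
\end{equation*}
This isolates a single translation-difference per path, and the task becomes bounding each such term by $(2\pi|t_n|K)^2\VV U[q]f\VV_2^2$ and then summing.

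For a fixed path I would pass to the Fourier domain. Because $\chi_n\in L^1(\Rd)\cap L^2(\Rd)$, the convolution theorem gives $\widehat{(U[q]f)\ast\chi_n}=\widehat{U[q]f}\,\widehat{\chi_n}$, and translation multiplies this by $e^{-2\pi i\langle t_n,\omega\rangle}$; Plancherel then turns the difference into $\int_{\Rd}|e^{-2\pi i\langle t_n,\omega\rangle}-1|^2|\widehat{U[q]f}(\omega)|^2|\widehat{\chi_n}(\omega)|^2\,d\omega$. The elementary identity $|e^{-2\pi i\langle t_n,\omega\rangle}-1|=2|\sin(\pi\langle t_n,\omega\rangle)|\leq 2\pi|t_n||\omega|$ (using $|\sin x|\leq|x|$ and Cauchy--Schwarz) converts the oscillatory factor into $2\pi|t_n||\omega|$. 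The entire purpose of the decay hypothesis \eqref{decaycondition} is then to absorb the stray $|\omega|$: since $|\omega||\widehat{\chi_n}(\omega)|\leq K$ uniformly, a second application of Plancherel to $U[q]f$ collapses the integral to $(2\pi|t_n|)^2K^2\VV U[q]f\VV_2^2$.

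It remains to sum the per-path bounds, and this is where admissibility is essential and where I expect the only real subtlety to lie. The single-path estimate \eqref{eq:p13} does not suffice; instead one needs the layerwise $\ell^2$-nonexpansiveness $\sum_{\lambda_n\in\Lambda_n}\VV U_n[\lambda_n]g\VV_2^2\leq B_nL_n^2R_n^2\VV g\VV_2^2\leq\VV g\VV_2^2$, obtained by combining the Lipschitz bounds for $P_n$ and $M_n$ (as in \eqref{wellU}) with the frame upper bound $\sum_{\lambda_n}\VV g\ast g_{\lambda_n}\VV_2^2\leq B_n\VV g\VV_2^2$ and the admissibility condition $B_nL_n^2R_n^2\leq 1$. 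Iterating this contraction over the $n$ layers---peeling off $\lambda_n,\lambda_{n-1},\dots$ one at a time---telescopes to $\sum_{q\in\Lambda_1^n}\VV U[q]f\VV_2^2\leq\VV f\VV_2^2$, with the constant held at exactly $1$ regardless of depth. Substituting this together with $|t_n|=|t|/(S_1\cdots S_n)$ into the summed bound and taking square roots delivers \eqref{eq:vert3}; the Fourier-side estimate, by contrast, is routine once the decay condition is in hand.
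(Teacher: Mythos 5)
Your proposal is correct and follows essentially the same route as the paper's proof in Appendix~F: part~i) to reduce to $T_{t/(S_1\cdots S_n)}\Phi_\Omega^n(f)-\Phi_\Omega^n(f)$, Plancherel, the bound $|e^{-2\pi i x}-1|\leq 2\pi|x|$ combined with Cauchy--Schwarz, the decay condition \eqref{decaycondition} to absorb the factor $|\omega|$, and the depth-independent path-sum bound $\sum_{q\in\Lambda_1^n}\VV U[q]f\VV_2^2\leq \VV f\VV_2^2$. The only cosmetic difference is that you re-derive this last bound directly as a layerwise contraction from the frame upper bound, the Lipschitz constants, and admissibility, whereas the paper obtains it by citing the telescoping inequality \eqref{eq:w3} already established in the proof of Proposition~\ref{main_well}---the underlying mechanism is identical.
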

\begin{proof}
The proof is given in Appendix \ref{app_inv}.
\end{proof}

\begin{figure*}[t]
\hspace{3.5cm}
        \begin{subfigure}[]{}
                        \hspace{-1.6cm}            
                           \includegraphics[width = .175\textwidth]{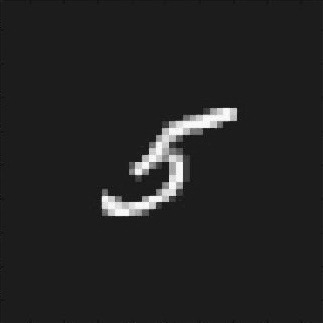}
        \end{subfigure}
        \hspace{3cm}
                \begin{subfigure}[]{}
                      \hspace{-1.6cm}    
               \includegraphics[width = .175\textwidth]{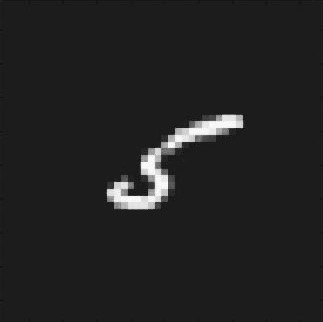}
        \end{subfigure}
                \hspace{3cm}
                        \begin{subfigure}[]{}
                      \hspace{-1.6cm}    
             \includegraphics[width = .175\textwidth]{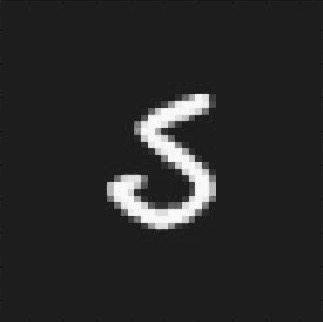}

        \end{subfigure}
             \caption{Handwritten digits from the MNIST data set \cite{MNIST}. If $f$ denotes the image of the handwritten digit ``$5$'' in (a), then---for appropriately chosen $\tau$---the function $F_\tau f=f(\cdot-\tau(\cdot))$ models images of ``$5$''  based on different handwriting styles as in (b) and (c). }\label{fig:deform}
\end{figure*}

We start by noting that all pointwise (also referred to as  memoryless in the signal processing literature) non-linearities $M_n:L^2(\Rd)\to L^2(\Rd)$ satisfy the  commutation relation in \eqref{eq:i1}. A large class of non-linearities widely used in the deep learning literature, such as rectified linear units, hyperbolic tangents, shifted logistic sigmoids, and the modulus function as employed in \cite{MallatS}, are, indeed, pointwise and hence covered by Theorem \ref{main_inv2}. 
Moreover, $P=\text{Id}$ as in pooling by sub-sampling trivially satisfies \eqref{eq:i1}. Pooling by averaging $Pf=f \ast \phi$, with $\phi \in L^1(\Rd)\cap L^2(\Rd)$, satisfies \eqref{eq:i1} as a consequence of the convolution operator commuting with the translation operator $T_t$. 

Note that \eqref{decaycondition} can easily be met by taking the output-generating atoms $\{\chi_n\}_{n\in \mathbb{N}_0}$ either to satisfy 
\begin{equation*}\label{eq:philipp}
\sup_{n\in \mathbb{N}_0}\{ \VV\chi_n \VV_1 + \VV \nabla \chi_n \VV_1\} <\infty,
\end{equation*} see, e.g., \cite[Ch. 7]{Rudin}, or to be uniformly band-limited in the sense of $\supp (\widehat{\chi_n})\subseteq B_r(0)$, for all $n \in \mathbb{N}_0$, with an $r$ that is independent of $n$ (see, e.g., \cite[Ch. 2.3]{MallatW}). The bound in \eqref{eq:vert3} shows that we can explicitly control the amount of translation invariance via the pooling factors $S_n$. This result is in line with observations made in the deep learning literature, e.g., in \cite{Huang,Jarrett,hierachies,Serre,GaborLowe}, where it is informally argued  that pooling is crucial to get translation invariance of the extracted features.  Furthermore, the condition $\lim\limits_{n\to \infty}S_1\cdot S_2\cdot \ldots \cdot S_n = \infty$ (easily met by taking  $S_n>1$, for all $n \in \mathbb{N}$) guarantees, thanks to \eqref{eq:vert3}, asymptotically full translation invariance according to 
\begin{equation}\label{awesome}
\lim\limits_{n\to \infty}||| \Phi_\Omega^n(T_tf) - \Phi_\Omega^n(f)|||=0, 
\end{equation}
for all $f \in L^2(\Rd)$ and $t\in \Rd$. This means that the features $\Phi_\Omega^n(T_tf)$ corresponding to the shifted versions $T_tf$ of the handwritten digit ``$3$'' in Figs.  \ref{fig:transaaaaa} (b) and (c) with increasing network depth increasingly ``look like'' the features $\Phi_\Omega^n(f)$ corresponding to the unshifted handwritten digit in Fig. \ref{fig:transaaaaa} (a). Casually speaking, the shift operator $T_t$ is increasingly absorbed by $\Phi_\Omega^n$ as $n\to \infty$, with the upper bound \eqref{eq:vert3} quantifying this absorption.

In contrast, the translation invariance result \eqref{approxti} in \cite{MallatS} is asymptotic in the wavelet scale parameter $J$, and does not depend on the network depth, i.e., it guarantees full translation invariance in every network layer. We honor this difference by referring to \eqref{approxti} as \textit{horizontal} translation invariance and to \eqref{awesome} as \textit{vertical} translation invariance. 

We emphasize that vertical translation invariance is a structural property. Specifically, if $P_n$ is unitary (such as, e.g.,  in the case of pooling by sub-sampling where $P_n$ simply equals the identity mapping), then so is the pooling operation in \eqref{eq:u1} owing to
\begin{align*}
&\| S_n^{d/2}P_n(f)(S_n\cdot)\|_2^2=S^{d}_n\int_{\Rd}|P_n(f)(S_nx)|^2\mathrm dx\\
&=\int_{\Rd}|P_n(f)(x)|^2\mathrm dx=\|P_n(f)\|_2^2=\|f\|_2^2,
\end{align*}
where we employed the change of variables $y=S_nx$, $\frac{\mathrm dy}{\mathrm dx}=S_n^d$. Regarding average pooling, as already mentioned, the operators $P_n(f)=f\ast \phi_n$, $f\in L^2(\Rd)$, $n \in \mathbb{N}$, are, in general, not unitary,  but we still get translation invariance as a consequence of  structural properties, namely translation covariance of the convolution operator combined with unitary dilation according to \eqref{eq:css}.

Finally, we note that in practice in certain applications it is actually translation \textit{covariance} in the sense of $\Phi^n_\Omega(T_tf)=T_t\Phi^n_\Omega(f)$, for all $f\in L^2(\Rd)$ and  $t\in \Rd$, that is desirable, for example, in facial landmark detection where the goal is to estimate the absolute position of facial landmarks in images. In such applications features in the layers closer to the root of the network are more relevant as they are less translation-invariant and more translation-covariant. The reader is referred to \cite{wiatowski2016discrete} where corresponding numerical evidence is provided. We proceed to the formal statement of our translation covariance result.
\begin{corollary}\label{main_inv55}
Let $\Omega=\big( (\Psi_n,M_n,P_n)\big)_{n \in \mathbb{N}}$ be an admissible module-sequence, let $S_n\geq 1$, $n\in \mathbb{N}$, be the pooling factors in \eqref{eq:e1}, and assume that the operators $M_n:L^2(\Rd) \to L^2(\Rd)$ and $P_n:L^2(\Rd) \to L^2(\Rd)$ commute with the translation operator $T_t$ in the sense of \eqref{eq:i1}. If, in addition, there exists a constant $K >0$ (that does not depend on $n$) such that the Fourier transforms $\widehat{\chi_n}$ of the output-generating atoms $\chi_n$ satisfy the decay condition \eqref{decaycondition}, then 
\begin{equation*}\label{eq:vert6}
||| \Phi^n_\Omega(T_tf) - T_t\Phi^n_\Omega(f)|||\leq 2\pi|t|K\big|1/(S_1\dots S_n)-1\big|\VV f \VV_2,
\end{equation*}
for all $f \in L^2(\Rd)$ and $t\in \Rd$.
\end{corollary}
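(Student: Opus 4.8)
The plan is to derive Corollary~\ref{main_inv55} directly from Theorem~\ref{main_inv2} by exploiting the identity \eqref{eq:i2} together with the commutation of $T_t$ with the norm $|||\cdot|||$. First I would observe that since $T_t$ acts as an isometry on $L^2(\Rd)$ (indeed $\VV T_t h\VV_2 = \VV h \VV_2$ for all $h$), it follows that element-wise application of $T_t$ to a feature-set preserves the $|||\cdot|||$-norm, i.e.\ $|||T_t s||| = |||s|||$ for all $s \in (L^2(\Rd))^\QQ$. This observation is the mechanism that converts the translation-\emph{invariance} bound of the theorem into a translation-\emph{covariance} bound.

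The key algebraic step is to insert a telescoping term. I would write
\begin{align*}
||| \Phi^n_\Omega(T_tf) - T_t\Phi^n_\Omega(f)|||
&= |||\, T_{t/(S_1\cdots S_n)}\Phi^n_\Omega(f) - T_t\Phi^n_\Omega(f)\,|||,
\end{align*}
where the equality uses part i) of Theorem~\ref{main_inv2}, namely $\Phi^n_\Omega(T_tf)=T_{t/(S_1\cdots S_n)}\Phi^n_\Omega(f)$. Setting $s:=t/(S_1\cdots S_n)$ and factoring out a common translation, the right-hand side equals $|||\,T_s\big(\Phi^n_\Omega(f) - T_{t-s}\Phi^n_\Omega(f)\big)\,|||$, which by the isometry observation equals $|||\,\Phi^n_\Omega(f) - T_{t-s}\Phi^n_\Omega(f)\,|||$. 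Now I recognize that $\Phi^n_\Omega(f) - T_{t-s}\Phi^n_\Omega(f)$ is, up to relabeling the shift, exactly the quantity controlled by part ii): applying \eqref{eq:vert3} with the signal unchanged and the translation vector $t-s$ in place of $t$, one gets the bound $\tfrac{2\pi|t-s|K}{S_1\cdots S_n}\VV f\VV_2$.

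The final step is to simplify the prefactor. Here $|t-s| = |t - t/(S_1\cdots S_n)| = |t|\,\big|1 - 1/(S_1\cdots S_n)\big|$, so dividing by $S_1\cdots S_n$ and combining gives $2\pi |t| K\,\big|1/(S_1\cdots S_n) - 1\big|\VV f\VV_2$ after noting $\tfrac{1}{S_1\cdots S_n}|1 - 1/(S_1\cdots S_n)| $ must be reconciled with the stated factor; I would double-check the bookkeeping, since the cleanest route is probably to apply part ii) to the shift $t-s$ directly and observe that the theorem's bound for shift $u$ reads $\tfrac{2\pi|u|K}{S_1\cdots S_n}\VV f\VV_2$, whereas covariance naturally wants the un-divided $|t||1/(S_1\cdots S_n)-1|$. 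The main obstacle, and the only genuinely delicate point, is this constant-chasing: I must make sure I am invoking the correct form of the invariance estimate on the correctly-scaled shift so that the pooling factor in the denominator cancels against the $|t-s|$ expansion and reproduces exactly $|1/(S_1\cdots S_n)-1|$ rather than an extra factor of $1/(S_1\cdots S_n)$. Everything else is a mechanical consequence of the isometry property of $T_t$ and the linearity of the telescoping manipulation.
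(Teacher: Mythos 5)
Your route is genuinely different from the paper's and, once one bookkeeping error is repaired, it works. The paper does not reduce the corollary to Theorem \ref{main_inv2}; it reruns the frequency-domain argument: after the same first step $\Phi^n_\Omega(T_tf)=T_{t/(S_1\cdots S_n)}\Phi^n_\Omega(f)$, it applies Parseval's formula to $\sum_{q\in\Lambda_1^n}\VV (T_{t/(S_1\cdots S_n)}-T_t)(U[q]f\ast\chi_n)\VV_2^2$, bounds the resulting modulation difference via $|e^{-2\pi i x}-e^{-2\pi i y}|\leq 2\pi|x-y|$, the Cauchy-Schwarz inequality, and the decay condition \eqref{decaycondition}, and then sums over paths exactly as in the proof of Theorem \ref{main_inv2}. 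Your proposal replaces all of that by the isometry of translation on $L^2(\Rd)$ plus a second invocation of the theorem itself, which is more modular and avoids repeating any frequency-domain computation.

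However, the step you flag as delicate is, as first written, not merely delicate but incorrect, and it is the crux. Part ii) of Theorem \ref{main_inv2} applied with shift $t-s$ bounds $|||\Phi^n_\Omega(T_{t-s}f)-\Phi^n_\Omega(f)|||$, which by part i) equals $|||T_{(t-s)/(S_1\cdots S_n)}\Phi^n_\Omega(f)-\Phi^n_\Omega(f)|||$; this is \emph{not} the quantity $|||T_{t-s}\Phi^n_\Omega(f)-\Phi^n_\Omega(f)|||$ you need, and had the misapplication been legitimate it would have produced a bound smaller than the correct one by the spurious factor $1/(S_1\cdots S_n)$ you noticed. The repair is to invoke part ii) with the rescaled shift $u:=(S_1\cdots S_n)(t-s)=(S_1\cdots S_n-1)\,t$: part i) then gives $\Phi^n_\Omega(T_uf)=T_{u/(S_1\cdots S_n)}\Phi^n_\Omega(f)=T_{t-s}\Phi^n_\Omega(f)$, so part ii) yields
\begin{equation*}
|||T_{t-s}\Phi^n_\Omega(f)-\Phi^n_\Omega(f)|||\leq \frac{2\pi|u|K}{S_1\cdots S_n}\VV f\VV_2=2\pi|t|K\big|1-1/(S_1\cdots S_n)\big|\VV f\VV_2,
\end{equation*}
which is exactly the claimed constant: the factor $S_1\cdots S_n$ hidden in $|u|$ cancels the denominator in \eqref{eq:vert3}. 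With this substitution your argument is complete, and in my view cleaner than the paper's.
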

\begin{proof}
The proof is given in Appendix \ref{app:covcov}.
\end{proof}  
Corollary \ref{main_inv55} shows that in the absence of pooling, i.e., taking $S_n=1$, for all $n \in \mathbb{N}$, leads to full translation covariance in every network layer. This proves that pooling is necessary to get vertical translation invariance as otherwise the features remain fully translation-covariant irrespective of the network  depth. Finally, we note that scattering networks \cite{MallatS} (which do not employ pooling operators, see Section \ref{architecture}) are rendered horizontally translation-invariant by letting the wavelet scale parameter $J\to\infty$. 

\subsection{Deformation sensitivity bound}\label{sec:defmstab}
The next result provides a bound---for band-limited signals $f\in L^2_R(\Rd)$---on the sensitivity of the feature extractor $\Phi_\Omega$  w.r.t. time-frequency deformations of the form
\begin{equation*}\label{def:deform}
(F_{\tau,\omega} f)(x):=e^{2\pi i \omega(x)}f(x-\tau(x)).
\end{equation*}
This class of deformations encompasses non-linear distortions $f(x-\tau(x))$ as illustrated in Fig. \ref{fig:deform}, and modulation-like deformations $e^{2\pi i \omega(x)}f(x)$ which occur, e.g., if the  signal $f$ is subject to an undesired modulation and we therefore have access to a bandpass version of $f$ only. 
 
The deformation sensitivity bound we derive is signal-class specific in the sense of applying to input signals belonging to a particular class, here band-limited functions. The proof technique we develop applies, however, to all signal classes that exhibit ``inherent'' deformation insensitivity in the following sense. 

\begin{definition}
A signal class $\mathcal{C}\subseteq L^2(\Rd)$ is called deformation-insensitive if there exist $\alpha,\beta, C>0$ such that for all $f\in \mathcal{C}$,  $\omega \in C(\Rd,\RR)$, and  (possibly non-linear) $\tau \in C^1(\Rd,\Rd)$ with $\VV D \tau \VV_\infty\leq\frac{1}{2d}$, it holds that
\begin{equation}\label{particulaaa}
\VV f-F_{\tau,\omega} f\VV_2 \leq C \big(\VV \tau \VV_{\infty}^{\alpha} + \VV \omega \VV_{\infty}^{\beta} \big).
\end{equation}
\end{definition}
The constant $C>0$ and the exponents $\alpha,\beta>0$ in \eqref{particulaaa}  depend on the particular signal class $\mathcal{C}$. Examples of deformation-insensitive signal classes are the class of $R$-band-limited functions (see Proposition \ref{uppersuper} in Appendix \ref{proof:uppersuper}), the class of cartoon functions \cite[Proposition 1]{grohs_wiatowski},  and the class of Lipschitz functions \cite[Lemma 1]{grohs_wiatowski}. While a deformation sensitivity bound that applies to all $f\in L^2(\Rd)$ would  be desirable, the example in Fig. \ref{fig:sigclass} illustrates the difficulty underlying   this desideratum. Specifically, we can see in Fig. \ref{fig:sigclass} that for given $\tau(x)$ and $\omega(x)$ the impact of the deformation induced by $e^{2\pi i \omega(x)}f(x-\tau(x))$ can depend drastically on the function $f\in L^2(\Rd)$ itself. The deformation stability bound  \eqref{mallatbound} for scattering networks reported in \cite[Theorem 2.12]{MallatS} applies to a signal class as well, characterized, albeit implicitly, through \cite[Eq. 2.46]{MallatS} and depending on the mother wavelet and the (modulus) non-linearity.

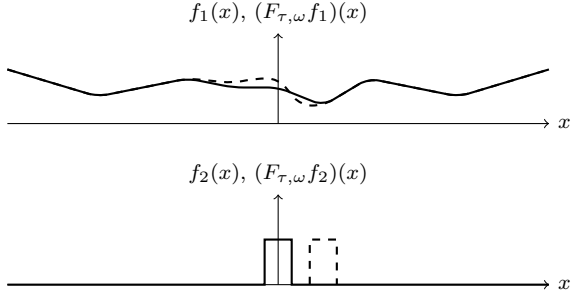
\begin{figure}[t]
\begin{center}
    \begin{tikzpicture}[scale=1.2] 
	\tikzstyle{inti}=[draw=none,fill=none]

	\draw[->] (-3,0) -- (3,0) node[right] {\footnotesize $x$};
	\draw[->] (0,0) -- (0,1) node[above] {\footnotesize $f_1(x), \,(F_{\tau,\omega} f_1)(x)$};
	
		 \draw[thick,rounded corners=4pt]	
	(-3,.6)--(-2,0.3)--(-1.,0.5)--(-.5,0.4)--(0,0.4)--(.5,0.2)--(1.,0.5)--(2,0.3)--(3,.6);
			 \draw[thick,dashed, rounded corners=4pt]	
		(-3,.6)--(-2,0.3)--(-1.,0.5)--(-.5,0.45)--(-.25,0.5)--(0,0.5)--(.25,0.2)--(.5,.2)--(1.,0.5)--(2,0.3)--(3,.6);		 
	\end{tikzpicture}\\[1.5ex]
    \begin{tikzpicture}[scale=1.2] 
    
	\tikzstyle{inti}=[draw=none,fill=none]

	\draw[->] (-3,0) -- (3,0) node[right] {\footnotesize $x$};
	\draw[->] (0,0) -- (0,1) node[above] {\footnotesize $f_2(x), \,(F_{\tau,\omega} f_2)(x)$};
	
		 \draw[thick]	
	(-3,0)--(-.15,0)--(-.15,.5)--(.15,.5)--(.15,0)--(3,0);
		 \draw[thick,dashed]	
	(.35,0)--(.35,.5)--(.65,.5)--(.65,0);
	\end{tikzpicture}
\end{center}
\caption{Impact of the deformation $F_{\tau, \omega}$, with $\tau(x)=\frac{1}{2}\,e^{-x^2}$ and $\omega=0$, on the functions $f_1\in\mathcal{C}_1\subseteq L^2(\mathbb{R})$ and $f_2\in\mathcal{C}_2\subseteq L^2(\mathbb{R})$. The signal class $\mathcal{C}_1$ consists of smooth,  slowly varying functions (e.g., band-limited functions), and $\mathcal{C}_2$ consists of compactly supported functions that exhibit discontinuities (e.g., cartoon functions \cite{Cartoon}). We observe that $f_1$, unlike $f_2$, is affected only mildly by $F_{\tau,\omega}$. The amount of deformation induced  therefore depends drastically on the specific $f\in L^2(\mathbb{R})$.}
		\label{fig:sigclass}
\end{figure}
Our signal-class specific deformation sensitivity bound is based on the following two ingredients. First, we establish---in Proposition \ref{summary} in Appendix \ref{proof:nonexpan}---that the feature extractor $\Phi_\Omega$ is Lipschitz-continuous with Lipschitz constant $L_\Omega=1$, i.e.,
\begin{equation}\label{eq:p22ab}
||| \Phi_\Omega(f) -\Phi_\Omega(h)||| \leq \VV f-h \VV_2, \hspace{0.5cm} \forall f,h \in L^2(\Rd),
\end{equation}
where, thanks to the admissibility condition \eqref{weak_admiss2}, the Lipschitz constant $L_\Omega=1$ in \eqref{eq:p22ab} is completely independent of the frame upper bounds $B_n$ and the Lipschitz-constants $L_n$ and $R_n$ of $M_n$ and $P_n$, respectively. Second, we derive---in Proposition \ref{uppersuper} in Appendix \ref{proof:uppersuper}---an upper bound on the deformation error $\VV f-F_{\tau,\omega} f\VV_2$ for $R$-band-limited functions, i.e., $f \in L^2_R(\Rd)$, according to  
\begin{equation}\label{eq:p23ab}
\VV f-F_{\tau,\omega} f\VV_2 \leq C\big(R\VV \tau \VV_{\infty} + \VV \omega \VV_\infty \big)\VV f\VV_2.
\end{equation}
The  deformation sensitivity bound for the feature extractor is then obtained by setting  $h=F_{\tau,\omega}f$ in \eqref{eq:p22ab} and using  \eqref{eq:p23ab} (see Appendix \ref{Appendix} for the corresponding technical details). This ``decoupling''  into Lipschitz continuity of $\Phi_\Omega$ and a deformation sensitivity bound for the signal class  under consideration (here, band-limited functions) has important practical ramifications as it shows that whenever we have  a deformation sensitivity bound for the signal class, we automatically get a deformation sensitivity bound for the  feature extractor thanks to its Lipschitz continuity. The same approach was used in \cite{grohs_wiatowski} to derive deformation sensitivity bounds for cartoon functions and for Lipschitz functions. 

Lipschitz continuity of $\Phi_\Omega$  according to \eqref{eq:p22ab} also guarantees that pairwise distances in the input signal space do not increase through feature extraction. An immediate consequence  is  robustness of the feature extractor w.r.t. additive noise $\eta \in L^2(\Rd)$ in the sense of
$$||| \Phi_\Omega(f+\eta) -\Phi_\Omega(f)||| \leq  \| \eta\|_2, \quad \forall f \in L^2(\Rd).$$

We proceed to the formal statement of our deformation sensitivity result.

\begin{theorem}\label{mainmain}
 Let $\Omega=\big( (\Psi_n,M_n,P_n)\big)_{n \in \mathbb{N}}$ be an admissible module-sequence. There exists a constant $C>0$ (that does not depend on $\Omega$) such that for all $f \in L^2_R(\Rd)$,  $\omega \in C(\Rd,\RR)$, and  $\tau \in C^1(\Rd,\Rd)$ with $\VV D \tau \VV_\infty\leq\frac{1}{2d}$, the feature extractor $\Phi_\Omega$ satisfies
\begin{equation}\label{mainmainmain}
||| \Phi_\Omega(F_{\tau,\omega} f)-\Phi_\Omega(f) |||\leq C \big( R\VV \tau \VV_\infty + \VV \omega\VV_\infty\big) \VV f \VV_2.
\end{equation}
\end{theorem}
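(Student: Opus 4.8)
The plan is to exploit the \emph{decoupling} advertised just before the statement: chain the Lipschitz continuity of $\Phi_\Omega$ (Proposition \ref{summary}, i.e. \eqref{eq:p22ab}) with the band-limited deformation sensitivity estimate (Proposition \ref{uppersuper}, i.e. \eqref{eq:p23ab}). Since both ingredients may be assumed, the proof of \eqref{mainmainmain} collapses to a short composition argument, and the real analytic labor is quarantined in the two propositions.

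First I would check that the substitution is legitimate, namely that $F_{\tau,\omega}f \in L^2(\Rd)$ so that $\Phi_\Omega$ can be applied to it. The modulation factor $e^{2\pi i \omega(x)}$ has unit modulus and hence leaves the integrand invariant, while for the warping $x\mapsto x-\tau(x)$ the hypothesis $\VV D\tau\VV_\infty \leq \tfrac{1}{2d}$ gives $|\det(E-(D\tau)(x))|\geq 1-d\VV D\tau\VV_\infty \geq \tfrac12$. The change of variables $y=x-\tau(x)$ is therefore admissible and yields $\VV F_{\tau,\omega}f\VV_2 \leq \sqrt{2}\,\VV f\VV_2<\infty$. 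This is exactly the nonsingularity that Proposition \ref{uppersuper} will already have exploited, so no new work is needed here.

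Next I would set $h:=F_{\tau,\omega}f$ in \eqref{eq:p22ab}, obtaining $||| \Phi_\Omega(F_{\tau,\omega}f)-\Phi_\Omega(f)||| \leq \VV F_{\tau,\omega}f - f\VV_2$, and then invoke \eqref{eq:p23ab}, which applies precisely because $f\in L^2_R(\Rd)$, to bound the right-hand side by $C\big(R\VV\tau\VV_\infty + \VV\omega\VV_\infty\big)\VV f\VV_2$. Concatenating the two inequalities delivers \eqref{mainmainmain} with the same constant $C$ as in Proposition \ref{uppersuper}.

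The one conceptual point I expect to be the genuine obstacle is the claimed independence of $C$ from the module-sequence $\Omega$, which does not follow from the mere existence of some Lipschitz bound. It hinges on the fact that the admissibility condition \eqref{weak_admiss2} forces the Lipschitz constant in \eqref{eq:p22ab} to equal \emph{exactly} $1$, independently of the frame bounds $B_n$ and of $L_n,R_n$; hence the only surviving constant is the one in \eqref{eq:p23ab}, which is a statement purely about $L^2_R(\Rd)$ and thus free of $\Omega$. Relative to the two propositions, the present theorem is essentially a triangle-type composition, so once those are in hand the argument is routine.
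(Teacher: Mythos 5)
Your proposal is correct and coincides with the paper's own proof: both verify $F_{\tau,\omega}f\in L^2(\Rd)$ via the change of variables $u=x-\tau(x)$ with $|\det(E-(D\tau)(x))|\geq 1-d\VV D\tau\VV_\infty\geq 1/2$, then set $h=F_{\tau,\omega}f$ in the Lipschitz bound \eqref{eq:p22ab} of Proposition \ref{summary} and replace the resulting right-hand side by the band-limited bound \eqref{eq:p23ab} of Proposition \ref{uppersuper}. Your closing observation—that the $\Omega$-independence of $C$ rests on the admissibility condition forcing $L_\Omega=1$ exactly, leaving only the constant from Proposition \ref{uppersuper}, which concerns $L^2_R(\Rd)$ alone—is precisely the point the paper makes as well.
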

\begin{proof}
The proof is given in Appendix \ref{Appendix}. 
\end{proof}
First, we note that the bound in \eqref{mainmainmain} holds for $\tau$ with sufficiently ``small''  Jacobian matrix, i.e., as long as $\VV D \tau \VV_\infty\leq\frac{1}{2d}$. We can think of this condition on the Jacobian matrix as follows\footnote{The ensuing argument is taken from \cite{grohs_wiatowski}.}: Let $f$ be an image of the handwritten digit ``$5$'' (see Fig. \ref{fig:deform} (a)). Then, $\{ F_{\tau,\omega} f \ | \ \| D\tau \|_\infty<\frac{1}{2d}\}$ is a collection of images of the handwritten digit ``$5$'', where each $F_{\tau,\omega} f$ models an image that may be generated, e.g., based on a different handwriting style (see Figs. \ref{fig:deform} (b) and (c)). The condition $\| D\tau \|_\infty<\frac{1}{2d}$ now imposes a quantitative limit on the amount of deformation tolerated. The deformation sensitivity bound \eqref{mainmainmain} provides a limit on how much the features corresponding to the images in the set $\{ F_{\tau,\omega} f \ | \ \| D\tau \|_\infty<\frac{1}{2d}\}$ can differ. The strength of  Theorem \ref{mainmain} derives itself from the fact that the only condition on the underlying module-sequence $\Omega$ needed  is admissibility according to \eqref{weak_admiss2}, which as outlined in Section \ref{properties}, can easily be obtained by normalizing the frame elements of $\Psi_n$, for all $n\in \mathbb{N}$, appropriately. This normalization does not have an impact on the constant $C$ in \eqref{mainmainmain}. More specifically, $C$ is shown in \eqref{stab_const} to be completely independent of $\Omega$. All this is thanks to the decoupling technique used to prove Theorem \ref{mainmain} being completely independent of the structures of the frames $\Psi_n$ and of the specific forms of the Lipschitz-continuous operators $M_n$ and $P_n$. The deformation sensitivity bound \eqref{mainmainmain} is very general in the sense of applying to all Lipschitz-continuous (linear or non-linear) mappings $\Phi$, not only those generated by  DCNNs. 

The bound \eqref{mallatbound} for scattering networks reported in \cite[Theorem 2.12]{MallatS} depends upon first-order $ (D \tau) $ and second-order $ (D^2 \tau) $ derivatives of $\tau$. In contrast, our bound \eqref{mainmainmain} depends on $(D\tau)$ implicitly only as we need to impose the condition $\| D \tau \|_\infty\leq\frac{1}{2d}$ for the bound to hold\footnote{We note that the condition $\| D \tau \|_\infty\leq\frac{1}{2d}$ is needed for the bound \eqref{mallatbound} to hold as well.}. We honor this difference by referring to \eqref{mallatbound} as deformation \textit{stability} bound and to our bound \eqref{mainmainmain}  as deformation \textit{sensitivity} bound.

The dependence of the upper bound in \eqref{mainmainmain} on the bandwidth $R$ reflects the intuition that the deformation sensitivity bound should depend on the input signal class ``description complexity''. Many signals of practical significance (e.g., natural images) are, however, either not band-limited due to the presence of sharp (and possibly curved) edges or exhibit large bandwidths. In the latter case, the  bound \eqref{mainmainmain} is effectively rendered void owing to its linear dependence on $R$. We refer the reader to \cite{grohs_wiatowski} where deformation sensitivity bounds for non-smooth  signals were established. Specifically, the main  contributions in \cite{grohs_wiatowski} are deformation sensitivity bounds---again obtained through decoupling---for non-linear deformations $(F_\tau f)(x)=f(x-\tau(x))$ according to
\begin{equation}\label{particul}
\VV f-F_{\tau} f\VV_2 \leq C \VV \tau \VV_{\infty}^{\alpha}, \hspace{0.5cm} \forall f \in \mathcal{C}\subseteq L^2(\Rd),
\end{equation}
for the signal classes $\mathcal{C}\subseteq L^2(\Rd)$ of cartoon functions \cite{Cartoon} and for Lipschitz-continuous functions. The constant $C>0$ and the exponent $\alpha>0$ in \eqref{particul}  depend on the particular signal class $\mathcal{C}$ and are specified in \cite{grohs_wiatowski}. As the vertical translation invariance result in Theorem \ref{main_inv2} applies to all $f\in L^2(\Rd)$, the results established in the present paper and in \cite{grohs_wiatowski} taken together  show that vertical translation invariance and limited sensitivity to  deformations---for signal classes with inherent deformation insensitivity---are guaran\-teed by the feature extraction network structure per se rather than the specific convolution kernels, non-linearities, and pooling operators.

Finally, the deformation stability bound \eqref{mallatbound} for scattering networks reported in \cite[Theorem 2.12]{MallatS} applies to the space 
\begin{equation}\label{HMN}
H_{W}:=\Big\{ f \in L^2(\Rd) \ \Big| \ \VV f \VV_{H_{W}}<\infty \Big\},
\end{equation}
where 
$$
\ \VV f \VV_{H_{W}}:=\sum_{n=0}^\infty\Big(\sum_{q \in (\Lambda_{W})_1^n} \VV U[q]f\VV_2^2\Big)^{1/2}
$$
and $(\Lambda_{W})_1^n$ denotes the set of paths $q=\big(\lambda^{^{(j)}},\dots,\lambda^{^{(p)}}\big)$ of length $n$ with $\lambda^{^{(j)}},\dots,\lambda^{^{(p)}}\in \Lambda_{W}$. While \cite[p. 1350]{MallatS} cites numerical evidence on the series $\sum_{q \in (\Lambda_{W})_1^n} \VV U[q]f\VV_2^2$ being finite for a large class of signals $f \in L^2(\Rd)$, it seems difficult to establish this analytically, let alone to show that $$\sum_{n=0}^\infty\Big(\sum_{q \in (\Lambda_{W})_1^n} \VV U[q]f\VV_2^2\Big)^{1/2}<\infty.$$In contrast, the deformation sensitivity bound \eqref{mainmainmain} applies \textit{provably} to the space of $R$-band-limited functions $L^2_R(\Rd)$. Finally, the space $H_W$ in \eqref{HMN} depends on the wavelet frame atoms $\{ \psi_\lambda\}_{\lambda \in \Lambda_{W} }$ and the (modulus) non-linearity, and thereby on the underlying signal transform, whereas $L^2_R(\Rd)$ is, trivially, independent of the module-sequence $\Omega$.
\section{Final remarks and outlook}
 It is interesting to note that the frame lower bounds $A_n>0$ of the semi-discrete frames $\Psi_n$ affect neither the vertical translation invariance result in Theorem \ref{main_inv2} nor the deformation sensitivity bound in Theorem \ref{mainmain}. In fact, the entire theory in this paper carries through as long as the collections $\Psi_{n}=\{T_b I g_{\lambda_n} \}_{b\in \Rd,\la_n \in \Lambda_n}$, for all $n \in \mathbb{N}$, satisfy the Bessel property 
 \begin{equation*}\label{bessel}
\sum_{\lambda_n\in \Lambda_n}\int_{\Rd}| \langle f,T_bIg_{\lambda_n}\rangle|^2\mathrm db  = \sum_{\lambda_n\in \Lambda_n} \VV f \ast g_{\lambda_n} \VV_2^2 \leq B_n \VV f \VV_2^2, 
 \end{equation*}
 for all $f \in L^2(\Rd)$ for some $B_n>0$, which, by Proposition \ref{freqcoverthm2}, is equivalent to \begin{equation}\label{realman}\sum_{\lambda_n \in \Lambda_n} |\widehat{g_{\lambda_n}}(\omega)|^2\leq B_n,\hspace{0.5cm} a.e. \ \omega \in \Rd.\end{equation}Pre-specified unstructured filters  \cite{Jarrett,hierachies} and learned filters \cite{Huang,Jarrett,hierachies,Poultney} are therefore covered by our theory as long as \eqref{realman} is satisfied. In classical frame theory $A_n>0$ guarantees completeness of the set $\Psi_{n}=\{T_b I g_{\lambda_n} \}_{b\in \Rd,\la_n \in \Lambda_n}$ for the signal space under consideration, here $L^2(\Rd)$. The absence of a frame lower bound $A_n>0$ therefore translates into a lack of completeness of $\Psi_n$, which may result in the frame coefficients $\langle f,T_bIg_{\lambda_n}\rangle=(f\ast g_{\lambda_n})(b)$, $(\lambda_n,b) \in \Lambda_n\times \Rd$, not containing all essential features of the signal $f$. This will, in general, have a (possibly significant) impact on practical feature extraction performance which is why ensuring the entire frame property \eqref{condii} is prudent. Interestingly, satisfying the frame property \eqref{condii} for all $\Psi_n$, $n\in \mathbb{Z}$, does, however, not guarantee that the feature extractor $\Phi_\Omega$ has a trivial null-space, i.e., $\Phi_\Omega(f)= 0$ if and only if $f=0$. We refer the reader to \cite[Appendix A]{WiatowskiEnergy} for an  example of a feature extractor with non-trivial null-space.

\appendices
\section{Semi-discrete frames}\label{sec:sdf}
This appendix gives a brief review of the theory of semi-discrete frames. A list of structured example frames of interest in the context of this paper is provided in Appendix \ref{sec:sdf0} for the $1$-D case, and in Appendix \ref{sec:sdf2} for the $2$-D case. Semi-discrete frames are instances of \textit{conti\-nuous} frames \cite{Antoine,Kaiser}, and appear in the literature, e.g., in the context of translation-covariant signal decompositions \cite{MallatEdges,UnserWavelets,Vandergheynst}, and as an intermediate step in the construction of various \textit{fully-discrete} frames  \cite{CandesDonoho2,ShearletsIntro,Grohs_transport,Grohs_alpha}. We first collect some basic results on semi-discrete frames. 
\begin{definition}\label{defn:contframe}
Let  $\{ g_\la \}_{\la \in \Lambda}\subseteq L^1(\Rd)\cap L^2(\Rd)$  be a set of functions indexed by a countable set $\Lambda$. The collection $$\Psi_\Lambda:=\{ T_b Ig_\la\}_{(\la,b) \in \Lambda \times \Rd}$$ is a semi-discrete frame for $L^2(\Rd)$ if there exist constants $A,B >0$ such that 
\begin{align}
&A \VV f \VV_2^2 \leq \sum_{\la \in \Lambda}\int_{\Rd} |\langle f, T_bIg_\lambda \rangle|^2  \mathrm db \nonumber\\
&= \sum_{\lambda\in \Lambda} \VV f \ast g_{\lambda} \VV_2^2 \leq B \VV f \VV_2^2, \hspace{0.5cm} \forall f \in L^2(\Rd).\label{condii}
\end{align}
The functions $\{ g_\la \}_{\la \in \Lambda}$ are called the atoms of the frame $\Psi_\Lambda$. When $A=B$ the frame is said to be tight. A tight frame with frame bound $A=1$ is called a Parseval frame.
\end{definition}
The frame operator associated with the semi-discrete frame $\Psi_\Lambda$ is defined in the weak sense as $S_{\Lambda}:L^2(\Rd) \to L^2(\Rd)$,
\begin{align}
S_{\Lambda}f:=&\sum_{\la \in \Lambda}\int_{\Rd} \langle f, T_bIg_\lambda \rangle (T_bIg_{\lambda}) \, \mathrm db\nonumber\\
=& \Big( \sum_{\la \in \Lambda} g_\la \ast I g_\la\Big)\ast f,\label{eq:sdfo}
\end{align}
where $\langle f, T_bIg_\lambda \rangle=(f\ast g_\lambda)(b),$ $ (\lambda,b)\in \Lambda\times\Rd,$ are called the frame coefficients.
$S_\Lambda$ is a bounded, positive, and boundedly invertible operator \cite{Antoine}.

The reader might want to think of semi-discrete frames as shift-invariant frames \cite{Jansen,RonShen} with a continuous translation parameter, and of the countable index set $\Lambda$ as labeling a collection of scales, directions, or frequency-shifts, hence the terminology \textit{semi-discrete}. 
For instance, scattering networks are based on a (single) semi-discrete wavelet frame, where the atoms $\{ g_\la\}_{\la \in \Lambda_{\text{W}}}$ are indexed by the set $\Lambda_{\text{W}}:=\big\{ (-J,0)\big\}\cup\big\{ (j,k) \ | \ j \in \mathbb{Z} \text{ with }j>-J, \ k \in \{ 0,\dots,K-1\} \big\}$ labeling a collection of scales $j$ and directions $k$.

The following result gives a so-called Littlewood-Paley condition \cite{Paley,Daubechies} for the collection $\Psi_\Lambda= \{ T_b Ig_\la\}_{(\la,b) \in \Lambda\times \Rd}$ to form a semi-discrete frame.
\begin{proposition}\label{freqcoverthm2}
Let $\Lambda$ be a countable set. The collection $\Psi_\Lambda= \{ T_b Ig_\la\}_{(\la,b) \in \Lambda\times \Rd}$ with atoms  $\{ g_\la \}_{\la \in \Lambda}\subseteq L^1(\Rd) \cap L^2(\Rd)$ is a semi-discrete frame for $L^2(\Rd)$ with frame bounds $A,B>0$ if and only if
\begin{equation}\label{freqcover}
A\leq \sum_{\lambda \in \Lambda} |\widehat{g_{\lambda}}(\omega)|^2\leq B, \hspace{0.5cm} a.e. \ \omega \in \Rd.
\end{equation}
\end{proposition}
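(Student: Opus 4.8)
\emph{Proof proposal.} The plan is to transport the frame inequality \eqref{condii} to the Fourier domain, where convolution with each atom becomes multiplication by $\widehat{g_\la}$, the sum over $\la$ collapses into the single weight $G(\omega):=\sum_{\la\in\Lambda}|\widehat{g_\la}(\omega)|^2$, and the two-sided bound \eqref{condii} turns into a weighted $L^2$-norm inequality that is classically equivalent to the pointwise sandwich \eqref{freqcover}.

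First I would rewrite the central quantity in \eqref{condii}. For fixed $\la$, since $f\in L^2(\Rd)$ and $g_\la\in L^1(\Rd)\cap L^2(\Rd)$, Young's inequality gives $f\ast g_\la\in L^2(\Rd)$ and the convolution theorem (extended to the $L^2\ast L^1$ setting in the usual way) yields $\widehat{f\ast g_\la}=\widehat{f}\,\widehat{g_\la}$. Plancherel's theorem then gives $\VV f\ast g_\la\VV_2^2=\int_{\Rd}|\widehat{f}(\omega)|^2|\widehat{g_\la}(\omega)|^2\dom$. Summing over $\la\in\Lambda$ and applying Tonelli's theorem—legitimate because each summand is non-negative and measurable, $\widehat{g_\la}$ being continuous as the Fourier transform of an $L^1$ function—permits interchanging the sum and the integral, so that
$$\sum_{\la\in\Lambda}\VV f\ast g_\la\VV_2^2=\int_{\Rd}|\widehat{f}(\omega)|^2\,G(\omega)\dom,\qquad G(\omega):=\sum_{\la\in\Lambda}|\widehat{g_\la}(\omega)|^2 .$$
Using $\VV f\VV_2^2=\int_{\Rd}|\widehat{f}(\omega)|^2\dom$, the frame condition \eqref{condii} becomes the requirement that $A\int|\widehat{f}|^2\leq\int|\widehat{f}|^2 G\leq B\int|\widehat{f}|^2$ hold for all $f\in L^2(\Rd)$. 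Since the Fourier transform maps $L^2(\Rd)$ bijectively onto itself, $\widehat{f}$ ranges over all of $L^2(\Rd)$, so this is equivalent to the weighted inequality $A\VV h\VV_2^2\leq\int_{\Rd}|h(\omega)|^2 G(\omega)\dom\leq B\VV h\VV_2^2$ for every $h\in L^2(\Rd)$. The implication ``$A\leq G\leq B$ a.e.\ $\Rightarrow$ weighted inequality'' is immediate: multiply the pointwise bounds by $|h(\omega)|^2\geq 0$ and integrate.

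The remaining work is the converse, which I would prove by contradiction via localized indicator test functions. Suppose the upper bound in \eqref{freqcover} fails, i.e.\ $\{\omega:G(\omega)>B\}$ has positive measure; then for some $\eps>0$ the set $E:=\{\omega:G(\omega)>B+\eps\}$ has positive measure, and intersecting with a ball $B_r(0)$ produces a set $E_r:=E\cap B_r(0)$ of positive and finite measure for suitable $r$. Taking $h=\mathbf{1}_{E_r}\in L^2(\Rd)$ gives $\int|h|^2 G=\int_{E_r}G\geq (B+\eps)|E_r|>B|E_r|=B\VV h\VV_2^2$, contradicting the upper bound of the weighted inequality. A symmetric argument applied to $\{\omega:G(\omega)<A\}$ handles the lower bound, completing the equivalence.

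I expect the genuine content to lie entirely in this converse localization step, while everything else is bookkeeping. The only points demanding care are the measurability of $G$ (a countable sum of non-negative measurable functions, valued in $[0,\infty]$) and its a.e.\ finiteness (which is itself a consequence of the upper frame bound), together with the justification of the Fourier-domain manipulations in the mixed $L^1\cap L^2$ / $L^2$ setting; none of these poses a real obstacle, so the proof is short once the domain transfer is set up.
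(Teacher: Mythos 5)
Your proposal is correct and takes essentially the same route as the paper, which handles this proposition by citing the standard proof (\cite[Theorem 5.11]{MallatW}): conjugation by the Fourier transform (Plancherel plus the $L^2\ast L^1$ convolution theorem and Tonelli) reduces the frame inequality to the two-sided bound for the multiplier $\sum_{\lambda\in\Lambda}|\widehat{g_\lambda}|^2$, which is precisely the unitary-equivalence viewpoint stated in the paper's remark following the proposition. Your converse via indicator functions of truncated super-/sub-level sets is the standard completion of that argument, and I see no gaps in it.
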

\begin{proof}
The proof is standard and can be found, e.g., in \cite[Theorem 5.11]{MallatW}. 
\end{proof}
\begin{remark}
What is behind Proposition \ref{freqcoverthm2} is a result on the unitary equivalence between operators \cite[Definition 5.19.3]{Sell}. Specifically, Proposition \ref{freqcoverthm2} follows from the fact  that the multiplier $\sum_{\lambda \in \Lambda} |\widehat{g_{\lambda}}|^2$ is unitarily equivalent to the frame operator $S_\Lambda$ in \eqref{eq:sdfo} according to  
$$
\mathcal{F} S_\Lambda \mathcal{F}^{\,-1}=\sum_{\lambda \in \Lambda} |\widehat{g_{\lambda}}|^2,
$$
where $\mathcal{F}:L^2(\Rd) \to L^2(\Rd)$ denotes the Fourier transform. 
We refer the interested reader to \cite{HelmutTSP} where the framework of unitary equivalence was formalized in the context of shift-invariant frames for $\ell^2(\mathbb{Z})$.
\end{remark}
  \begin{figure*}
\begin{center}
\begin{tikzpicture}


	\begin{scope}[scale=.8]
	
			\pgfsetfillopacity{.7}
	\filldraw[fill=black!5!white, draw=black]
   		(-10.5,-0.5)--(-10.5,0.5)--(-9.5,0.5)--(-9.5,-0.5)--(-10.5,-0.5);
	
	\filldraw[fill=black!30!white, draw=black,xshift=-10cm]
   		(0.5,0.5)--(0.5,1)--(1,1)--(1,0.5)--(0.5,0.5);
	\filldraw[fill=black!30!white, draw=black,xshift=-10cm]
   		(-0.5,-0.5)--(-0.5,-1)--(-1,-1)--(-1,-0.5)--(-0.5,-0.5);
	\filldraw[fill=black!30!white, draw=black,xshift=-10cm]
   		(0.5,-0.5)--(1,-0.5)--(1,-1)--(.5,-1)--(0.5,-0.5);
	\filldraw[fill=black!30!white, draw=black,xshift=-10cm]
   		(-0.5,0.5)--(-0.5,1)--(-1,1)--(-1,0.5)--(-0.5,0.5);
	
	\filldraw[fill=black!40!white, draw=black,xshift=-10cm]
   		(1,1)--(1,2)--(2,2)--(2,1)--(1,1);
	\filldraw[fill=black!40!white, draw=black,xshift=-10cm]
   		(-1,-1)--(-1,-2)--(-2,-2)--(-2,-1)--(-1,-1);
	\filldraw[fill=black!40!white, draw=black,xshift=-10cm]
   		(1,-1)--(2,-1)--(2,-2)--(1,-2)--(1,-2);
	\filldraw[fill=black!40!white, draw=black,xshift=-10cm]
   		(-1,1)--(-1,2)--(-2,2)--(-2,1)--(-1,1);
	
	\filldraw[fill=black!15!white, draw=black,xshift=-10cm]
   		(.5,.5)--(1,.5)--(1,-.5)--(.5,-.5)--(.5,.5);
	\filldraw[fill=black!25!white, draw=black,xshift=-10cm]
   		(.5,-.5)--(.5,-1)--(-.5,-1)--(-.5,-.5)--(.5,-.5);
	\filldraw[fill=black!15!white, draw=black,xshift=-10cm]
   		(-.5,-.5)--(-.5,.5)--(-1,.5)--(-1,-.5)--(-.5,-.5);
	\filldraw[fill=black!25!white, draw=black,xshift=-10cm]
   		(.5,.5)--(.5,1)--(-.5,1)--(-.5,0.5)--(.5,.5);
	
	\filldraw[fill=black!45!white, draw=black,xshift=-10cm]
   		(1,1)--(2,1)--(2,-1)--(1,-1)--(1,1);
	\filldraw[fill=black!55!white, draw=black,xshift=-10cm]
   		(1,-1)--(1,-2)--(-1,-2)--(-1,-1)--(1,-1);
	\filldraw[fill=black!45!white, draw=black,xshift=-10cm]
   		(-1,-1)--(-1,1)--(-2,1)--(-2,-1)--(-1,-1);
	\filldraw[fill=black!55!white, draw=black,xshift=-10cm]
   		(1,1)--(1,2)--(-1,2)--(-1,1)--(1,1);
		
	\draw[->,dashed] (-12.5,0) -- (-7.5,0) node[right] {$\omega_1$};
	\draw[->,dashed] (-10,-2.5) -- (-10,2.5) node[above] {$\omega_2$};


	\filldraw[fill=black!5!white, draw=black]
   		  (0,0) circle (.5cm);
	\pgfsetfillopacity{.7}
	\filldraw[fill=black!20!white, draw=black]
   		(.75,0) circle (.25cm);
	\filldraw[fill=black!20!white, draw=black]
   		(0.6495,.375) circle (.25cm);
	\filldraw[fill=black!20!white, draw=black]
   		(0.375,0.6495) circle (.25cm);
	\filldraw[fill=black!20!white, draw=black]
   		(0,.75) circle (.25cm);	
	\filldraw[fill=black!20!white, draw=black]
   		(-0.375,0.6495) circle (.25cm);
	\filldraw[fill=black!20!white, draw=black]
   		(-0.6495,0.375) circle (.25cm);
	\filldraw[fill=black!20!white, draw=black]
   		(-.75,0) circle (.25cm);	
	\filldraw[fill=black!20!white, draw=black]
   		(-0.6495,-.375) circle (.25cm);
	\filldraw[fill=black!20!white, draw=black]
   		(-0.375,-0.6495) circle (.25cm);
	\filldraw[fill=black!20!white, draw=black]
  		(0,-.75) circle (.25cm);	
	\filldraw[fill=black!20!white, draw=black]
   		(0.375,-0.6495) circle (.25cm);
	\filldraw[fill=black!20!white, draw=black]
   		(0.6495,-.375) circle (.25cm);
	\filldraw[fill=black!45!white, draw=black]
   		(1.5,0) circle (.5cm);
	\filldraw[fill=black!45!white, draw=black]
   		(1.299,0.75) circle (.5cm);
	\filldraw[fill=black!45!white, draw=black]
   		(.75,1.299) circle (.5cm);
	\filldraw[fill=black!45!white, draw=black]
   		(0,1.5) circle (.5cm);
	\filldraw[fill=black!45!white, draw=black]
   		(-.75,1.2990) circle (.5cm);
	\filldraw[fill=black!45!white, draw=black]
   		(-1.299,.75) circle (.5cm);
	\filldraw[fill=black!45!white, draw=black]
   		(-1.5,0) circle (.5cm);
	\filldraw[fill=black!45!white, draw=black]
   		(-1.299,-.75) circle (.5cm);	
	\filldraw[fill=black!45!white, draw=black]
   		(-.75,-1.299) circle (.5cm);	
	\filldraw[fill=black!45!white, draw=black]
   		(0,-1.5) circle (.5cm);	
	\filldraw[fill=black!45!white, draw=black]
   		(.75,-1.299) circle (.5cm);
	\filldraw[fill=black!45!white, draw=black]
   		(1.299,-.75) circle (.5cm);
	\draw[->,dashed] (-2.5,0) -- (2.5,0) node[right] {$\omega_1$} ;
	\draw[->,dashed] (0,-2.5) -- (0,2.5)node[above] {$\omega_2$};

		\end{scope}
\end{tikzpicture}
\end{center}
\caption{Partitioning of the frequency plane $\RR^2$ induced by (left) a semi-discrete tensor wavelet frame, and (right) a semi-discrete directional wavelet frame.}
\end{figure*}
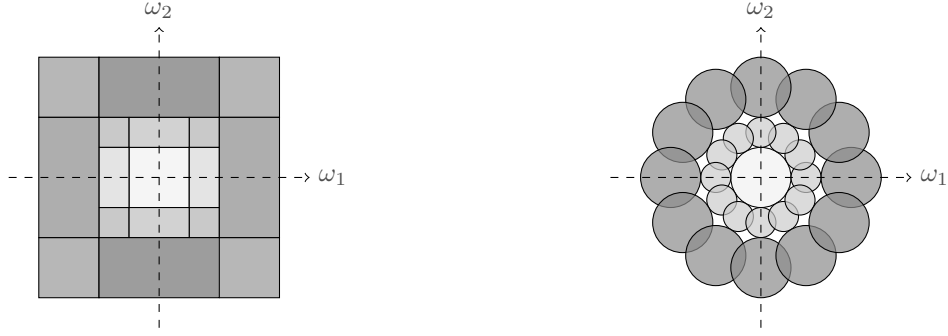
The following proposition states normalization results for semi-discrete frames that come in handy in satisfying the admissibility condition \eqref{weak_admiss2} as discussed in Section \ref{properties}. 
\begin{proposition}\label{prop:norm}
Let $\Psi_\Lambda=\{ T_b Ig_\la\}_{(\la,b) \in \Lambda \times \Rd}$ be a semi-discrete frame for $L^2(\Rd)$ with frame bounds $A,B$. 
\begin{itemize}
\item[i)]{For $C>0$, the family of functions $\widetilde{\Psi}_\Lambda:=\big\{ T_b I\widetilde{g_\la}\big\}_{(\la,b) \in \Lambda \times \Rd}$, $\widetilde{g_\la}:=C^{-1/2}g_\la, \  \forall \la \in \Lambda,$  is a semi-discrete frame for $L^2(\Rd)$ with frame bounds $\widetilde{A}:=\frac{A}{C}$ and $\widetilde{B}:=\frac{B}{C}$.}
\item[ii)]{The family of functions $\Psi^\natural_\Lambda:=\big\{ T_b I g^\natural_\la\big\}_{(\la,b) \in \Lambda \times \Rd}$, $$
g_\la^\natural:=\mathcal{F}^{-1}\Big(\widehat{g_\la}\Big(\sum_{\lambda' \in \Lambda} |\widehat{g_{\lambda'}}|^2\Big)^{-1/2}\Big), \  \forall \la \in \Lambda,$$ is a semi-discrete Parseval frame for $L^2(\Rd)$, i.e., the frame bounds satisfy $A^\natural=B^\natural=1$.}
\end{itemize}
\end{proposition}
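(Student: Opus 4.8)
The plan is to reduce both statements to the Littlewood-Paley characterization of Proposition \ref{freqcoverthm2}, which reformulates the frame property of $\Psi_\Lambda$ as the pointwise two-sided bound $A \leq \sum_{\lambda \in \Lambda} |\widehat{g_\lambda}(\omega)|^2 \leq B$ for a.e. $\omega \in \Rd$. Since the frame bounds are governed entirely by the Littlewood-Paley sum $\sum_{\lambda} |\widehat{g_\lambda}|^2$, both normalizations become elementary algebraic manipulations in the frequency domain, the only genuine work being the verification that the new atoms still lie in $L^1(\Rd) \cap L^2(\Rd)$, so that Proposition \ref{freqcoverthm2} is applicable.

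For part i), I would first note that $\widetilde{g_\lambda} = C^{-1/2} g_\lambda$ is a scalar multiple of $g_\lambda \in L^1(\Rd) \cap L^2(\Rd)$ and hence lies in $L^1(\Rd) \cap L^2(\Rd)$ itself. Taking Fourier transforms gives $\widehat{\widetilde{g_\lambda}} = C^{-1/2} \widehat{g_\lambda}$, so that $\sum_{\lambda \in \Lambda} |\widehat{\widetilde{g_\lambda}}(\omega)|^2 = C^{-1} \sum_{\lambda \in \Lambda} |\widehat{g_\lambda}(\omega)|^2$ for a.e. $\omega$. Combining this with the Littlewood-Paley bounds for $\Psi_\Lambda$ yields $\tfrac{A}{C} \leq \sum_{\lambda} |\widehat{\widetilde{g_\lambda}}(\omega)|^2 \leq \tfrac{B}{C}$ a.e., and Proposition \ref{freqcoverthm2} then identifies $\widetilde{A} = A/C$ and $\widetilde{B} = B/C$ as frame bounds for $\widetilde{\Psi}_\Lambda$.

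For part ii), the atoms are defined through $\widehat{g_\lambda^\natural}(\omega) = \widehat{g_\lambda}(\omega) \big(\sum_{\lambda' \in \Lambda} |\widehat{g_{\lambda'}}(\omega)|^2\big)^{-1/2}$. Because the lower frame bound satisfies $A > 0$, Proposition \ref{freqcoverthm2} guarantees $\sum_{\lambda'} |\widehat{g_{\lambda'}}(\omega)|^2 \geq A > 0$ for a.e. $\omega$, so the normalizing factor is well-defined a.e. and the Littlewood-Paley sum of the new system collapses to
\[
\sum_{\lambda \in \Lambda} |\widehat{g_\lambda^\natural}(\omega)|^2 = \frac{\sum_{\lambda \in \Lambda} |\widehat{g_\lambda}(\omega)|^2}{\sum_{\lambda' \in \Lambda} |\widehat{g_{\lambda'}}(\omega)|^2} = 1
\]
for a.e. $\omega$; by Proposition \ref{freqcoverthm2} this is exactly the Parseval condition $A^\natural = B^\natural = 1$.

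The step I expect to be the main obstacle is the verification that $g_\lambda^\natural \in L^1(\Rd) \cap L^2(\Rd)$, which is required before Proposition \ref{freqcoverthm2} may be invoked. The $L^2$-membership is straightforward: the two-sided bound shows that the multiplier $m := \big(\sum_{\lambda'} |\widehat{g_{\lambda'}}|^2\big)^{-1/2}$ satisfies $B^{-1/2} \leq m(\omega) \leq A^{-1/2}$ a.e., so by Plancherel's theorem $\VV g_\lambda^\natural \VV_2 = \VV \widehat{g_\lambda^\natural} \VV_2 = \VV m \, \widehat{g_\lambda} \VV_2 \leq A^{-1/2} \VV \widehat{g_\lambda} \VV_2 = A^{-1/2} \VV g_\lambda \VV_2 < \infty$. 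The delicate point is $L^1$-membership, since dividing by $\big(\sum_{\lambda'} |\widehat{g_{\lambda'}}|^2\big)^{1/2}$ in frequency need not preserve time-domain integrability. I would handle this by exploiting regularity of the bounded, strictly positive symbol $\sum_{\lambda'} |\widehat{g_{\lambda'}}|^2$: when this Littlewood-Paley sum is sufficiently smooth (as is the case for the structured frames in Appendices \ref{sec:sdf0} and \ref{sec:sdf2}), the multiplier $m$ is itself the Fourier transform of an integrable kernel $h = \mathcal{F}^{-1}(m)$, whence $g_\lambda^\natural = g_\lambda \ast h$ lies in $L^1(\Rd)$ by Young's inequality, completing the identification of $\Psi^\natural_\Lambda$ as a Parseval frame.
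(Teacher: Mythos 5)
Your proposal is correct and follows essentially the same route as the paper's proof: both parts are reduced to the Littlewood--Paley characterization of Proposition \ref{freqcoverthm2}, and for part ii) your computation showing the normalized Littlewood--Paley sum collapses to $1$ a.e.\ is exactly the paper's argument. The only cosmetic difference is in part i), where the paper substitutes $g_\la=\sqrt{C}\,\widetilde{g_\la}$ directly into the frame inequality \eqref{condii} and pulls the constant out of $\sum_{\la\in\Lambda} \VV f \ast g_{\la}\VV_2^2$, whereas you rescale the Littlewood--Paley sum in the frequency domain; these are equivalent one-line arguments. Where you go beyond the paper is the membership question $g_\la^\natural \in L^1(\Rd)\cap L^2(\Rd)$: the paper applies Proposition \ref{freqcoverthm2} without any comment on integrability of the normalized atoms, even though Definition \ref{defn:contframe} and the hypothesis of Proposition \ref{freqcoverthm2} formally require atoms in $L^1(\Rd)\cap L^2(\Rd)$. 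Your $L^2$ argument (the multiplier is bounded between $B^{-1/2}$ and $A^{-1/2}$, then Plancherel) is right, and your observation that $L^1$ membership can genuinely fail under division by $\big(\sum_{\lambda'}|\widehat{g_{\lambda'}}|^2\big)^{1/2}$ is well taken; however, your smoothness-based patch establishes the claim only under an additional regularity hypothesis on $\sum_{\lambda'}|\widehat{g_{\lambda'}}|^2$ that is not part of the proposition, so in full generality this point remains open in your write-up---exactly as it is silently glossed over in the paper's own proof. In that sense you are, if anything, more careful than the source, and the discrepancy should not be counted against you.
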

\begin{proof}
We start by proving statement i). As $\Psi_\Lambda$ is a frame for $L^2(\Rd)$, we have
\begin{equation}\label{anakonda}
A \VV f \VV_2^2 \leq \sum_{\lambda\in \Lambda} \VV f \ast g_{\lambda} \VV_2^2 \leq B \VV f \VV_2^2, \hspace{0.5cm} \forall f \in L^2(\Rd).
\end{equation} With $g_\la=\sqrt{C}\,\widetilde{g_\la}$, for all $\lambda \in \Lambda$, in \eqref{anakonda} we get  
$A \VV f \VV_2^2 \leq \sum_{\lambda\in \Lambda} \VV f \ast \sqrt{C}\, \widetilde{g_{\lambda}} \VV_2^2 \leq B \VV f \VV_2^2$, for all $ f \in L^2(\Rd)$,
which is equivalent to $\frac{A}{C} \VV f \VV_2^2 \leq \sum_{\lambda\in \Lambda} \VV f \ast \widetilde{g_{\lambda}} \VV_2^2 \leq \frac{B}{C} \VV f \VV_2^2$, for all $f \in L^2(\Rd)$, and hence establishes i). To prove statement ii), we first note that $\mathcal{F}g_\la^\natural=\widehat{g_\la}\big(\sum_{\lambda' \in \Lambda} |\widehat{g_{\lambda'}}|^2\big)^{-1/2}$, for all $ \lambda \in \Lambda$, and thus $\sum_{\la \in \Lambda} |(\mathcal{F}g_\la^\natural)(\omega)|^2= \sum_{\la \in \Lambda} |\widehat{g_\la}(\omega)|^2\Big(\sum_{\lambda' \in \Lambda} |\widehat{g_{\lambda'}}(\omega)|^2\Big)^{-1}=1$, a.e. $\omega \in \Rd. $
Application of Proposition \ref{freqcoverthm2} then establishes that $\Psi^\natural_\Lambda$ is a semi-discrete Parseval frame for $L^2(\Rd)$, i.e., the frame bounds satisfy $A^\natural=B^\natural=1$. 
\end{proof}

\section{Examples of semi-discrete frames in \normalfont $1$-D}\label{sec:sdf0}

General $1$-D semi-discrete frames are given by collections
\begin{equation}\label{SDF_FILTER1}
\Psi=\{ T_b Ig_k\}_{(k,b) \in \mathbb{Z} \times \RR}
\end{equation}
with atoms $g_k\in  L^1(\RR)\cap L^2(\RR)$, indexed by the integers $\Lambda=\mathbb{Z}$, and satisfying the Littlewood-Paley condition 
\begin{equation}\label{eq:filterbankcondi}
 A\leq \sum_{k \in \mathbb{Z}} |\widehat{g_k}(\omega)|^2\leq B,\hspace{0.5cm} a.e. \ \omega \in \RR.
 \end{equation}
The structural example frames we consider are Weyl-Heisenberg (Gabor) frames where the $g_k$ are obtained through modulation from a prototype function, and wavelet frames where the $g_k$ are obtained through scaling from a mother wavelet.\\[1ex]
\textit{Semi-discrete Weyl-Heisenberg (Gabor) frames:} Weyl-Heisenberg fra\-mes \cite{JanssenDual,DaubechiesLandau,Groechenig,painless} are well-suited to the extraction of sinusoidal features \cite{localFourier}, and have been applied successfully in various practical feature extraction tasks\cite{GaborFeature1,GaborFeatures2}. A semi-discrete Weyl-Heisenberg frame for $L^2(\RR)$ is a collection of functions accor\-ding to \eqref{SDF_FILTER1}, where $g_{k}(x):=e^{2\pi i k x}g(x)$, $k \in \mathbb{Z}$, with  the prototype function $g \in L^1(\RR)\cap L^2(\RR)$. The atoms $\{ g_{k}\}_{k \in \mathbb{Z}}$ satisfy the Littlewood-Paley condition \eqref{eq:filterbankcondi} according to 
\begin{equation}\label{eq:Weyl-Heisenberg (Gabor)condi}
 A\leq \sum_{k \in \mathbb{Z}} |\widehat{g}(\omega-k)|^2\leq B,\hspace{0.5cm} a.e. \ \omega \in \RR.
 \end{equation}
A popular function $g \in L^1(\RR)\cap L^2(\RR)$ satisfying \eqref{eq:Weyl-Heisenberg (Gabor)condi} is the Gaussian function \cite{Groechenig}.\\[1ex]
\textit{Semi-discrete wavelet frames:} Wavelets are well-suited to the  extraction of signal features chara\-cte\-rized by singularities \cite{Daubechies,MallatEdges}, and have been applied successfully in various practical feature extraction tasks \cite{Wavelet1FeaturesD,Wavelet1Features2}.  A semi-discrete wavelet frame for $L^2(\RR)$ is a collection of functions accor\-ding to \eqref{SDF_FILTER1}, where $g_{k}(x):=2^k \psi(2^kx)$, $ k \in \mathbb{Z}$, with the mother wavelet $\psi \in L^1(\RR)\cap L^2(\RR)$. The atoms $\{ g_{k}\}_{k \in \mathbb{Z}}$ satisfy the Littlewood-Paley condition \eqref{eq:filterbankcondi} according to  
\begin{equation}\label{eq:Waveletcondi}
 A\leq \sum_{k \in \mathbb{Z}} |\widehat{\psi}(2^{-k}\omega)|^2\leq B,\hspace{0.5cm} a.e. \ \omega \in \RR.
 \end{equation}
A large class of functions $\psi$ satisfying \eqref{eq:Waveletcondi} can be obtained through a multi-resolution analysis in $L^2(\RR)$ \cite[Definition 7.1]{MallatW}. 

\begin{figure*}[t]
\begin{center}
\begin{tikzpicture}
	\begin{scope}[scale=.215]

	\filldraw[fill=black!27!white, draw=black]
   		  (21,0) circle (8cm);
	\filldraw[fill=black!20!white, draw=black]
   		  (21,0) circle (4cm);
	\filldraw[fill=black!10!white, draw=black]
   		  (21,0) circle (2cm);
	\filldraw[fill=black!10!white, draw=black]
   		  (21,0) circle (1cm);
	\filldraw[fill=black!1!white, draw=black]
   		  (21,0) circle (.5cm);
		\fill[fill=black!50!white,xshift=21cm]
 (2.5376  ,  3.0920)--(5.0751    ,6.1841)  arc (50.6250:39.3750:8.0cm) --  (3.0920 ,   2.5376) arc(39.3750:50.625:4cm); 
\fill[fill=black!50!white,xshift=21cm]
 (-2.5376  ,  -3.0920)--(-5.0751    ,-6.1841)  arc (230.6250:219.3750:8.0cm) --  (-3.0920 ,  - 2.5376) arc(219.3750:230.625:4cm); 
		
	\draw (21.3536, 0.3536) -- (21.7071,0.7071);	
	\draw (21.3536, -0.3536) -- (21.7071,-0.7071);	
	\draw (20.6464,0.3536) -- 	(20.2929,0.7071);
	\draw (20.6464,-0.3536) -- (20.2929,-0.7071);	
	\draw (21.9239,0.3827) -- (22.8478,0.7654);
	\draw (21.9239,-0.3827) -- (22.8478,-0.7654);		
	\draw (21.3827,0.9239) -- (21.7654,1.8478);	
	\draw (21.3827,-0.9239) -- (21.7654,-1.8478);	
	\draw (20.6173,0.9239) -- (20.2346,1.8478);
	\draw (20.6173,-0.9239) -- (20.2346,-1.8478);	
	\draw (19.1522,0.7654) -- (20.0761,0.3827);
	\draw (19.1522,-0.7654) -- (20.0761,-0.3827);		
	\draw(22.9616,0.3902) -- (24.9231,0.7804);
	\draw(22.6629,1.111) -- (24.3259,2.2223);	
	\draw(23.2223,3.3259) -- (22.1111,1.6629);
	\draw(21.39002,1.9616) -- (21.7804,3.9231);
	\draw(20.6098,1.9616) -- (20.2196,3.9231);
	\draw(19.8889,1.6629) -- (18.7777,3.325);	
	\draw(17.6741,2.2233) -- (19.3371,1.111);	
	\draw(19.0384,0.3902) -- (17.0769,0.7804);
	\draw(22.9616,-0.3902) -- (24.9231,-0.7804);
	\draw(22.6629,-1.111) -- (24.3259,-2.2223);	
	\draw(23.2223,-3.3259) -- (22.1111,-1.6629);
	\draw(21.39002,-1.9616) -- (21.7804,-3.9231);
	\draw(20.6098,-1.9616) -- (20.2196,-3.9231);
	\draw(19.8889,-1.6629) -- (18.7777,-3.325);	
	\draw(17.6741,-2.2233) -- (19.3371,-1.111);	
	\draw(19.0384,-0.3902) -- (17.0769,-0.7804);									
	\draw(24.9807,0.3921) -- (28.9615,0.7841);
	\draw(24.8278,1.1611) -- (28.6555,2.3223);		
	\draw(24.5277,1.8856) -- (28.0554,3.7712);
	\draw(27.1841,5.0751) -- (24.0920,2.5376);			
	\draw(26.0751,6.1841) -- (23.5376,3.0920);
	\draw(22.8856,3.5277) -- (24.7712,7.0554);
	\draw(22.1611,3.8278) -- (23.3223,7.6555);
	\draw(21.7841,7.9615) -- (21.3921,3.9807);
	\draw(20.6079,3.9807) -- (20.2159,7.9615);
	\draw(18.6777,7.6555) -- (19.8389,3.8278);						
	\draw(19.1144,3.5277) -- (17.2288,7.0554);
	\draw(15.9249,6.1841) -- (18.4624,3.0920);
	\draw(17.9080,2.5376) -- (14.8159,5.0751);
	\draw(13.9446,3.7712) -- (17.4723,1.8856);
	\draw(17.1722,1.16111) -- (13.3445,2.3223);
	\draw(13.0385,0.7841) -- (17.0193,0.3921);						
	\draw(24.9807,-0.3921) -- (28.9615,-0.7841);
	\draw(24.8278,-1.1611) -- (28.6555,-2.3223);		
	\draw(24.5277,-1.8856) -- (28.0554,-3.7712);
	\draw(27.1841,-5.0751) -- (24.0920,-2.5376);			
	\draw(26.0751,-6.1841) -- (23.5376,-3.0920);
	\draw(22.8856,-3.5277) -- (24.7712,-7.0554);
	\draw(22.1611,-3.8278) -- (23.3223,-7.6555);
	\draw(21.7841,-7.9615) -- (21.3921,-3.9807);
	\draw(20.6079,-3.9807) -- (20.2159,-7.9615);
	\draw(18.6777,-7.6555) -- (19.8389,-3.8278);						
	\draw(19.1144,-3.5277) -- (17.2288,-7.0554);
	\draw(15.9249,-6.1841) -- (18.4624,-3.0920);
	\draw(17.9080,-2.5376) -- (14.8159,-5.0751);
	\draw(13.9446,-3.7712) -- (17.4723,-1.8856);
	\draw(17.1722,-1.16111) -- (13.3445,-2.3223);
	\draw(13.0385,-0.7841) -- (17.0193,-0.3921);
	\draw[->,dashed] (12.2,0) -- (30.2,0) node[right] {$\omega_1$};
	\draw[->,dashed] (21,-9.2) -- (21,9.2)node[above] {$\omega_2$};

		\draw[->,dashed] (-26.2,0) -- (-8.2,0) node[right] {$\omega_1$};
	\draw[->,dashed] (-17,-9.2) -- (-17,9.2)node[above] {$\omega_2$};

	\filldraw[fill=black!27!white, draw=black]
   		  (-17,0) circle (8cm);
	\filldraw[fill=black!20!white, draw=black]
   		  (-17,0) circle (4cm);
	\filldraw[fill=black!15!white, draw=black]
   		  (-17,0) circle (2cm);
	\filldraw[fill=black!10!white, draw=black]
   		  (-17,0) circle (1cm);
	\filldraw[fill=black!1!white, draw=black]
   		  (-17,0) circle (.5cm);

	\draw (-16.5,0) -- (-9,0);
	\draw (-17,0.5) -- (-17,8);
	\draw (-17.5,0) -- (-25,0);
	\draw (-17,-.5) -- (-17, -8);

\filldraw[fill=black!50!white,draw=black!80!white]
(-13.3045,1.5307) -- (-9.609,3.0615) arc (22.5:45:8cm) -- (-14.1716,  2.8284) arc (45:22.5:4cm);

\filldraw[fill=black!50!white,draw=black!80!white]
(-20.6955,-1.5307) -- (-24.391,-3.0615) arc (202.5:225:8cm)--(-19.8284, - 2.8284) arc(225:202.5:4cm);

	\draw (-16.2929,0.7071) -- (-11.3431,5.6569);
	\draw (-17.7071,0.7071) -- (-22.6569,5.6569);
	\draw (-17.7071,-0.7071) -- (-22.6569,-5.6569);
	\draw (-16.2929,-0.7071) -- (-11.3431,-5.6569);
	
	\draw (-13.3045,1.5307) -- (-9.609,3.0615);
	\draw (-20.6955,-1.5307) -- (-24.3910,-3.0615);
	\draw (-13.3045,-1.5307) -- (-9.609,-3.0615);
	\draw (-20.6955,1.5307) -- (-24.3910,3.0615);
	\draw (-15.4693,3.6955) -- (-13.9385,7.3910);
	\draw (-15.4693,-3.6955) -- (-13.9385,-7.3910);
	\draw (-18.5307,3.6955) -- (-20.0615,7.3910);
	\draw (-18.5307,-3.6955) -- (-20.0615,-7.3910);

		\end{scope}
\end{tikzpicture}
\end{center}
\caption{Partitioning of the frequency plane $\RR^2$ induced by (left) a semi-discrete curvelet frame, and (right) a semi-discrete ridgelet frame.}
		\label{fig:acurvea}
\end{figure*}
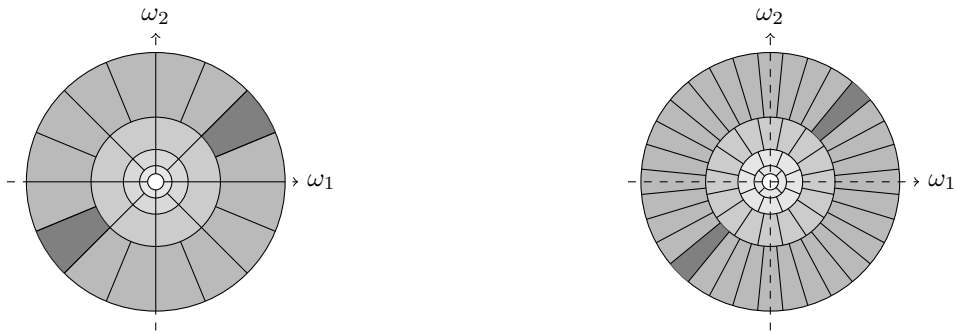
\textit{Semi-discrete curvelet frames:} 
Curvelets, introduced in \cite{CandesNewTight,CandesDonoho2}, are well-suited to the extraction of signal features characte\-rized by curve-like singularities (such as, e.g., curved edges in images), and have been applied successfully in various practical feature extraction tasks \cite{Plonka,Dettori}. 
\section{Examples of semi-discrete frames in \normalfont $2$-D}\label{sec:sdf2}
\textit{Semi-discrete wavelet frames:} Two-dimensional wavelets are well-suited to the extraction of signal features characte\-rized by point singularities (such as, e.g., stars in astronomical images \cite{GittaDonoho}), and have been applied successfully in various practical feature extraction tasks, e.g., in \cite{UnserWavelets,Serre,GaborLowe,Pinto}. Prominent families of two-dimensional wavelet frames are tensor wavelet frames and directional wavelet frames: 
 \begin{enumerate}
\item[i)]{\textit{Semi-discrete tensor wavelet frames:} A semi-discrete tensor wavelet frame for $L^2(\RR^2)$ is a collection of functions according to $\Psi_{\Lambda_{\text{TW}}}:=\{ T_bIg_{(e,j)}\}_{(e,j) \in \Lambda_{\text{TW}}, b \in \RR^2},$ $g_{(e,j)}(x):=2^{2j}\psi^e(2^jx),$
 where $\Lambda_{\text{TW}}:= \big\{ ((0,0),0)\big\}\cup\big\{ (e,j) \ | \ e \in E\backslash \{ (0,0)\},\ j\geq0 \big\}$, and $E:=\{ 0,1\}^2$. Here, the functions $\psi^e \in L^1(\RR^2)\cap L^2(\RR^2)$ are tensor products of a coarse-scale function $\phi \in L^1(\RR)\cap L^2(\RR)$ and a fine-scale function $\psi \in L^1(\RR)\cap L^2(\RR)$ according to $\psi^{(0,0)}:=\phi \otimes \phi$, $\psi^{(1,0)}:=\psi \otimes \phi,$ $\psi^{(0,1)}:=\phi \otimes \psi,$ and $\psi^{(1,1)}:=\psi \otimes \psi.$ The corresponding Littlewood-Paley condition \eqref{freqcover} reads
\begin{align}
A&\leq \big|\widehat{\psi^{(0,0)}}(\omega)\big|^2\nonumber\\
&+\sum_{j\geq 0}\sum_{e\in E\backslash \{ (0,0)\} } |\widehat{\psi^e}(2^{-j}\omega)|^2\leq B,\label{eq:tensor1} 
\end{align}
a.e. $\omega \in \RR^2.$ A large class of functions $\phi,\psi$ satisfying \eqref{eq:tensor1} can be obtained through a multi-resolution analysis in $L^2(\RR)$ \cite[Definition 7.1]{MallatW}. }
\item[ii)]{
\textit{Semi-discrete directional wavelet frames:} A semi-discrete directional wavelet frame for $L^2(\RR^2)$ is a collection of functions according to 
\begin{equation*}\label{directionali}
\Psi_{\Lambda_{\text{DW}}}:=\{ T_bIg_{(j,k)}\}_{(j,k) \in \Lambda_{\text{DW}}, b \in \RR^2},\end{equation*}
with $g_{(-J,0)}(x):=2^{-2J}\phi(2^{-J}x),$ $g_{(j,k)}(x):=2^{2j}\psi(2^jR_{\theta_k}x),$ 
where $\Lambda_{\text{DW}}:=\big\{ (-J,0)\big\}\cup \big\{ (j,k) \ | \ j\in \mathbb{Z} \text{ with } j> -J, \ k\in \{0,\dots,K-1\} \big\}$, $R_{\theta}$ is a $2\times 2$  rotation matrix defined as 
\begin{equation}\label{rotationmatrix}
 R_\theta:=\begin{pmatrix}
 \cos(\theta) & -\sin(\theta)\\
 \sin(\theta) & \cos(\theta)
 \end{pmatrix},\hspace{0.5cm} \theta \in [0,2\pi),
\end{equation}
 and $\theta_k:=\frac{2\pi k}{K}$, with $k=0,\dots,K-1$, for a fixed $K\in \mathbb{N}$, are rotation angles. The functions $\phi \in L^1(\RR^2)\cap L^2(\RR^2)$ and $\psi \in L^1(\RR^2)\cap L^2(\RR^2)$ are referred to in the literature as coarse-scale wavelet and fine-scale wavelet, respectively. The integer $J \in \mathbb{Z}$ corresponds to the coarsest scale resolved and the atoms $\{g_{(j,k)}\}_{(j,k)\in \Lambda_{\text{DW}}}$ satisfy the Littlewood-Paley condition \eqref{freqcover} according to
\begin{equation}\label{eq:direc1}
A\leq |\widehat{\phi}(2^J\omega)|^2 + \sum_{j>-J}\sum_{k=0}^{K-1} |\widehat{\psi}(2^{-j}R_{\theta_k} \omega )|^2\leq B, \end{equation}
a.e. $\omega \in \RR^2.$ Prominent examples of functions $\phi,\psi$ satisfying \eqref{eq:direc1} are the Gaussian function for $\phi$ and a modulated Gaussian function for $\psi$ \cite{MallatW}.}
\end{enumerate}

A semi-discrete curvelet frame for $L^2(\RR^2)$ is a collection of functions according to $\Psi_{\Lambda_{\text{C}}}:=\{ T_bIg_{(j,l)}\}_{(j,l) \in \Lambda_{\text{C}}, b \in \RR^2},$
with $g_{(-1,0)}(x):=\phi(x),$ $g_{(j,l)}(x):=\psi_{j}(R_{\theta_{j,l}}x),$ where $\Lambda_{\text{C}}:=\big\{ (-1,0)\big\}\cup \big\{ (j,l) \ | \ j\geq 0, \ l=0,\dots,L_j-1 \big\}$, $R_{\theta}\in\mathbb{R}^{2\times2}$ is the rotation matrix defined in \eqref{rotationmatrix}, and $\theta_{j,l}:=\pi l 2^{-\lceil j/2\rceil-1},$  for $j\geq0$, and $0\leq l<L_j:=2^{\lceil j/2 \rceil +2}$, are scale-dependent rotation angles. The functions $\phi \in L^1(\RR^2)\cap L^2(\RR^2)$ and $\psi_j \in L^1(\RR^2)\cap L^2(\RR^2)$ satisfy the Littlewood-Paley condition \eqref{freqcover} according to
\begin{equation}\label{eq:curve1}
\begin{split}
A\leq |\widehat{\phi}(\omega)|^2 +\sum_{j=0}^\infty \sum_{l=0}^{L_j-1} |\widehat{\psi_{j}}(R_{\theta_{j,l}}\omega)|^2 \leq B,
\end{split}
\end{equation}
a.e. $\omega \in \RR^2.$ The  $\psi_{j}$, $j\geq0$, are designed to have their Fourier transforms $\widehat{\psi}_{j}$ supported on a pair of opposite wedges of size $2^{-j/2}\times2^{j}$ in the dyadic corona $\{ \omega \in \RR^2 \ | \ 2^j \leq |\omega| \leq 2^{j+1}\}$, see Fig. \ref{fig:acurvea} (left). We refer the reader to \cite[Theorem 4.1]{CandesDonoho2} for constructions of functions $\phi,\psi_{j}$ satisfying \eqref{eq:curve1} with $A=B=1$.

\textit{Semi-discrete ridgelet frames:} 
Ridgelets, introduced in \cite{Ridgelet,DonohoCandesRidgelet}, are well-suited to the extraction of signal features characte\-rized by straight-line singu\-larities (such as, e.g., straight edges in images), and have been applied successfully in various practical feature extraction tasks \cite{Dettori,GChen,Qiao,Ganesan}.

A semi-discrete ridgelet frame for $L^2(\RR^2)$ is a collection of functions according to 
$
\Psi_{\Lambda_{\text{R}}}:=\{ T_bIg_{(j,l)}\}_{(j,l) \in \Lambda_{\text{R}}, b \in \RR^2},
$ 
with $g_{(0,0)}(x):=\phi(x),$ $g_{(j,l)}(x):=\psi_{(j,l)}(x),$ where $\Lambda_{\text{R}}:=\big\{ (0,0)\big\}\cup \big\{ (j,l) \ | \ j\geq 1, \ l=1,\dots,2^{j}-1 \big\}$, and the atoms $\{ g_{(j,l)}\}_{(j,l) \in \Lambda_{\text{R}}}$ satisfy the Littlewood-Paley condition \eqref{freqcover} according to 
\begin{equation}\label{eq:ridgelet}
A\leq |\widehat{\phi}(\omega)|^2 + \sum_{j=1}^\infty \sum_{l=1}^{2^{j}-1} |\widehat{\psi_{(j,l)}}(\omega)|^2\leq B,
\end{equation}
a.e. $\omega \in \RR^2.$ The  $\psi_{(j,l)} \in L^1(\RR^2)\cap L^2(\RR^2)$, $(j,l)\in\Lambda_{\text{R}}\backslash\{(0,0)\}$, are designed to be constant in the direction specified by the parameter $l$, and to have  Fourier transforms $\widehat{\psi}_{(j,l)}$ supported on a pair of opposite wedges of size $2^{-j}\times2^{j}$ in the dyadic corona $\{ \omega \in \RR^2 \ | \ 2^j \leq |\omega| \leq 2^{j+1}\}$, see Fig. \ref{fig:acurvea} (right). We refer the reader to  \cite[Proposition 6]{Grohs_transport} for constructions of functions $\phi,\psi_{(j,l)}$ satisfying \eqref{eq:ridgelet} with $A=B=1$.

\begin{remark}
For further examples of interesting structured semi-discrete frames, we refer to \cite{Shearlets}, which discusses semi-discrete shearlet frames, and \cite{Grohs_alpha}, which deals with semi-discrete $\alpha$-curvelet frames. 
\end{remark}


\section{Non-linearities}\label{app_nonlinear} 
This appendix gives a brief overview of non-linearities $M:L^2(\Rd) \to L^2(\Rd)$ that are widely used in the deep learning literature and that fit into our theory.  For each example, we establish how it satisfies the conditions on $M:L^2(\Rd) \to L^2(\Rd)$ in Theorems \ref{main_inv2} and \ref{mainmain} and in Corollary \ref{main_inv55}. Specifically, we need to verify the following:
\begin{itemize}
\item[(i)]{Lipschitz continuity: There exists a constant $L\geq0$ such that 
$\VV Mf-Mh \VV_2\leq L \VV f-h\VV_2,$ for all $f,h \in L^2(\Rd).$}
\item[(ii)]{$Mf=0$ for $f=0$.}
\end{itemize}
All non-linearities considered here are pointwise (memoryless) operators in the sense of 
\begin{equation}\label{eq:n1}
M:L^2(\Rd) \to L^2(\Rd), \hspace{0.5cm}(Mf)(x)=\rho(f(x)),
\end{equation}
where $\rho:\mathbb{C}\to \mathbb{C}$. An immediate consequence of this property is that  the operator $M$ commutes with the translation operator $T_t$ (see Theorem \ref{mainmain} and Corollary \ref{main_inv55}): 
\begin{align*}
(MT_tf)(x)&=\rho((T_tf)(x))=\rho(f(x-t))=T_t\rho (f(x))\\&=(T_tMf)(x),\hspace{0.5cm} \forall f\in L^2(\Rd),\forall t\in \Rd.
\end{align*}
\textit{Modulus function:} The modulus function
$$
|\cdot|:L^2(\Rd)\to L^2(\Rd), \hspace{0.5cm}|f|(x):=|f(x)|,
$$
has been applied successfully in the deep learning literature, e.g., in \cite{Jarrett,GaborLowe}, and most prominently in   scattering networks \cite{MallatS}. Lipschitz continuity with $L=1$ follows from 
\begin{align*}
\VV |f|-|h| \VV_2^2&=\int_{\Rd}||f(x)|-|h(x)||^2\mathrm dx \\&\leq \int_{\Rd}|f(x)-h(x)|^2\mathrm dx= \VV f-h\VV^2_2, 
\end{align*}
for $f,h \in L^2(\Rd)$, by the reverse triangle inequality. Furthermore, obviously $|f|=0$ for $f=0$, and finally $|\cdot|$ is pointwise as \eqref{eq:n1} is satisfied with $\rho(x):=|x|$.\\[1ex]
\textit{Rectified linear unit:} The rectified linear unit non-linearity (see, e.g., \cite{Glorot,Nair}) is defined as 
$R:L^2(\Rd)\to L^2(\Rd),$ 
$$(Rf)(x):=\max\{0,\Real(f(x))\}+i\max\{0,\Imag(f(x))\}.
$$
We start by establishing that $R$ is Lipschitz-continuous with $L=2$. 
To this end, fix $f,h \in L^2(\Rd)$. We have 
\begin{align}
&|(Rf)(x)-(Rh)(x)|\nonumber\\
&=\big|\max\{0,\Real(f(x))\}+i\max\{0,\Imag(f(x))\}\nonumber\\
&-\big(\max\{0,\Real(h(x))\}+i\max\{0,\Imag(h(x))\}\big)\big|\nonumber\\
&\leq\big|\max\{0,\Real(f(x))\}-\max\{0,\Real(h(x))\}\big|\label{eq:nn1}\\
&+ \big| \max\{0,\Imag(f(x))\}-\max\{0,\Imag(h(x))\}\big|\nonumber\\
&\leq\big|\Real(f(x))-\Real(h(x))\big| + \big|\Imag(f(x))-\Imag(h(x))\big|\label{eq:nn2}\\
&\leq\big|f(x)-h(x)\big| + \big|f(x)-h(x)\big|= 2|f(x)-h(x)|\label{eq:nnn2},
\end{align}
where we used the triangle inequality in \eqref{eq:nn1}, $$|\max\{0,a\}-\max\{0,b\}|\leq |a-b|, \hspace{0.5cm} \forall a,b\in \RR,$$
in \eqref{eq:nn2}, and the Lipschitz continuity (with $L=1$) of  the mappings $\Real:\mathbb{C} \to\RR$ and $\Imag:\mathbb{C}\to\RR$  in \eqref{eq:nnn2}. We therefore get 
\begin{align*}
\VV Rf-Rh \VV_2=&\Big(\int_{\Rd}|(Rf)(x)-(Rh)(x)|^2\mathrm dx\Big)^{1/2}\\
\leq&\,2\,\Big(\int_{\Rd}|f(x)-h(x)|^2\mathrm dx\Big)^{1/2}\\
=&\, 2\,  \VV f-h\VV_2,
\end{align*}
which establishes Lipschitz continuity of $R$ with Lipschitz constant $L=2$. Furthermore, obviously $Rf=0$ for $f=0$, and finally \eqref{eq:n1} is satisfied with $\rho(x):=\max\{0,\Real(x)\}+i\max\{0,\Imag(x)\}$.\\[1ex]
\textit{Hyperbolic tangent:} The hyperbolic tangent non-linearity (see, e.g., \cite{Huang,Jarrett,hierachies}) is defined as $H:L^2(\Rd)\to L^2(\Rd)$, 
$$(Hf)(x):=\tanh(\Real(f(x)))+i\tanh(\Imag(f(x))),
$$
where $\tanh(x):=\frac{e^{x}-e^{-x}}{e^{x}+e^{-x}}$. We start by proving that $H$ is Lipschitz-continuous with $L=2$. 
To this end, fix $f,h \in L^2(\Rd)$. We have 
\begin{align}
&|(Hf)(x)-(Hh)(x)|\nonumber\\
&=\big|\tanh(\Real(f(x)))+i\tanh(\Imag(f(x)))\nonumber\\
&-\big(\tanh(\Real(h(x)))+i\tanh(\Imag(h(x)))\big)\big|\nonumber\\
&\leq\big|\tanh(\Real(f(x)))-\tanh(\Real(h(x)))\big|\nonumber\\
&+\big|\tanh(\Imag(f(x)))-\tanh(\Imag(h(x)))\big|\label{eq:nn3},
\end{align}
where, again, we used the triangle inequality. In order to further upper-bound \eqref{eq:nn3}, we show that $\tanh$ is Lipschitz-continuous. To this end, we make use of the following result.
\begin{lemma}\label{Lip}
Let $h:\RR \to \RR$ be a continuously differentiable function satisfying $\sup_{x\in \RR}|h'(x)|\leq L$. Then, $h$ is Lipschitz-continuous  with Lipschitz constant $L$.
\end{lemma}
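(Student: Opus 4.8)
The plan is to invoke the Mean Value Theorem, which is perfectly adapted to converting a uniform bound on the derivative into a Lipschitz estimate. Since $h$ is continuously differentiable on all of $\RR$, it is in particular continuous on every closed interval and differentiable on the corresponding open interval, so the hypotheses of the Mean Value Theorem are met on any interval we choose.

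First I would fix arbitrary $x,y \in \RR$. The case $x=y$ makes the desired inequality $|h(x)-h(y)| \leq L|x-y|$ trivially true, so I may assume $x \neq y$ and, without loss of generality, $x<y$. Applying the Mean Value Theorem to $h$ on $[x,y]$ produces a point $\xi \in (x,y)$ with $h(y)-h(x) = h'(\xi)(y-x)$. Taking absolute values and inserting the hypothesis $\sup_{x\in\RR}|h'(x)| \leq L$ then yields $|h(y)-h(x)| = |h'(\xi)|\,|y-x| \leq L|y-x|$, which is exactly the Lipschitz property with constant $L$. Since $x,y$ were arbitrary, this completes the argument.

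An equivalent route, should one prefer to sidestep the Mean Value Theorem, is to argue directly from the Fundamental Theorem of Calculus: writing $h(y)-h(x) = \int_x^y h'(t)\,\mathrm{d}t$ and estimating $|h(y)-h(x)| \leq \int_x^y |h'(t)|\,\mathrm{d}t \leq L(y-x)$. There is no genuine obstacle in either approach; the only point demanding a moment's attention is the degenerate case $x=y$, which is dispatched immediately, together with confirming that the continuous differentiability hypothesis is strong enough to license whichever of the two classical theorems one elects to use. This lemma will then feed directly into the Lipschitz bound for $\tanh$ needed to control the hyperbolic tangent non-linearity $H$.
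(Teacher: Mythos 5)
Your proof is correct and complete. Note, however, that the paper does not actually prove this lemma at all: its ``proof'' consists of a single citation to \cite[Theorem 9.5.1]{LipLip}, deferring the entire argument to a textbook on metric spaces. Your approach is therefore genuinely different in the sense that it is self-contained: you fix arbitrary $x<y$, apply the Mean Value Theorem on $[x,y]$ to obtain $\xi\in(x,y)$ with $h(y)-h(x)=h'(\xi)(y-x)$, and conclude $|h(y)-h(x)|\leq L|y-x|$ from the uniform derivative bound; the degenerate case $x=y$ is handled separately, and the alternative route via $h(y)-h(x)=\int_x^y h'(t)\,\mathrm{d}t$ and the triangle inequality for integrals is equally valid under the $C^1$ hypothesis. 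What the paper's citation buys is brevity and, in principle, a statement valid under weaker hypotheses (mere differentiability suffices for the MVT argument, and the cited metric-space treatment is more general still); what your argument buys is that the reader need not consult an external source for a two-line classical fact, which arguably fits better the expository role this lemma plays in establishing Lipschitz continuity of $\tanh$ and the shifted logistic sigmoid in Appendix D.
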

\begin{proof}
See \cite[Theorem 9.5.1]{LipLip}.
\end{proof}
Since $\tanh'(x)=1-\tanh^2(x)$, $x\in \RR$, we have $\sup_{x\in\RR}|\tanh'(x)|\leq 1$. By Lemma \ref{Lip} we can therefore conclude that $\tanh$ is Lipschitz-continuous with $L=1$, which when used in \eqref{eq:nn3}, yields
\begin{align*}
|(Hf)(x)-(Hh)(x)|&\leq\big|\Real(f(x))-\Real(h(x))\big| \\
&+ \big|\Imag(f(x))-\Imag(h(x))\big|\\
&\leq\big|f(x)-h(x)\big| + \big|f(x)-h(x)\big|\\
&= 2|f(x)-h(x)|.
\end{align*}
Here, again, we used the Lipschitz continuity (with $L=1$) of $\Real:\mathbb{C}\to\RR$ and $\Imag:\mathbb{C}\to\RR$. Putting things together, we obtain  
\begin{align*}
\VV Hf-Hh \VV_2=&\Big(\int_{\Rd}|(Hf)(x)-(Hh)(x)|^2\mathrm dx\Big)^{1/2}\\
\leq&\,2\,\Big(\int_{\Rd}|f(x)-h(x)|^2\mathrm dx\Big)^{1/2}\\
=&\,2\,\VV f-h\VV_2,
\end{align*}
which proves that $H$ is Lipschitz-continuous with $L=2$. Since $\tanh(0)=0$, we trivially have $Hf=0$ for $f=0$. Finally, \eqref{eq:n1} is satisfied with $\rho(x):=\tanh(\Real(x))+i\tanh(\Imag(x))$.\\[1ex]
\textit{Shifted logistic sigmoid:} The shifted logistic sigmoid non-linearity\footnote{Strictly speaking, it is actually the sigmoid function $x\mapsto \frac{1}{1+e^{-x}}$ rather than the shifted sigmoid function $x\mapsto \frac{1}{1+e^{-x}}-\frac{1}{2}$ that is used in \cite{Mohamed,Glorot2}. We incorporated the offset $\frac{1}{2}$ in order to satisfy the requirement $Pf=0$ for $f=0$.} (see, e.g., \cite{Mohamed,Glorot2}) is defined as $P:L^2(\Rd)\to L^2(\Rd)$, $$(Pf)(x):=\sig(\Real(f(x)))+i\sig(\Imag(f(x))),
$$
where $\sig(x):=\frac{1}{1+e^{-x}}-\frac{1}{2}$. We first establish that $P$ is Lipschitz-continuous with $L=\frac{1}{2}$. 
To this end, fix $f,h \in L^2(\Rd)$. We have 
\begin{align}
|(Pf)(x)&-(Ph)(x)|=\big|\sig(\Real(f(x)))+i\sig(\Imag(f(x)))\nonumber\\
&-\big(\sig(\Real(h(x)))+i\sig(\Imag(h(x)))\big)\big|\nonumber\\
&\leq\big|\sig(\Real(f(x)))-\sig(\Real(h(x)))\big|\nonumber\\
&+\big|\sig(\Imag(f(x)))-\sig(\Imag(h(x)))\big|\label{eq:nn5},
\end{align}
where, again, we employed the triangle inequality. As before, to further upper-bound  \eqref{eq:nn5}, we show that $\sig$ is Lipschitz-continuous. Specifically, we apply Lemma \ref{Lip} with $\sig'(x)=\frac{e^{-x}}{(1+e^{-x})^2}$, $x\in \RR$, and hence $\sup_{x\in\RR}|\sig'(x)|\leq \frac{1}{4}$, to conclude that $\sig$ is Lipschitz-continuous with $L=\frac{1}{4}$.
When used in \eqref{eq:nn5} this yields (together with the Lipschitz continuity, with $L=1$, of $\Real:\mathbb{C}\to\RR$ and $\Imag:\mathbb{C}\to\RR$)
\begin{align}
&|(Pf)(x)-(Ph)(x)|\leq\frac{1}{4}\,\Big|\Real(f(x))-\Real(h(x))\Big| \nonumber \\
&+ \frac{1}{4}\,\Big|\Imag(f(x))-\Imag(h(x))\Big|\leq\frac{1}{4}\,\Big|f(x)-h(x)\Big| \nonumber\\
&+ \frac{1}{4}\,\Big|f(x)-h(x)\Big|= \frac{1}{2}\,\Big|f(x)-h(x)\Big|\label{eq:merci}.
\end{align}
It now follows from \eqref{eq:merci} that 
\begin{align*}
\VV Pf-Ph \VV_2=&\Big(\int_{\Rd}|(Pf)(x)-(Ph)(x)|^2\mathrm dx\Big)^{1/2}\\
\leq&\frac{1}{2}\,\Big(\int_{\Rd}|f(x)-h(x)|^2\mathrm dx\Big)^{1/2}\\
=&\, \frac{1}{2}\,\VV f-h\VV_2,
\end{align*}
which establishes Lipschitz continuity of $P$ with $L=\frac{1}{2}$. Since $\sig(0)=0$, we trivially have $Pf=0$ for $f=0$. Finally, \eqref{eq:n1} is satisfied with $\rho(x):=\sig(\Real(x))+i\sig(\Imag(x))$.

\section{Proof of Proposition \ref{main_well}}\label{app_well}
We need to show that $\Phi_\Omega(f)\in (L^2(\Rd))^\QQ$, for all $f \in L^2(\Rd)$. This will be accomplished by proving an even stronger result, namely 
\begin{equation}\label{eq:w1}
|||\Phi_\Omega(f)|||\leq\VV f\VV_2,\hspace{0.5cm}\forall f\in L^2(\Rd),
\end{equation}
which, by $\VV f \VV_2<\infty$, establishes the claim. For ease of notation, we let $f_q:=U[q]f$, for $f \in L^2(\Rd)$, in the following. Thanks to \eqref{eq:p13} and \eqref{weak_admiss2}, we have $\VV f_q\VV_2 \leq \VV f \VV _2<\infty$, and thus $f_q \in L^2(\Rd)$. The key idea of the proof is now---similarly to the proof of \cite[Proposition 2.5]{MallatS}---to judiciously employ a telescoping series argument. We start by writing
\begin{align}
||| \Phi_\Omega(f)|||^2&=\sum_{n=0}^\infty\sum_{q\in \Lambda_{1}^n} ||f_q \ast\chi_{n}||^2_2\nonumber\\
&=\lim\limits_{N\to \infty}\sum_{n=0}^N\underbrace{\sum_{q\in \Lambda_{1}^n} ||f_q\ast\chi_{n} ||^2_2}_{:=a_n}.\label{eq:w-7}
\end{align}
The key step is then to establish that $a_n$ can be upper-bounded according to
\begin{equation}\label{eq:w3}
\begin{split}
a_n\leq b_n-b_{n+1},\hspace{0.5cm} \forall n \in \mathbb{N}_0,
\end{split}
\end{equation}
with $b_n:=\sum_{q\in\Lambda_1^n}\VV f_q\VV^2_2,$ $n \in \mathbb{N}_0,$ and to use this result in a telescoping series argument according to 
 \begin{align}
&\sum_{n=0}^Na_n\leq\sum_{n=0}^N(b_n-b_{n+1})= (b_0-b_1) + (b_1-b_2) \nonumber\\&+\cdots 
+(b_N-b_{N+1})=b_0 - \underbrace{b_{N+1}}_{\geq0}\\
&\leq b_0= \sum_{q\in\Lambda_1^0}\VV f_q\VV^2_2=\VV U[e]f\VV^2_2=\VV f\VV^2_2.\label{eq:w-5}
\end{align}
By \eqref{eq:w-7} this then implies \eqref{eq:w1}.
We start by noting that \eqref{eq:w3} reads 
\begin{equation}\label{eq:ww5}
 \sum_{q\in\Lambda_1^n}\VV f_q\ast\chi_n\VV^2_2 \leq \sum_{q\in \Lambda_{1}^n} ||f_q\VV^2_2-\sum_{q\in\Lambda_1^{n+1}}\VV f_q\VV^2_2, 
 \end{equation}
for all $n \in \mathbb{N}_0$, and proceed by examining the second term on the right hand side (RHS) of \eqref{eq:ww5}. Every path 
\begin{equation*}\label{eq:ee1}
\tilde{q} \in \Lambda_{1}^{n+1}=\underbrace{\Lambda_{1}\times\dots\times\Lambda_{n}}_{=\Lambda_{1}^n}\times\Lambda_{n+1}
\end{equation*}
of length $n+1$ can  be decomposed into a path $q \in \Lambda_1^{n}$ of length $n$ and an index $\lambda_{n+1} \in \Lambda_{n+1}$ according to $\tilde{q}=(q,\lambda_{n+1})$. Thanks to \eqref{aaaa} we have  $$U[\tilde{q}]=U[(q,\lambda_{n+1})]=U_{n+1}[\lambda_{n+1}]U[q],$$ which yields 
\begin{align}
\sum_{\tilde{q}\in\Lambda_1^{n+1}}\VV f_{\tilde{q}}\VV^2_2&=\sum_{q\in\Lambda_1^{n}}\sum_{\lambda_{n+1}\in \Lambda_{n+1}}\VV U_{n+1}[\lambda_{n+1}]f_	q\VV^2_2\label{eq:w6}.
\end{align}
Substituting the second term on the RHS of \eqref{eq:ww5} by \eqref{eq:w6} now yields
\begin{align*}
 &\sum_{q\in\Lambda_1^n}\VV f_q\ast\chi_n\VV^2_2 \nonumber\\
 &\leq \sum_{q\in \Lambda_{1}^n}\Big( ||f_q\VV^2_2-\sum_{\lambda_{n+1}\in \Lambda_{n+1}}\VV U_{n+1}[\lambda_{n+1}]f_q\VV^2_2\Big), \hspace{0.5cm} \forall n \in \mathbb{N}_0,
\end{align*}
which can be rewritten as 
\begin{align}
&\sum_{q\in\Lambda_1^n}\Big(\VV f_q\ast\chi_n\VV^2_2+\sum_{\lambda_{n+1}\in\Lambda_{n+1}}\VV U_{n+1}[\lambda_{n+1}]f_q\VV^2_2\Big)\label{eq:w9}\\
&\leq \sum_{q\in \Lambda_{1}^n} ||f_q\VV^2_2, \hspace{0.5cm}\forall n\in \mathbb{N}_0.\nonumber
\end{align}
Next, note that the second term inside the sum on the left hand side (LHS) of \eqref{eq:w9} can be written as 
\begin{align}
&\sum_{\lambda_{n+1}\in\Lambda_{n+1}}\VV U_{n+1}[\lambda_{n+1}]f_q\VV^2_2\nonumber\\
&= \sum_{\lambda_{n+1}\in\Lambda_{n+1}}\int_{\Rd} |(U_{n+1}[\lambda_{n+1}]f_q)(x)|^2\mathrm dx\nonumber\\
 &= \sum_{\lambda_{n+1}\in\Lambda_{n+1}}\hspace{-0.4cm}S_{n+1}^{d}\int_{\Rd} \Big|P_{n+1}\big(M_{n+1}(f_q\ast g_{\lambda_{n+1}})\big)(S_{n+1}x)\Big|^2\mathrm dx\nonumber\\
  &= \sum_{\lambda_{n+1}\in\Lambda_{n+1}}\int_{\Rd} \Big|P_{n+1}\big(M_{n+1}(f_q\ast g_{\lambda_{n+1}})\big)(y)\Big|^2\mathrm dy\nonumber\\
&=\sum_{\lambda_{n+1}\in\Lambda_{n+1}}\VV P_{n+1}\big(M_{n+1}(f_q\ast g_{\lambda_{n+1}})\big)\VV_2^2, \label{eq:w-10}
\end{align}
for all $n \in \mathbb{N}_0$. Noting that $f_q \in L^2(\Rd)$, as established above, and $g_{\lambda_{n+1}} \in L^1(\Rd)$, by assumption, it follows that $(f_q\ast g_{\lambda_{n+1}})\in L^2(\Rd)$ thanks to Young's inequality \cite[Theorem 1.2.12]{Grafakos}. We use the Lipschitz property of $M_{n+1}$ and $P_{n+1}$, i.e., $\VV M_{n+1}(f_q\ast g_{\lambda_{n+1}})-M_{n+1}h\VV_2\leq L_{n+1}\VV f_q\ast g_{\lambda_{n+1}}-h\VV,$ and $\VV P_{n+1}(f_q\ast g_{\lambda_{n+1}})-P_{n+1}h\VV_2\leq R_{n+1}\VV f_q\ast g_{\lambda_{n+1}}-h\VV,$ together with $M_{n+1}h=0$ and  $P_{n+1}h=0$ for $h=0$, to upper-bound the term inside the sum in \eqref{eq:w-10} according to
\begin{align}
&\VV P_{n+1}\big(M_{n+1}(f_q\ast g_{\lambda_{n+1}})\big)\VV_2^2\leq R_{n+1}^2\VV M_{n+1}(f_q\ast g_{\lambda_{n+1}})\VV_2^2\nonumber\\
&\leq L_{n+1}^2R_{n+1}^2\VV f_q\ast g_{\lambda_{n+1}}\VV_2^2,\hspace{0.5cm} \forall n \in \mathbb{N}_0.\label{eq:w70}
\end{align} 
Substituting the second term inside the sum on the LHS of \eqref{eq:w9} by the upper bound resulting from insertion of \eqref{eq:w70} into \eqref{eq:w-10} yields 
\begin{align}
 &\sum_{q\in\Lambda_1^n}\Big(\VV f_q\ast\chi_n\VV^2_2+L_{n+1}^{2}R_{n+1}^{2}\sum_{\lambda_{n+1}\in\Lambda_{n+1}}\VV f_q\ast g_{\lambda_{n+1}}\VV_2^2 \Big)\nonumber\\
\leq&\sum_{q\in\Lambda_1^n}\max\{1,L^2_{n+1}R_{n+1}^2\}\Big(\VV f_q\ast\chi_n\VV^2_2\nonumber\\&+\sum_{\lambda_{n+1}\in\Lambda_{n+1}}\VV f_q\ast g_{\lambda_{n+1}}\VV_2^2 \Big), \hspace{0.5cm} \forall n \in \mathbb{N}_0.\label{eq:w-199}
\end{align}
As the functions $\{ g_{\lambda_{n+1}}\}_{\lambda_{n+1}\in\Lambda_{n+1}}\cup\{ \chi_n\}$ are the atoms of the semi-discrete frame $\Psi_{n+1}$ for $L^2(\Rd)$ and $f_q \in L^2(\Rd)$, as established above, we have
\begin{equation*}\label{eq:w15}
\VV f_q\ast\chi_n\VV^2_2+\sum_{\lambda_{n+1}\in\Lambda_{n+1}}\VV f_q\ast g_{\lambda_{n+1}}\VV_2^2\leq B_{n+1}\VV f_q\VV^2_2,
\end{equation*}
which, when used in \eqref{eq:w-199} yields
\begin{align}
&\sum_{q\in\Lambda_1^n}\Big(\VV f_q\ast\chi_n\VV^2_2+\sum_{\lambda_{n+1}\in\Lambda_{n+1}}\VV U_{n+1}[\lambda_{n+1}]f_q\VV^2_2\Big)\nonumber\\
\leq&\sum_{q\in\Lambda_1^n}\max\{1,L^2_{n+1}R^2_{n+1}\}B_{n+1}\VV f_q\VV^2_2 \nonumber\\
=&\sum_{q\in\Lambda_1^n}\max\{B_{n+1},B_{n+1}L^2_{n+1}R^2_{n+1}\}\VV f_q\VV^2_2,\label{eq:w-200}
\end{align}
for all $n \in \mathbb{N}_0$. Finally, invoking the assumption
$$\max\{B_n,B_nL_n^2R^2_{n+1} \}\leq 1,\hspace{0.5cm} \forall n \in \mathbb{N},$$
in \eqref{eq:w-200} yields \eqref{eq:w9} and thereby completes the proof.

\section{Proof of Theorem \ref{main_inv2}}\label{app_inv}
We start by proving i). The key step in establishing \eqref{eq:i2} is to show that the operator $U_n$, $n\in \mathbb{N}$, defined in \eqref{eq:e1} satisfies the relation
\begin{equation}\label{eq:i100}
U_n[\lambda_n]T_tf=T_{t/ S_n}U_n[\lambda_n]f, 
\end{equation}
 for all $f\in L^2(\Rd),$   $t\in \Rd,$ and  $\lambda_{n}\in \Lambda_n.$ With the definition of $U[q]$ in \eqref{aaaa} this then yields 
\begin{equation}\label{eq:i101}
\begin{split}
U[q]T_tf=T_{t/(S_1\cdots S_{n})}U[q]f, 
\end{split}
\end{equation}
 for all $f\in L^2(\Rd),$   $t\in \Rd,$ and  $q\in \Lambda_1^n.$ The identity \eqref{eq:i2} is then a direct consequence of \eqref{eq:i101} and the translation-covariance of the convolution operator: 
\begin{equation*}\label{eq:i5}
\begin{split}
\Phi_\Omega^n(T_tf)&=\big\{ \big(U[q]T_tf \big) \ast \chi_n \big\}_{q\in \Lambda_1^n}\\
&=\big\{ \big(T_{t/(S_1\cdots S_{n})}U[q]f\big) \ast \chi_n \big\}_{q\in \Lambda_1^n}\\
&=\big\{ T_{t/(S_1\cdots S_{n})}\big((U[q]f) \ast \chi_n\big) \big\}_{q\in \Lambda_1^n}\\
&=T_{t/(S_1\cdots S_{n})}\big\{ (U[q]f) \ast \chi_n \big\}_{q\in \Lambda_1^n}\\
&=T_{t/(S_1\cdots S_{n})}\Phi_\Omega^n(f), \hspace{0.5cm} \forall f \in L^2(\Rd), \, \forall t \in \Rd.
\end{split}
\end{equation*}
To establish \eqref{eq:i100}, we first define the unitary operator $$D_{n}:L^2(\Rd) \to L^2(\Rd), \quad D_{n}f:=S_n^{d/2}f(S_n\cdot),$$ and note that
\begin{align}
U_n[\lambda_n]T_tf&=S_n^{d/2}P_n\Big(M_n\big((T_tf)\ast g_{\lambda_n}\big)\Big)(S_n\cdot)\nonumber\\
&=D_{n}P_n\Big(M_n\big((T_tf)\ast g_{\lambda_n}\big)\Big)\nonumber\\&=D_{n}P_n\Big(M_n\big(T_t(f\ast g_{\lambda_n})\big)\Big)\nonumber\\
&=D_{n}P_n\Big(T_t\big(M_n(f\ast g_{\lambda_n})\big)\Big)\label{isi110}\\
&=D_{n}T_t\bigg(P_n\Big(\big(M_n(f\ast g_{\lambda_n})\big)\Big)\bigg),\label{eq:i10}
\end{align}
for all $f \in L^2(\Rd)$ and  $t \in \Rd$, where in \eqref{isi110} and \eqref{eq:i10} we employed $$M_nT_t=T_tM_n, \quad \text{ and } \quad P_nT_t=T_tP_n,$$ for all $n\in\mathbb{N}$ and $t\in \mathbb{R}^d$, respectively, both of which are by assumption. Next, using  
\begin{align*}
D_nT_tf&=S_n^{d/2}f(S_n\cdot-t)=S_n^{d/2}f(S_n(\cdot-t/S_n))\\
&=T_{t/S_n}D_nf,\hspace{0.5cm} \forall f \in L^2(\Rd), \, \forall t \in \Rd,
\end{align*}
in \eqref{eq:i10} yields
\begin{align*}
U_n[\lambda_n]T_tf&=D_{n}T_t\bigg(P_n\Big(\big(M_n(f\ast g_{\lambda_n})\big)\Big)\bigg)\nonumber\\
&=T_{t/S_n}\bigg(D_nP_n\Big(\big(M_n(f\ast g_{\lambda_n})\big)\Big)\bigg)\nonumber\\
&=T_{t/S_n}U_n[\lambda_n]f,
\end{align*}
for all $f\in L^2(\Rd)$ and $t\in \Rd$. This completes the proof of i). 

Next, we prove ii). For ease of notation, again, we let $f_q:=U[q]f$, for $f \in L^2(\Rd)$. Thanks to \eqref{eq:p13} and the admissibility condition \eqref{weak_admiss2}, we have $\VV f_q \VV_2 \leq \VV f \VV_2<\infty$, and thus $f_q \in L^2(\Rd)$. We first write 
\begin{align}
&||| \Phi^n_\Omega(T_tf) - \Phi^n_\Omega(f)|||^2\nonumber\\
=& \ |||T_{t/(S_1\cdots S_{n})} \Phi^n_\Omega(f) - \Phi^n_\Omega(f)|||^2\label{eq:a1}\\
=&\sum_{q\in \Lambda_1^n}\VV T_{t/(S_1\cdots S_{n})} (f_q\ast\chi_{n})-f_q\ast\chi_{n} \VV_2^2\nonumber\\
=&\sum_{q\in \Lambda_1^n}\VV M_{-t/(S_1\cdots S_{n})} (\widehat{f_q\ast\chi_{n}})-\widehat{f_q\ast\chi_{n}} \VV_2^2\label{eq:ab2},
\end{align}
for all $n \in \mathbb{N}$, where in \eqref{eq:a1} we used \eqref{eq:i2}, and in \eqref{eq:ab2} we employed Parseval's formula \cite[p. 189]{Rudin} (noting that $(f_q\ast \chi_n)\in L^2(\Rd)$ thanks to Young's inequality \cite[Theorem 1.2.12]{Grafakos}) together with the  relation $$\widehat{T_tf}=M_{-t}\widehat{f},\quad  \forall f\in L^2(\Rd), \forall \, t\in \Rd.$$ The key step is then to establish the upper bound 
\begin{align}
&\VV M_{-t/(S_1\cdots S_{n})} (\widehat{f_q\ast\chi_{n}})-\widehat{f_q\ast\chi_{n}} \VV_2^2\nonumber\\
&\leq \frac{4\pi^2| t|^2K^2}{(S_1\cdots S_{n})^2}\VV f_q \VV_2^2, \hspace{0.5cm} \forall n \in \mathbb{N},\label{eq:aaaa2}
\end{align}
where $K>0$ corresponds to the constant in the decay condition \eqref{decaycondition}, and to note that 
\begin{align}
\sum_{q\in \Lambda_1^n} \VV f_q \VV_2^2\leq \sum_{q\in \Lambda_{1}^{n-1}}\VV f_q\VV^2_2\label{eq:aa7},\hspace{0.5cm} \forall n \in \mathbb{N},
\end{align}
which follows from \eqref{eq:w3} thanks to 
\begin{align} 0&\leq\sum_{q\in \Lambda_{1}^{n-1}} ||f_q\ast\chi_{n-1} ||^2_2=a_{n-1}\leq b_{n-1}-b_{n}\\
&=  \sum_{q\in \Lambda_{1}^{n-1}}\VV f_q\VV^2_2-\sum_{q\in \Lambda_1^n} \VV f_q \VV_2^2,\hspace{0.5cm} \forall n \in \mathbb{N}.
\end{align}
 Iterating on \eqref{eq:aa7} yields 
\begin{align}
\sum_{q\in \Lambda_1^n} \VV f_q \VV_2^2&\leq \sum_{q\in \Lambda_1^{n-1}} \VV f_q \VV_2^2 \leq\dots\leq \sum_{q\in \Lambda_1^0} \VV f_q \VV_2^2\nonumber\\
&=\VV U[e]f\VV_2^2=\VV f\VV_2^2,\label{eq:aa10} \hspace{0.5cm} \forall n \in \mathbb{N}.
\end{align}
The identity \eqref{eq:ab2} together with the inequalities \eqref{eq:aaaa2} and \eqref{eq:aa10} then directly imply 
\begin{equation}\label{theend}
||| \Phi^n_\Omega(T_tf) - \Phi^n_\Omega(f)|||^2\leq \frac{4\pi^2| t|^2K^2}{(S_1\cdots S_{n})^2}\VV f \VV_2^2,
\end{equation}
for all $n \in \mathbb{N}.$ It remains to prove \eqref{eq:aaaa2}. To this end, we first note that 
\begin{align}
&\VV M_{-t/(S_1\cdots S_{n})} (\widehat{f_q\ast\chi_{n}})-\widehat{f_q\ast\chi_{n}} \VV_2^2\nonumber\\
=&\int_{\Rd}\big|e^{-2\pi i \langle t , \omega \rangle /(S_1\cdots S_{n})}-1\big|^2|\widehat{\chi_{n}}(\omega)|^2|\widehat{f_q}(\omega)|^2\mathrm d\omega\label{eq:aa21}.
\end{align}

Since $|e^{-2\pi i x} -1|\leq 2\pi |x|$, for all $ x \in \RR,$ it follows that
\begin{align}
|e^{-2\pi i \langle t , \omega \rangle /(S_1\cdots S_{n})}-1|^2&\leq \frac{4\pi^2|\langle t , \omega \rangle|^2}{(S_1\cdots S_{n})^2}\nonumber\\
&\leq \frac{4\pi^2| t|^2 |\omega|^2}{(S_1\cdots S_{n})^2},\label{eq:2}
\end{align}
where in the last step we employed the Cauchy-Schwartz inequality. Substituting \eqref{eq:2} into \eqref{eq:aa21} yields
\begin{align}
&\VV M_{-t/(S_1\cdots S_{n})} (\widehat{f_q\ast\chi_{n}})-\widehat{f_q\ast\chi_{n}} \VV_2^2\nonumber\\
\leq\, &\frac{4\pi^2| t|^2}{(S_1\cdots S_{n})^2} \int_{\Rd} |\omega|^2|\widehat{\chi_{n}}(\omega)|^2|\widehat{f_q}(\omega)|^2\mathrm d\omega\nonumber\\
\leq\, &\frac{4\pi^2| t|^2K^2}{(S_1\cdots S_{n})^2} \int_{\Rd} |\widehat{f_q}(\omega)|^2\mathrm d\omega\label{eq:aa29}\\
=\, &\frac{4\pi^2| t|^2K^2}{(S_1\cdots S_{n})^2} \,\VV \widehat{f_q}\VV_2^2= \frac{4\pi^2| t|^2K^2}{(S_1\cdots S_{n})^2} \,\VV f_q\VV_2^2,\label{miau}
\end{align}
for all $n \in \mathbb{N}$, where in \eqref{eq:aa29} we employed the decay condition \eqref{decaycondition}, and in the last step, again, we used Parseval's formula \cite[p. 189]{Rudin}. This establishes \eqref{eq:aaaa2} and thereby completes the proof of ii).
\section{Proof of Corollary \ref{main_inv55}}\label{app:covcov}
The key idea of the proof is---similarly to the proof of ii) in Theorem \ref{main_inv2}---to upper-bound the deviation from perfect covariance in the frequency domain. For ease of notation, again, we let $f_q:=U[q]f$, for $f \in L^2(\Rd)$. Thanks to \eqref{eq:p13} and the admissibility condition \eqref{weak_admiss2}, we have $\VV f_q \VV_2 \leq \VV f \VV_2<\infty$, and thus $f_q \in L^2(\Rd)$. We first write 
\begin{align}
&||| \Phi^n_\Omega(T_tf) - T_t\Phi^n_\Omega(f)|||^2\nonumber\\
&= \ |||T_{t/(S_1\cdots S_{n})} \Phi^n_\Omega(f) - T_t\Phi^n_\Omega(f)|||^2\label{eq:aa1}\\
&=\sum_{q\in \Lambda_1^n}\VV (T_{t/(S_1\cdots S_{n})}-T_t) (f_q\ast\chi_{n}) \VV_2^2\nonumber\\
&=\sum_{q\in \Lambda_1^n}\VV (M_{-t/(S_1\cdots S_{n})}-M_{-t}) (\widehat{f_q\ast\chi_{n}}) \VV_2^2\label{eq:aa2},
\end{align}
for all $n \in \mathbb{N}$, 
where in \eqref{eq:aa1} we used \eqref{eq:i2}, and in \eqref{eq:aa2} we employed Parseval's formula \cite[p. 189]{Rudin} (noting that $(f_q\ast \chi_n)\in L^2(\Rd)$ thanks to Young's inequality \cite[Theorem 1.2.12]{Grafakos}) together with the  relation $$\widehat{T_tf}=M_{-t}\widehat{f}, \quad \forall f\in L^2(\Rd), \forall\, t\in \Rd.$$ The key step is then to establish the upper bound 
\begin{align}
&\VV (M_{-t/(S_1\cdots S_{n})}-M_{-t}) (\widehat{f_q\ast\chi_{n}})\VV_2^2\nonumber\\
&\leq 4\pi^2|t|^2K^2\big|1/(S_1\cdots S_{n})-1\big|^2\VV f_q \VV_2^2,\label{eq:aaa2}
\end{align}
where $K>0$ corresponds to the constant in the decay condition \eqref{decaycondition}. Arguments similar to those leading to \eqref{theend} then complete the proof. It remains to prove \eqref{eq:aaa2}:
\begin{align}
&\VV (M_{-t/(S_1\cdots S_{n})}-M_{-t}) (\widehat{f_q\ast\chi_{n}}) \VV_2^2\nonumber\\
=&\int_{\Rd}\big|e^{-2\pi i \langle t , \omega \rangle /(S_1\cdots S_{n})}\nonumber\\
&-e^{-2\pi i \langle t , \omega \rangle}\big|^2|\widehat{\chi_{n}}(\omega)|^2|\widehat{f_q}(\omega)|^2\mathrm d\omega \label{eq:aaa21}.
\end{align}
Since $|e^{-2\pi i x} -e^{-2\pi i y}|\leq 2\pi |x-y|$, for all $ x,y \in \RR,$ it follows that
\begin{align}
&\big|e^{-2\pi i \langle t , \omega \rangle /(S_1\cdots S_{n})}-e^{-2\pi i \langle t , \omega \rangle}\big|^2\nonumber\\
&\leq 4\pi^2|t|^2| \omega|^2\big|1/(S_1\cdots S_{n})-1\big|^2\label{eq:a2},
\end{align}
where, again, we employed the Cauchy-Schwartz inequality. Substituting \eqref{eq:a2} into \eqref{eq:aaa21}, and employing arguments similar to those leading to \eqref{miau}, establishes \eqref{eq:aaa2} and thereby completes the proof. 

\section{Proof of Theorem \ref{mainmain}}\label{Appendix}
As already mentioned at the beginning of Section \ref{sec:defmstab}, the proof of the deformation sensitivity bound \eqref{mainmainmain} is based on two key ingredients. The first one, stated in Proposition \ref{summary} in Appendix \ref{proof:nonexpan}, establishes that the feature extractor $\Phi_\Omega$ is Lipschitz-continuous with Lipschitz constant $L_\Omega=1$, i.e.,
\begin{equation}\label{eq:p22}
||| \Phi_\Omega(f) -\Phi_\Omega(h)||| \leq \VV f-h \VV_2, \hspace{0.5cm} \forall f,h \in L^2(\Rd),
\end{equation}
and needs the admissibility condition \eqref{weak_admiss2} only. The second ingredient, stated in Proposition \ref{uppersuper} in Appendix \ref{proof:uppersuper}, is an upper bound on the deformation error $\VV f-F_{\tau,\omega} f\VV_2$ given by  
\begin{equation}\label{eq:p23}
\VV f-F_{\tau,\omega} f\VV_2 \leq C\big(R\VV \tau \VV_{\infty} + \VV \omega \VV_\infty \big)\VV f\VV_2,
\end{equation}
for all  $f \in L^2_R(\Rd)$, and is valid under the assumptions $\omega \in C(\Rd,\RR)$ and $\tau \in C^1(\Rd,\Rd)$ with $\VV D\tau \VV_\infty<\frac{1}{2d}$. We now show how \eqref{eq:p22} and \eqref{eq:p23} can be combined to establish the deformation sensitivity bound \eqref{mainmainmain}. To this end, we first apply \eqref{eq:p22} with $h:=F_{\tau,\omega}f=e^{2\pi i \omega(\cdot)}f(\cdot-\tau(\cdot))$ to get
\begin{equation}\label{eq:p20}
||| \Phi_\Omega(f) -\Phi_\Omega(F_{\tau,\omega} f)||| \leq \VV f-F_{\tau,\omega} f \VV_2,
\end{equation}
for all $f \in L^2(\Rd)$. Here, we used $F_{\tau,\omega}f \in L^2(\Rd)$, which is thanks to
\begin{equation*}
\begin{split}
\VV F_{\tau,\omega}f\VV_2^2 =\int_{\Rd}|f(x-\tau(x))|^2\mathrm dx
\leq 2\VV f\VV^2_2,
\end{split}
\end{equation*}
obtained through the change of variables $u=x-\tau(x)$, together with 
\begin{equation}\label{cokabss}
\frac{\mathrm du}{\mathrm dx}=|\hspace{-0.05cm}\det(E-(D \tau)(x))| \geq 1-d \VV D \tau \VV_\infty\geq 1/2, 
\end{equation}
for $x \in \Rd.$ The first inequality in \eqref{cokabss} follows from:
\begin{lemma}\cite[Corollary 1]{Brent}: \label{detabsch}
Let $M\in \RR^{d\times d}$ be such that  $|M_{i,j}|\leq \alpha$, for all $i,j$ with $1\leq i,j\leq d$. If $d \alpha \leq 1$, then 
$$|\hspace{-0.05cm}\det(E-M)|\geq 1-d \alpha.$$
\end{lemma}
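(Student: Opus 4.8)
The plan is to establish the sharper‑than‑obvious lower bound $\det(E-M)\ge 1-d\alpha$, together with positivity $\det(E-M)>0$ (so that the modulus in the statement is automatic), by controlling the traces of the powers of $M$. First I would dispose of the boundary case $d\alpha=1$, where the claim reads $|\det(E-M)|\ge 0$ and is vacuous; hence I may assume $d\alpha<1$ throughout. It is worth noting at the outset that the naive eigenvalue route is \emph{too weak}: Gershgorin's theorem places every eigenvalue $\mu$ of $M$ in a disc of radius $(d-1)\alpha$ about some diagonal entry $M_{i,i}$, giving $|\mu|\le \alpha+(d-1)\alpha=d\alpha$ and hence $|\det(E-M)|=\prod_i|1-\mu_i|\ge(1-d\alpha)^d$. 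Since $(1-d\alpha)^d\le 1-d\alpha$ for $d\ge 1$, this falls short, and a finer argument is required.

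The key idea is to pass to the logarithm of the determinant. The Gershgorin bound $|\mu|\le d\alpha<1$ shows that the spectral radius of $M$ is strictly less than one, so $E-M$ is invertible, the matrix logarithm $\log(E-M)$ equals the convergent series $-\sum_{k\ge 1}M^k/k$, and $\det(E-M)>0$: real eigenvalues of the real matrix $M$ lie in $(-1,1)$ and so contribute positive factors $1-\mu$, while complex eigenvalues occur in conjugate pairs contributing $|1-\mu|^2>0$. I would then invoke the identity $\log\det(E-M)=\mathrm{tr}\,\log(E-M)=-\sum_{k\ge 1}\mathrm{tr}(M^k)/k$, the left-hand side being a genuine real number because $\det(E-M)>0$.

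The crux is the estimate $|\mathrm{tr}(M^k)|\le(d\alpha)^k$ for every $k\ge 1$, which follows from a walk-counting argument: writing $\mathrm{tr}(M^k)=\sum_{i}\sum_{i_1,\dots,i_{k-1}}M_{i,i_1}M_{i_1,i_2}\cdots M_{i_{k-1},i}$, there are exactly $d^k$ summands, each a product of $k$ entries of modulus at most $\alpha$. Feeding this into the series gives $\log\det(E-M)=-\sum_{k\ge 1}\mathrm{tr}(M^k)/k\ge-\sum_{k\ge 1}(d\alpha)^k/k=\log(1-d\alpha)$, the last equality being the Taylor expansion of $\log(1-x)$ at $x=d\alpha\in[0,1)$. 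Exponentiating yields $\det(E-M)\ge 1-d\alpha$, and since $\det(E-M)>0$ the modulus may be reinstated. A sharpness check on the $2\times 2$ matrix with all entries equal to $\alpha$, for which $\det(E-M)=(1-\alpha)^2-\alpha^2=1-2\alpha$ exactly, confirms the bound is tight, so no cruder estimate can succeed.

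The main obstacle I anticipate is one of rigor rather than of ideas: one must justify the convergence of the matrix-logarithm and trace series and the validity of $\log\det=\mathrm{tr}\,\log$ in the presence of possibly complex eigenvalues, which is precisely where the strict inequality $d\alpha<1$ (spectral radius $<1$) is used, and one must confirm $\det(E-M)>0$ so that the real logarithm is well defined. Everything else — the Gershgorin localization and the walk-counting bound on $\mathrm{tr}(M^k)$ — is elementary.
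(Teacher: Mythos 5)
Your proposal is correct, but there is a peculiarity you should be aware of: the paper contains no proof of this lemma to compare against. The statement is imported verbatim, with a citation to Corollary 1 of Brent, Osborn, and Smith (\emph{Linear Algebra and its Applications}, 2015), and is used purely as a black box to lower-bound Jacobian determinants of the form $|\det(E-(D\tau)(x))|\geq 1-d\Vert D\tau\Vert_\infty$ in the change-of-variables steps of the deformation-sensitivity proofs. Measured against that, your argument supplies a correct, self-contained proof of the outsourced fact. Each step checks out: the boundary case $d\alpha=1$ is indeed vacuous; Gershgorin (equivalently, the row-sum norm) gives $\rho(M)\leq d\alpha<1$, so $E-M$ is invertible and the relevant series converge; positivity $\det(E-M)=\prod_i(1-\mu_i)>0$ follows from pairing complex-conjugate eigenvalues; the identity $\log\det(E-M)=-\sum_{k\geq1}\operatorname{tr}(M^k)/k$ is legitimate under $\rho(M)<1$ (and can be justified purely spectrally, summing the principal-branch expansions of $\log(1-\mu_i)$ and taking real parts, if one wishes to avoid the matrix logarithm altogether); and the walk-counting bound $|\operatorname{tr}(M^k)|\leq d^k\alpha^k$ ($d^k$ summands, each a product of $k$ entries of modulus at most $\alpha$) yields $\log\det(E-M)\geq-\sum_{k\geq1}(d\alpha)^k/k=\log(1-d\alpha)$, hence the claim after exponentiating. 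Your side remarks are also accurate: plain eigenvalue localization alone gives only $(1-d\alpha)^d\leq 1-d\alpha$, which explains why the trace expansion is genuinely needed, and the all-$\alpha$ matrix in dimension $2$ certifies tightness. This $\log\det$--trace expansion is in fact the same mechanism used in the cited reference, so your proof should be viewed as reconstructing the argument the authors chose to cite rather than as a divergent route.
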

The second inequality in \eqref{cokabss} is a consequence of the assumption $\VV D\tau \VV_\infty \leq\frac{1}{2d}$. The proof is finalized by replacing the RHS of  \eqref{eq:p20} by the RHS of \eqref{eq:p23}.

\section{Proposition \ref{summary}}\label{proof:nonexpan}
\begin{proposition}\label{summary}
 Let $\Omega=\big( (\Psi_n,M_n,P_n)\big)_{n \in \mathbb{N}}$ be an admissible module-sequence. The correspon\-ding feature extractor $\Phi_\Omega:L^2(\Rd) \to (L^2(\Rd))^\QQ$  is Lipschitz-continuous with Lipschitz constant $L_\Omega=1$, i.e.,  
\begin{equation}\label{zzz}
||| \Phi_\Omega(f) -\Phi_\Omega(h)||| \leq \VV f-h \VV_2, \hspace{0.5cm} \forall f,h \in L^2(\Rd).
\end{equation}
\end{proposition}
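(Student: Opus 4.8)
The plan is to adapt the telescoping-series argument from the proof of Proposition~\ref{main_well}, applying it to the \emph{difference} of two input signals rather than to a single signal. Fixing $f,h\in L^2(\Rd)$, I would abbreviate $f_q:=U[q]f$ and $h_q:=U[q]h$; by \eqref{eq:p13} together with the admissibility condition \eqref{weak_admiss2}, both lie in $L^2(\Rd)$. Expanding the feature-space norm gives
\[
|||\Phi_\Omega(f)-\Phi_\Omega(h)|||^2=\lim_{N\to\infty}\sum_{n=0}^N a_n,\qquad a_n:=\sum_{q\in\Lambda_1^n}\VV (f_q-h_q)\ast\chi_n\VV_2^2,
\]
so the goal reduces to establishing $\sum_{n=0}^\infty a_n\leq\VV f-h\VV_2^2$.

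The first key step is to observe that each layer operator $U_n[\lambda_n]$ is Lipschitz-continuous. Since the dilation $w\mapsto S_n^{d/2}w(S_n\,\cdot)$ is unitary on $L^2(\Rd)$ and $P_n,M_n$ are Lipschitz with constants $R_n,L_n$, I would obtain
\[
\VV U_n[\lambda_n]u-U_n[\lambda_n]v\VV_2\leq L_nR_n\VV (u-v)\ast g_{\lambda_n}\VV_2,\qquad \forall u,v\in L^2(\Rd).
\]
Decomposing paths $\tilde q\in\Lambda_1^{n+1}$ as $(q,\lambda_{n+1})$ exactly as in \eqref{eq:w6} and applying this bound with $u=f_q$, $v=h_q$ yields, with $b_n:=\sum_{q\in\Lambda_1^n}\VV f_q-h_q\VV_2^2$,
\[
b_{n+1}\leq\sum_{q\in\Lambda_1^n}L_{n+1}^2R_{n+1}^2\sum_{\lambda_{n+1}\in\Lambda_{n+1}}\VV (f_q-h_q)\ast g_{\lambda_{n+1}}\VV_2^2.
\]

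The second step is the per-layer estimate $a_n\leq b_n-b_{n+1}$. Adding $a_n$ to the preceding bound, factoring out $\max\{1,L_{n+1}^2R_{n+1}^2\}$, and invoking the frame upper bound $B_{n+1}$ of $\Psi_{n+1}$---whose atoms are $\{g_{\lambda_{n+1}}\}_{\lambda_{n+1}\in\Lambda_{n+1}}\cup\{\chi_n\}$---applied to $f_q-h_q\in L^2(\Rd)$ gives
\[
a_n+b_{n+1}\leq\sum_{q\in\Lambda_1^n}\max\{B_{n+1},B_{n+1}L_{n+1}^2R_{n+1}^2\}\VV f_q-h_q\VV_2^2\leq b_n,
\]
where the final inequality is precisely admissibility \eqref{weak_admiss2}. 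Telescoping as in \eqref{eq:w-5} then gives $\sum_{n=0}^N a_n\leq b_0-b_{N+1}\leq b_0=\VV f-h\VV_2^2$, and letting $N\to\infty$ establishes \eqref{zzz}.

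The only genuine obstacle is the nonlinearity of $U[q]$, which rules out exploiting linearity of the feature map directly. The resolution is that Lipschitz continuity of $M_n$ and $P_n$ passes cleanly to the difference $f_q-h_q$, and the frame upper bound holds for every $L^2$ function, the difference included. Consequently the entire telescoping machinery of Proposition~\ref{main_well} transfers with $\VV f_q\VV_2$ replaced throughout by $\VV f_q-h_q\VV_2$, and no new structural ingredient beyond admissibility is required. Note that the hypotheses $M_nf=P_nf=0$ for $f=0$ enter only indirectly, through the membership $f_q,h_q\in L^2(\Rd)$.
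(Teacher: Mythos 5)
Your proposal is correct and takes essentially the same route as the paper's own proof: the identical telescoping argument with $a_n\leq b_n-b_{n+1}$ for $b_n=\sum_{q\in\Lambda_1^n}\Vert f_q-h_q\Vert_2^2$, established through the path decomposition of $\Lambda_1^{n+1}$, unitarity of the dilation, Lipschitz continuity of $M_{n+1}$ and $P_{n+1}$ combined with linearity of convolution, the frame upper bound $B_{n+1}$ applied to $f_q-h_q$, and finally the admissibility condition. Your closing observation that the conditions $M_nf=0$ and $P_nf=0$ for $f=0$ enter only through the membership $f_q,h_q\in L^2(\mathbb{R}^d)$ (via \eqref{eq:p13}) is also accurate.
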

\begin{remark}
Proposition \ref{summary} generalizes \cite[Proposition 2.5]{MallatS}, which shows that the wavelet-modulus feature extractor $\Phi_W$ generated by scattering networks is Lipschitz-continuous with Lipschitz constant $L_W=1$. Specifically, our generalization allows for general semi-discrete frames (i.e., general convolution kernels), general Lipschitz-continuous non-linearities $M_n$, and general Lipschitz-continuous operators $P_n$, all of which can be different in different layers. Moreover, thanks to the admissibility condition \eqref{weak_admiss2}, the Lipschitz constant $L_\Omega=1$ in \eqref{zzz} is completely independent of the frame upper bounds $B_n$ and the Lipschitz-constants $L_n$ and $R_n$ of $M_n$ and $P_n$, respectively.
\end{remark}
\begin{proof}
The key idea of the proof is again---similarly to the proof of Proposition \ref{main_well} in Appendix \ref{app_well}---to judiciously employ a telescoping series argument. For ease of notation, we let $f_q:=U[q]f$ and $h_q:=U[q]h$, for $f,h \in L^2(\Rd)$. Thanks to \eqref{eq:p13} and the admissibility condition \eqref{weak_admiss2}, we have $\VV f_q\VV_2 \leq \VV f \VV _2<\infty$ and $\VV h_q\VV_2 \leq \VV h \VV _2<\infty$ and thus $f_q, h_q \in L^2(\Rd)$.  We start by writing 
\begin{equation*}
\begin{split}
&||| \Phi_\Omega(f) -\Phi_\Omega(h)|||^2=\sum_{n=0}^\infty\sum_{q\in \Lambda_{1}^n} ||f_q\ast\chi_n - h_q\ast\chi_n ||^2_2\\ 
&=\lim\limits_{N\to \infty}\sum_{n=0}^N\underbrace{\sum_{q\in \Lambda_{1}^n} ||f_q\ast\chi_n - h_q\ast\chi_n ||^2_2}_{=:a_n}.\\ 
\end{split}
\end{equation*}
As in the proof of Proposition \ref{main_well} in Appendix \ref{app_well}, the key step is to show that $a_n$ can be upper-bounded according to
\begin{equation}\label{eq:p1}
\begin{split}
a_n\leq b_n-b_{n+1},\hspace{0.5cm} \forall \,n \in \mathbb{N}_0,
\end{split}
\end{equation}
where here $$b_n:=\sum_{q\in\Lambda_1^n}\VV f_q-h_q\VV^2_2,\quad \forall\, n\in \mathbb{N}_0,$$
and to note that, similarly to \eqref{eq:w-5}, 
\begin{align*}
\sum_{n=0}^Na_n&\leq\sum_{n=0}^N(b_n-b_{n+1})= (b_0-b_1) + (b_1-b_2) \nonumber\\
&+\cdots +(b_N-b_{N+1})=b_0 - \underbrace{b_{N+1}}_{\geq0}\\
&\leq b_0= \sum_{q\in\Lambda_1^0}\VV f_q-h_q\VV^2_2=\VV U[e]f-U[e]h\VV^2_2\nonumber\\
&=\VV f-h\VV^2_2,
\end{align*}
which then yields  \eqref{zzz} according to
\begin{align*}
||| \Phi_\Omega(f) -\Phi_\Omega(h)|||^2&=\lim\limits_{N\to \infty}\sum_{n=0}^Na_n\leq\lim\limits_{N\to \infty}  \VV f-h\VV^2_2\nonumber\\
&=\VV f-h\VV^2_2.
\end{align*}
Writing out \eqref{eq:p1}, it follows that we need to establish 
\begin{align}
 &\sum_{q\in\Lambda_1^n}\VV f_q\ast\chi_n-h_q\ast\chi_n\VV^2_2 \nonumber\\
 &\leq \sum_{q\in \Lambda_{1}^n} ||f_q-h_q\VV^2_2-\sum_{q\in\Lambda_1^{n+1}}\VV f_q - h_q\VV^2_2, \label{eq:p5}
\end{align}
for all  $n \in \mathbb{N}_0$. We start by examining the second term on the RHS of \eqref{eq:p5} and note that, thanks to the decomposition $$\tilde{q} \in \Lambda_{1}^{n+1}=\underbrace{\Lambda_{1}\times\dots\times\Lambda_{n}}_{=\Lambda_{1}^n}\times\Lambda_{n+1}$$
and $U[\tilde{q}]=U[(q,\lambda_{n+1})]=U_{n+1}[\lambda_{n+1}]U[q]$, by \eqref{aaaa}, we have 
\begin{align}
\sum_{\tilde{q}\in\Lambda_1^{n+1}}\VV f_{\tilde{q}}-h_{\tilde{q}}\VV^2_2&=\sum_{q\in\Lambda_1^{n}}\sum_{\lambda_{n+1}\in \Lambda_{n+1}}\VV U_{n+1}[\lambda_{n+1}]f_q\nonumber\\&-U_{n+1}[\lambda_{n+1}]h_q\VV^2_2.\label{eq:6}
\end{align}
Substituting \eqref{eq:6} into \eqref{eq:p5} and rearranging terms, we obtain 
\begin{align}
&\sum_{q\in\Lambda_1^n}\Big(\VV f_q\ast\chi_n-h_q\ast\chi_n\VV^2_2 \nonumber\\
&+\sum_{\lambda_{n+1}\in\Lambda_{n+1}}\VV U_{n+1}[\lambda_{n+1}]f_q-U_{n+1}[\lambda_{n+1}]h_q\VV^2_2\Big)\nonumber\\&\leq \sum_{q\in \Lambda_{1}^n} ||f_q - h_q\VV^2_2, \hspace{0.5cm} \forall n \in \mathbb{N}_0\label{eq:p9}.
\end{align}
We next note that the second term inside the sum on the LHS of \eqref{eq:p9} satisfies 
\begin{align}&
\sum_{\lambda_{n+1}\in\Lambda_{n+1}}\VV U_{n+1}[\lambda_{n+1}]f_q-U_{n+1}[\lambda_{n+1}]h_q\VV^2_2\nonumber\\
&\leq \sum_{\lambda_{n+1}\in\Lambda_{n+1}}\VV P_{n+1}\big(M_{n+1}(f_q\ast g_{\lambda_{n+1}})\big)\nonumber\\
&-P_{n+1}\big(M_{n+1}(h_q\ast g_{\lambda_{n+1}})\big)\VV_2^2\label{eq:ee2}, 
\end{align}
where we employed arguments similar to those leading to \eqref{eq:w-10}. Substituting the second term inside the sum on the LHS of \eqref{eq:p9} by the upper bound \eqref{eq:ee2}, and using the Lipschitz property of $M_{n+1}$ and $P_{n+1}$ yields 
\begin{align}
&\sum_{q\in\Lambda_1^n}\Big(\VV f_q\ast\chi_n-h_q\ast\chi_n\VV^2_2 \nonumber\\
&+\sum_{\lambda_{n+1}\in\Lambda_{n+1}}\VV U_{n+1}[\lambda_{n+1}]f_q-U_{n+1}[\lambda_{n+1}]h_q\VV^2_2\Big)\nonumber\\
\leq&\sum_{q\in\Lambda_1^n}\max\{1,L^2_{n+1}R_{n+1}^2\}\Big(\VV (f_q-h_q)\ast\chi_n\VV^2_2\nonumber\\
&+\sum_{\lambda_{n+1}\in\Lambda_{n+1}}\VV (f_q-h_q)\ast g_{\lambda_{n+1}}\VV_2^2 \Big),\label{eq:e-199}
\end{align}
for all $n\in \mathbb{N}_0$. As the functions $\{ g_{\lambda_{n+1}}\}_{\lambda_{n+1}\in\Lambda_{n+1}}\cup\{ \chi_n\}$ are the atoms of the semi-discrete frame $\Psi_{n+1}$ for $L^2(\Rd)$ and $f_q, h_q \in L^2(\Rd)$, as established above, we have
\begin{align*}
&\VV (f_q-h_q)\ast\chi_n\VV^2_2+\sum_{\lambda_{n+1}\in\Lambda_{n+1}}\VV (f_q-h_q)\ast g_{\lambda_{n+1}}\VV_2^2\\
&\leq B_{n+1}\VV f_q-h_q\VV^2_2,
\end{align*}
which, when used in \eqref{eq:e-199} yields
\begin{align}
&\sum_{q\in\Lambda_1^n}\Big(\VV f_q\ast\chi_n-h_q\ast\chi_n\VV^2_2 \nonumber\\
&+\sum_{\lambda_{n+1}\in\Lambda_{n+1}}\VV U_{n+1}[\lambda_{n+1}]f_q-U_{n+1}[\lambda_{n+1}]h_q\VV^2_2\Big)\nonumber\\
\leq&\sum_{q\in\Lambda_1^n}\max\{B_{n+1},B_{n+1}L^2_{n+1}R_{n+1}^2\}\VV f_q-h_q\VV^2_2,\label{eq:e-200}
\end{align}
for all $n \in \mathbb{N}_0$.
Finally, invoking the admissibility condition
$$\max\{B_n,B_nL_n^2R_{n}^2 \}\leq 1,\hspace{0.5cm} \forall n \in \mathbb{N},$$
in \eqref{eq:e-200} we get \eqref{eq:p9} and hence \eqref{eq:p1}. This completes the proof.
\end{proof}
\section{Proposition \ref{uppersuper}}\label{proof:uppersuper}
\begin{proposition}\label{uppersuper}
There exists a constant $C>0$ such that for all $f\in L^2_R(\Rd)$,  $\omega \in C(\Rd,\RR)$, and $\tau \in C^1(\Rd,\Rd)$ with $\VV D\tau \VV_\infty<\frac{1}{2d}$, it holds that 
\begin{equation}\label{eq:p21}
\VV f-F_{\tau,\omega} f\VV_2 \leq C\big(R\VV \tau \VV_{\infty} + \VV \omega \VV_\infty \big)\VV f\VV_2.
\end{equation}
\end{proposition}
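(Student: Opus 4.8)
The plan is to split the deformation error into a pure translation-deformation part and a modulation part via the triangle inequality
\[
\VV f - F_{\tau,\omega}f\VV_2 \leq \VV f - F_\tau f\VV_2 + \VV F_\tau f - F_{\tau,\omega}f\VV_2,
\]
where $(F_\tau f)(x):=f(x-\tau(x))$ denotes the deformation without modulation. The modulation term is handled by elementary estimates: since $(F_\tau f - F_{\tau,\omega}f)(x)=(1-e^{2\pi i\omega(x)})f(x-\tau(x))$ and $|1-e^{2\pi i\omega(x)}|\leq 2\pi|\omega(x)|\leq 2\pi\VV\omega\VV_\infty$, we get $\VV F_\tau f - F_{\tau,\omega}f\VV_2\leq 2\pi\VV\omega\VV_\infty\VV F_\tau f\VV_2$. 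The factor $\VV F_\tau f\VV_2$ is controlled by the change of variables $u=x-\tau(x)$ together with the Jacobian lower bound from Lemma \ref{detabsch}, exactly as in the proof of Theorem \ref{mainmain}, giving $\VV F_\tau f\VV_2\leq\sqrt{2}\VV f\VV_2$.

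The core of the argument is the deformation term $\VV f - F_\tau f\VV_2$, and this is where band-limitedness enters. First I would represent the pointwise difference via the fundamental theorem of calculus applied to $t\mapsto f(x-t\tau(x))$:
\[
f(x)-f(x-\tau(x))=\int_0^1 \langle(\nabla f)(x-t\tau(x)),\,\tau(x)\rangle\,\mathrm dt,
\]
which is legitimate because every $f\in L^2_R(\Rd)$ is smooth (Paley--Wiener) with $\nabla f\in L^2(\Rd)$. Bounding the inner product by $\VV\tau\VV_\infty|(\nabla f)(x-t\tau(x))|$, taking $L^2$-norms in $x$, and applying Minkowski's integral inequality to exchange the $L^2$-norm with the $t$-integral yields
\[
\VV f - F_\tau f\VV_2\leq\VV\tau\VV_\infty\int_0^1 \VV(\nabla f)(\cdot-t\tau(\cdot))\VV_2\,\mathrm dt.
\]
For each fixed $t\in[0,1]$ the change of variables $u=x-t\tau(x)$---valid as a $C^1$-diffeomorphism since $\VV t D\tau\VV_\infty\leq\VV D\tau\VV_\infty<\tfrac{1}{2d}$---together with the Jacobian bound of Lemma \ref{detabsch} gives $\VV(\nabla f)(\cdot-t\tau(\cdot))\VV_2\leq\sqrt{2}\VV\nabla f\VV_2$. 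The final ingredient is a Bernstein-type inequality: from $\widehat{\partial_j f}(\omega)=2\pi i\omega_j\widehat{f}(\omega)$ and $\supp(\widehat{f})\subseteq B_R(0)$ we obtain $\VV\nabla f\VV_2^2=4\pi^2\int_{\Rd}|\omega|^2|\widehat{f}(\omega)|^2\,\mathrm d\omega\leq 4\pi^2R^2\VV f\VV_2^2$, hence $\VV\nabla f\VV_2\leq 2\pi R\VV f\VV_2$.

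Assembling the pieces gives $\VV f - F_\tau f\VV_2\leq 2\sqrt{2}\pi R\VV\tau\VV_\infty\VV f\VV_2$ and, combined with the modulation estimate, the claimed bound \eqref{eq:p21} with the explicit universal constant $C=2\sqrt{2}\pi$, manifestly independent of the module-sequence. I expect the main obstacle to be the rigorous justification of the interchange and change-of-variables steps in the deformation term---specifically verifying that $x\mapsto x-t\tau(x)$ is a global $C^1$-diffeomorphism of $\Rd$ (a perturbation-of-identity argument relying on $\VV D\tau\VV_\infty<\tfrac{1}{2d}$) and that Minkowski's integral inequality applies---whereas the band-limited estimate itself reduces to a direct Fourier-domain computation.
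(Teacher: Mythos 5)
Your proposal is correct, and it establishes \eqref{eq:p21} by a genuinely different route than the paper. The paper never differentiates $f$ itself: it exploits the reproducing identity $f=f\ast\gamma$, valid for all $f\in L^2_R(\Rd)$ with $\gamma:=R^d\eta(R\,\cdot)$ and $\eta$ a Schwartz function whose Fourier transform equals $1$ on $B_1(0)$, recasts $F_{\tau,\omega}f-f$ as the action of the integral operator $K=F_{\tau,\omega}A_\gamma-A_\gamma$ with kernel \eqref{int_kernel}, and bounds $\VV K\VV_{2,2}$ via Schur's Lemma; the kernel estimates come from a Taylor/fundamental-theorem-of-calculus expansion of $\lambda\mapsto e^{2\pi i\lambda\omega(x)}\gamma(x-\lambda\tau(x)-u)$, and the factor $R$ is produced by the scaling relation $\VV\nabla\gamma\VV_1=R\VV\nabla\eta\VV_1$. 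You instead split off the modulation term by the triangle inequality and differentiate $f$ directly (legitimate by Paley--Wiener), letting $R$ enter through Bernstein's inequality $\VV\nabla f\VV_2\leq 2\pi R\VV f\VV_2$. Both arguments rest on the same two low-level ingredients---a first-order expansion along the deformation path and the Jacobian bound of Lemma \ref{detabsch}---but apply them to different objects: the kernel of $K$ in the paper, the signal $f$ in your proof. Your route is more elementary (no Schur test, no verification of local integrability of the kernel, no auxiliary mollifier) and yields the explicit, dimension-independent constant $C=2\sqrt{2}\pi$, whereas the paper's constant \eqref{stab_const} depends on $\VV\eta\VV_1$ and $\VV\nabla\eta\VV_1$ and hence implicitly on $d$ (both constants are, as Theorem \ref{mainmain} requires, independent of the module-sequence $\Omega$). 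The paper's route, in turn, places all derivatives on the fixed mollifier $\gamma$ rather than on the input signal, handles deformation and modulation jointly in a single kernel estimate (your split costs only a harmless factor $\sqrt{2}$), and deliberately mirrors Mallat's scattering-network argument, which is the template the authors reuse. The one step you should spell out is the one you flagged yourself: for each $t\in[0,1]$, the map $x\mapsto x-t\tau(x)$ must be shown to be a bijection of $\Rd$ for the change of variables to be valid (a standard perturbation-of-identity/contraction argument using $\VV D\tau\VV_\infty<\frac{1}{2d}$); note that the paper relies on the same fact without comment when invoking \eqref{cokabss}.
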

\begin{remark}
A similar bound  was derived in \cite[App. B]{MallatS} for scattering networks, namely
\begin{equation}\label{eq:w92}
\VV f\ast \psi_{(-J,0)} - F_{\tau} (f\ast\psi_{(-J,0)})\VV_2\leq C2^{-J+d}\VV \tau \VV_\infty \VV f \VV_2,
\end{equation}
for all $f \in L^2(\Rd)$, where $\psi_{(-J,0)}$ is the low-pass filter of a semi-discrete directional wavelet frame for $L^2(\Rd)$, and $(F_\tau f)(x)=f(x-\tau(x))$. The techniques for  proving \eqref{eq:p21} and \eqref{eq:w92} are related in the sense of both employing Schur's Lemma \cite[App. I.1]{Grafakos} and a Taylor series expansion argument \cite[p. 411]{RudinComplex}. The signal-class specificity of our bound \eqref{eq:p21} comes with new technical elements  detailed at the beginning of the proof. 
\end{remark}
\begin{proof}
We first determine an integral operator  \begin{equation}\label{eq:p25}(Kf)(x)=\int_{\Rd} k(x,u)f(u)\mathrm du\end{equation} satisfying the signal-class specific identity $$Kf=F_{\tau,\omega}f-f, \quad \forall f\in L^2(\Rd),$$ and then upper-bound the deformation error $\VV f-F_{\tau,\omega} f \VV_2$ according to 
\begin{equation*}
\VV f-F_{\tau,\omega} f \VV_2=\VV F_{\tau,\omega} f -f\VV_2=\VV Kf\VV_2\leq \VV K \VV_{2,2} \VV f \VV_2, 
\end{equation*}
for all $f \in L^2_R(\Rd).$ Application of Schur's Lemma, stated below, then yields 
\begin{equation*}\label{eq:p26}
\VV K \VV_{2,2}\leq C \big(R\VV \tau \VV_\infty +\VV \omega \VV_\infty\big),\hspace{0.5cm} \text{ with } C>0,
\end{equation*}
which completes the proof. 
\begin{SL}\cite[App. I.1]{Grafakos}: 
Let $k:\Rd \times \Rd \to \mathbb{C}$ be a locally integrable function satisfying 
\begin{align}
&(i)\ \sup_{x\in \Rd} \int_{\Rd}|k(x,u)|\mathrm du \leq \alpha,\nonumber\\
&(ii)\ \sup_{u \in \Rd}\int_{\Rd}|k(x,u)|\mathrm dx \leq \alpha,\label{condi:Schur}
\end{align} where $\alpha>0$. Then, $(Kf)(x)=\int_{\Rd} k(x,u)f(u)\mathrm du$ is a bounded operator from $L^2(\Rd)$ to $L^2(\Rd)$ with operator norm $\VV K \VV_{2,2}\leq \alpha$.
\end{SL}
We start by determining the integral operator $K$ in \eqref{eq:p25}. To this end, consider $ \eta \in S(\Rd,\mathbb{C})$ such that $\widehat{\eta}(\omega)=1$, for all $ \omega\in B_{1}(0)$. Setting $\gamma(x):=R^d  \eta(Rx)$ yields $\gamma \in S(\Rd,\mathbb{C})$ and $\widehat{\gamma}(\omega)=\widehat{\eta}(\omega/R)$. Thus, $\widehat{\gamma}(\omega)=1$, for all $\omega \in B_{R}(0)$, and hence $\widehat{f}=\widehat{f}\cdot \widehat{\gamma}$, so that $f=f\ast \gamma$, for all $f \in L^2_R(\Rd)$. Next, we define the operator $A_{\gamma}:L^2(\Rd) \to L^2(\Rd)$, $A_{\gamma}f:=f\ast \gamma$, and note that $A_\gamma$ is well-defined, i.e., $A_\gamma f\in L^2(\Rd)$, for all $f \in L^2(\Rd)$, thanks to Young's inequality \cite[Theorem 1.2.12]{Grafakos} (since $f\in L^2(\Rd)$ and $\gamma \in S(\Rd,\mathbb{C})\subseteq L^1(\Rd)$). Moreover, $A_\gamma f=f$, for all $f\in L^2_R(\Rd)$. Setting $K:=F_{\tau,\omega}A_\gamma - A_\gamma$, we get $Kf=F_{\tau,\omega}A_\gamma f- A_\gamma f=F_{\tau,\omega}f-f$, for all $f\in L^2_R(\Rd)$, as desired. Furthermore, it follows from 
\begin{equation*}
\begin{split}
(F_{\tau,\omega} A_\gamma f)(x)&=e^{2\pi i \omega(x)}\int_{\Rd}  \gamma(x-\tau(x)-u)f(u)\mathrm du,\\
\end{split}
\end{equation*}
that the integral operator $K=F_{\tau,\omega} A_{\gamma}-A_{\gamma}$, i.e.,
$$(Kf)(x)=\int_{\Rd} k(x,u)f(u)\mathrm du,$$ has the kernel
\begin{equation}\label{int_kernel}
k(x,u):= e^{2\pi i \omega(x)}\gamma(x-\tau(x)-u)-\gamma(x-u).
\end{equation} 
Before we can apply Schur's Lemma to establish an upper bound on $\VV K \VV_{2,2}$, we  need to verify that $k$ in \eqref{int_kernel} is locally integrable, i.e., we need to show that for every compact set $S\subseteq \Rd\times \Rd$ we have $$\int_{S}|k(x,u)|\mathrm d(x,u)<\infty.$$ To this end, let $S\subseteq \Rd\times\Rd$ be a compact set. Next, choose compact sets $S_1,S_2 \subseteq \Rd$ such that $S\subseteq S_1\times S_2$. Thanks to $\gamma \in S(\Rd,\mathbb{C})$, $\tau \in C^1(\Rd,\Rd)$, and $\omega \in C(\Rd,\RR)$, all by assumption, the function $|k|:S_1\times S_2\to \mathbb{C}$ is continuous as a composition of continuous functions, and therefore also Lebesgue-measurable. We further have 
\begin{align}
&\int_{S_1} \int_{S_2} | k(x,u)|\mathrm dx   \mathrm du \leq \int_{S_1} \int_{\Rd} | k(x,u)|\mathrm dx  \mathrm du\nonumber\\
\leq&\int_{S_1} \int_{\Rd} | \gamma(x-\tau(x)-u)|\mathrm dx   \mathrm du + \int_{S_1} \int_{\Rd} | \gamma(x-u)|\mathrm dx   \mathrm du\nonumber \\ \leq&\ 2 \int_{S_1} \int_{\Rd} | \gamma(y)|\mathrm dy  \mathrm du + \int_{S_1} \int_{\Rd} | \gamma(y)|\mathrm dy \  \mathrm du\label{eq:p40}\\
=&\,3\mu_L(S_1)\VV \gamma \VV_1<\infty,\nonumber
\end{align}
where the first term in \eqref{eq:p40} follows by the change of variables $y=x-\tau(x) -u$, together with 
\begin{equation}\label{helpabcd}
\frac{\mathrm dy}{\mathrm dx}=|\hspace{-0.05cm}\det(E- (D \tau)(x))| \geq 1- d\VV D \tau \VV_\infty\geq 1/2,
\end{equation}
for all $x\in \Rd$. The arguments underlying \eqref{helpabcd} were already detailed at the end of Appendix \ref{Appendix}. It follows that $k$ is locally integrable owing to  
\begin{align}
\int_{S}|k(x,u)|\mathrm d(x,u)&\leq \int_{S_1\times S_2}|k(x,u)|\mathrm d(x,u)\nonumber\\
&=\int_{S_1} \int_{S_2} | k(x,u)|\mathrm dx  \mathrm du<\infty,\label{eq:www111}
\end{align}
where the first step in \eqref{eq:www111} follows from $S\subseteq S_1\times S_2$, the second step is thanks to the Fubini-Tonelli Theorem \cite[Theorem 14.2]{Fubini} noting that $|k|:S_1\times S_2\to \mathbb{C}$ is Lebesgue-measurable (as established above) and non-negative, and the last step is due to \eqref{eq:p40}. Next, we need to verify conditions (i) and (ii) in \eqref{condi:Schur} and determine the corresponding $\alpha>0$. In fact, we seek a specific constant $\alpha$  of the form  
\begin{equation}\label{eq:p43}
\alpha=C \big(R\VV \tau \VV_\infty +\VV \omega \VV_\infty\big), \hspace{0.5cm}\text{ with } C>0.
\end{equation}
This will be accomplished as follows: For $x,u \in \Rd$, we parametrize the integral kernel in \eqref{int_kernel} according to  
$h_{x,u}(t):=e^{2\pi i t \omega(x)}\gamma(x-t\tau(x)-u)-\gamma(x-u).$ A Taylor series expansion \cite[p. 411]{RudinComplex} of $h_{x,u}(t)$ w.r.t. the variable $t$  now yields
\begin{equation}\label{ea:ee2}
h_{x,u}(t)=\underbrace{h_{x,u}(0)}_{=0}+\int_{0}^t h'_{x,u}(\lambda)\mathrm d\lambda=\int_{0}^t h'_{x,u}(\lambda)\mathrm d\lambda,
\end{equation} 
for  $t \in \RR$, where $h'_{x,u}(t)=(\frac{\mathrm d}{\mathrm dt}h_{x,u})(t)$. Note that $h_{x,u}\in C^1(\RR,\mathbb{C})$ thanks to $\gamma \in S(\Rd,\mathbb{C})$. Setting $t=1$ in \eqref{ea:ee2} we get 
\begin{equation}\label{eq:p46}
|k(x,u)|=|h_{x,u}(1)|\leq \int_{0}^1|h'_{x,u}(\lambda)|\mathrm d\lambda,
\end{equation}
where
\begin{align}
h'_{x,u}(\lambda)&=-e^{2\pi i \lambda\omega(x)}\langle \nabla \gamma(x-\lambda\tau(x)-u),\tau(x)\rangle\nonumber\\
&+2\pi i \omega(x)e^{2\pi i \lambda \omega(x)}\gamma(x-\lambda \tau(x)-u),
\end{align}
for $\lambda \in [0,1]$. We further have
\begin{align}
|h'_{x,u}(\lambda)|&\leq \big| \big\langle \nabla \gamma(x-\lambda \tau(x)-u), \tau(x)\big\rangle \big|\nonumber\\
&+ | 2\pi \omega(x)\gamma(x-\lambda\tau(x)-u)|\nonumber\\
&\leq |\tau(x)| |\nabla \gamma(x-\lambda \tau(x)-u)|\nonumber\\
&+ 2\pi | \omega(x)| |\gamma(x-\lambda\tau(x)-u)|\label{eq:p42}.
\end{align}
Now, using $|\tau(x)|\leq \sup_{y\in \Rd}|\tau(y)|=\VV \tau \VV_{\infty}$ and  $|\omega(x)|\leq \sup_{y\in \Rd}|\omega(y)|=\VV \omega \VV_{\infty}$
in \eqref{eq:p42}, together with \eqref{eq:p46}, we get the upper bound 
\begin{align}
|k(x,u)&|\leq\VV \tau\VV_\infty\int_{0}^1 |\nabla \gamma(x-\lambda \tau(x)-u)|\mathrm d\lambda\nonumber\\
&+ 2\pi \VV \omega\VV_\infty\int_{0}^1 |\gamma(x-\lambda\tau(x)-u)|\mathrm d\lambda\label{eq:p44}.
\end{align}
Next, we integrate \eqref{eq:p44} w.r.t. $u$ to establish (i) in \eqref{condi:Schur}:
\begin{align}
\int_{\Rd}& |k(x,u)|\mathrm du\nonumber\\ 
\leq\,& \VV \tau\VV_\infty\int_{\Rd}\int_{0}^1 |\nabla \gamma(x-\lambda \tau(x)-u)|\mathrm d\lambda  \mathrm du\nonumber\\ +\,& 2\pi \VV \omega\VV_\infty \int_{\Rd}\int_{0}^1 |\gamma(x-\lambda\tau(x)-u)|\mathrm d\lambda  \mathrm du \nonumber\\
=\,& \VV \tau\VV_\infty\int_{0}^1 \int_{\Rd}|\nabla \gamma(x-\lambda \tau(x)-u)| \mathrm du  \mathrm d\lambda\nonumber\\ +\,& 2\pi \VV \omega\VV_\infty\int_{0}^1\int_{\Rd} |\gamma(x-\lambda\tau(x)-u)| \mathrm du  \mathrm  d\lambda\label{eq:p60} \\
=\,& \VV \tau\VV_\infty\int_{0}^1 \int_{\Rd}|\nabla \gamma(y)|\mathrm dy  \mathrm d\lambda \nonumber\\
+\,&2\pi \VV \omega\VV_\infty\int_{0}^1\int_{\Rd} |\gamma(y)| \mathrm dy  \mathrm d\lambda\nonumber\\
=\,&\VV \tau\VV_\infty\VV \nabla \gamma\VV_1 +2\pi \VV \omega\VV_\infty\VV \gamma\VV_1\label{eq:eee1},
\end{align}
where \eqref{eq:p60} follows by application of the Fubini-Tonelli Theorem \cite[Theorem 14.2]{Fubini} noting that the functions $(u,\lambda) \mapsto  |\nabla\gamma(x-\lambda\tau(x)-u)|$, $(u,\lambda)\in\Rd\times[0,1]$,  and $(u,\lambda) \mapsto |\gamma(x-\lambda\tau(x)-u)|$, $ (u,\lambda)\in\Rd\times[0,1]$, are both non-negative and continuous (and thus Lebesgue-measurable) as  compositions of continuous functions. Finally, using $\gamma=R^d\eta(R\cdot)$, and thus $\nabla \gamma=R^{d+1}\nabla \eta(R\cdot)$, $\VV \gamma \VV_1=\VV \eta\VV_1$, and $\VV \nabla \gamma \VV_1=R\VV \nabla \eta\VV_1$ in \eqref{eq:eee1} yields
\begin{align}
&\sup_{x\in \Rd} \int_{\Rd}|k(x,u)|\mathrm du \leq R\VV \tau\VV_\infty\VV \nabla \eta\VV_1 +2\pi \VV \omega\VV_\infty\VV \eta\VV_1\nonumber\\
&\leq \max\big\{\VV \nabla \eta\VV_1, 2\pi\VV \eta\VV_1\big\}\big(R\VV \tau\VV_\infty+\VV \omega\VV_\infty \big)\label{eq:p50},
\end{align}
which establishes an upper bound of the form (i) in \eqref{condi:Schur} that exhibits the desired structure for $\alpha$. Condition (ii) in \eqref{condi:Schur} is established similarly by integrating \eqref{eq:p44} w.r.t. $x$ according to 
 \begin{align}
\int_{\Rd}& |k(x,u)|\mathrm dx\nonumber\\
\leq\,& \VV \tau\VV_\infty\int_{\Rd}\int_{0}^1 |\nabla \gamma(x-\lambda \tau(x)-u)|\mathrm d\lambda  \mathrm dx\nonumber\\+\,& 2\pi \VV \omega\VV_\infty \int_{\Rd}\int_{0}^1 |\gamma(x-\lambda\tau(x)-u)|\mathrm d\lambda  \mathrm dx\nonumber\\
=\,& \VV \tau\VV_\infty\int_{0}^1 \int_{\Rd}|\nabla \gamma(x-\lambda \tau(x)-u)| \mathrm dx  \mathrm d\lambda\nonumber\\+\,& 2\pi \VV \omega\VV_\infty\int_{0}^1\int_{\Rd} |\gamma(x-\lambda\tau(x)-u)| \mathrm dx \mathrm d\lambda\label{eq:p47}\\
\leq\,& 2\,\VV \tau\VV_\infty\int_{0}^1 \int_{\Rd}|\nabla \gamma(y)| \mathrm dy  \mathrm d\lambda \nonumber\\+\,&4\pi \VV \omega\VV_\infty\int_{0}^1\int_{\Rd} |\gamma(y)| \mathrm dy  \mathrm d\lambda\label{eq:eM}\\
=\,&2\,\VV \tau\VV_\infty\VV \nabla \gamma\VV_1 4\pi \VV \omega\VV_\infty\VV \gamma\VV_1\nonumber\\\leq\,& \max\big\{2\VV \nabla \eta\VV_1, 4\pi\VV \eta\VV_1\big\}\big(R\VV \tau\VV_\infty+\VV \omega\VV_\infty \big)\label{eq:p55}.
\end{align}
Here, again, \eqref{eq:p47} follows by application of the Fubini-Tonelli Theorem \cite[Theorem 14.2]{Fubini} noting that the functions $(x,\lambda) \mapsto  |\nabla\gamma(x-\lambda\tau(x)-u)|$, $ (x,\lambda)\in\Rd\times[0,1]$,  and $(x,\lambda) \mapsto |\gamma(x-\lambda\tau(x)-u)|$, $ (x,\lambda)\in\Rd\times[0,1]$, are both non-negative and continuous (and thus Lebesgue-measurable) as compositions of continuous functions. The inequality \eqref{eq:eM} follows from a change of variables argument similar to the one in \eqref{eq:p40} and \eqref{helpabcd}. Combining \eqref{eq:p50} and \eqref{eq:p55}, we finally get \eqref{eq:p43} with 
\begin{equation}\label{stab_const}
C:= \max\big\{2\VV \nabla \eta\VV_1, 4\pi\VV \eta\VV_1\big\}.
\end{equation}
This completes the proof.
\end{proof}

\section*{Acknowledgments}
The authors would like to thank P. Grohs, S. Mallat, R. Alaifari, M. Tschannen, and G. Kutyniok for helpful discussions and comments on the paper.

\newpage
\vspace{-1cm}
\begin{IEEEbiographynophoto}{Thomas Wiatowski} was born in Strzelce Opolskie, Poland, on December 20, 1987, and received the BSc and MSc degrees, both in Mathematics, from the Technical University of Munich, Germany, in 2010 and 2012, respectively. In 2012 he was a  researcher with the Institute of Computational Biology at the Helmholtz Zentrum in Munich, Germany. He joined ETH Zurich in 2013, where he graduated  with the Dr.\ sc.\ degree in 2017. His research interests are in deep machine learning, mathematical signal processing, and applied harmonic analysis.
\end{IEEEbiographynophoto}

\vspace{-9cm}
\begin{IEEEbiographynophoto}{Helmut B\"olcskei} was born in M\"odling, Austria, on May 29, 1970, and received the Dipl.-Ing.\ and Dr.\ techn.\ degrees in electrical engineering from Vienna University of Technology, Vienna, Austria, in 1994 and 1997, respectively. In 1998 he was with Vienna University of Technology. From 1999 to 2001 he was a postdoctoral researcher in the Information Systems Laboratory, Department of Electrical Engineering, and in the Department of Statistics, Stanford University, Stanford, CA. He was in the founding team of Iospan Wireless Inc., a Silicon Valley-based startup company (acquired by Intel Corporation in 2002) specialized in multiple-input multiple-output (MIMO) wireless systems for high-speed Internet access, and was a co-founder of Celestrius AG, Zurich, Switzerland. From 2001 to 2002 he was an Assistant Professor of Electrical Engineering at the University of Illinois at Urbana-Champaign. He has been with ETH Zurich since 2002, where he is a Professor of Electrical Engineering. He was a visiting researcher at Philips Research Laboratories Eindhoven, The Netherlands, ENST Paris, France, and the Heinrich Hertz Institute Berlin, Germany. His research interests are in information theory, mathematical signal processing, machine learning, and statistics.

He received the 2001 IEEE Signal Processing Society Young Author Best Paper Award, the 2006 IEEE Communications Society Leonard G. Abraham Best Paper Award, the 2010 Vodafone Innovations Award, the ETH ``Golden Owl'' Teaching Award, is a Fellow of the IEEE, a 2011 EURASIP Fellow, was a Distinguished Lecturer (2013-2014) of the IEEE Information Theory Society, an Erwin Schr\"odinger Fellow (1999-2001) of the Austrian National Science Foundation (FWF), was included in the 2014 Thomson Reuters List of Highly Cited Researchers in Computer Science, and is the 2016 Padovani Lecturer of the IEEE Information Theory Society. He served as an associate editor of the IEEE Transactions on Information Theory, the IEEE Transactions on Signal Processing, the IEEE Transactions on Wireless Communications, and the EURASIP Journal on Applied Signal Processing. He was editor-in-chief of the IEEE Transactions on Information Theory during the period 2010-2013. He served on the editorial board of the IEEE Signal Processing Magazine  and is currently on the editorial boards of ``Foundations and Trends in Networking'' and ``Foundations and Trends in Communications and Information Theory''.  He was TPC co-chair of the 2008 IEEE International Symposium on Information Theory and the 2016 IEEE Information Theory Workshop and serves on the Board of Governors of the IEEE Information Theory Society. He has been a delegate of the president of ETH Zurich for faculty appointments since 2008.
\end{IEEEbiographynophoto}

\end{document}